\newcommand{\declarecolor}[2]{\definecolor{#1}{RGB}{#2}\expandafter\newcommand\csname #1\endcsname[1]{\textcolor{#1}{##1}}}
\newcommand{\declareperson}[1]{\expandafter\newcommand\csname#1\endcsname[1]{\textcolor{Orange}{#1: ##1}}}
\theoremstyle{plain}
\newtheorem{theorem}{Theorem}[section]
\newtheorem{lemma}[theorem]{Lemma}
\newtheorem{proposition}[theorem]{Proposition}
\newtheorem{fact}[theorem]{Fact}
\newtheorem{conjecture}[theorem]{Conjecture}
\newtheorem{assumption}[theorem]{Assumption}
\newtheorem{definition}[theorem]{Definition}
\theoremstyle{remark}
\newtheorem{remark}[theorem]{Remark}
\newlist{parts}{enumerate}{10}
\setlist[parts]{label=\arabic*.,ref=\arabic*}
	\crefname{partsi}{Part}{Parts}
	\crefname{partsi}{part}{parts}
\Crefname{partsi}{Part}{Parts}
\newcommand{\old}[1]{}
\newcommand{\st}{\text{s.t.}}
\renewcommand{\R}{\ensuremath{\mathbb R}}
\renewcommand{\P}[1]{{\mathbb{P}}\left[#1\right]}
\renewcommand{\PP}[2]{{\mathbb{P}}_{#1}\left[#2\right]}
\renewcommand{\E}[1]{{\mathbb{E}}\left[#1\right]}
\renewcommand{\EE}[2]{{\mathbb{E}}_{#1}\left[#2\right]}
\renewcommand{\path}[2]{{ S_{#1}, \ldots, S_{#2} }}
\def\b1{{\bf 1}}
\def\1{{\bf 1}}
\def\bz{{\bf z}}
\def\cB{{\cal B}}
\def\eps{{\epsilon}}
\def\cost{c}
\def\setminus{-}
\def\R{\mathbb{R}}
\title{An Improved Approximation Algorithm for TSP \\
in  the Half Integral Case}
\author{Anna R. Karlin}
\affil{\small University of Washington\\ \textsf{karlin@cs.washington.edu}}
\author{Nathan Klein} 
\affil{\small University of Washington\\ \textsf{nwklein@cs.washinton.edu}}
\author{Shayan Oveis Gharan} 
\affil{\small University of Washington\\ \textsf{shayan@cs.washington.edu}}
\begin{document}

\maketitle
\begin{abstract}
We design a $1.49993$-approximation algorithm for the metric traveling salesperson problem (TSP) for instances in which an optimal solution to the subtour linear programming relaxation is half-integral. 
These instances received significant attention over the last decade due to a conjecture  of Schalekamp, Williamson and van Zuylen stating that half-integral LP solutions have the largest integrality gap over all fractional solutions. So, if the conjecture of Schalekamp et al. holds true, our result shows that the integrality gap of the subtour polytope is bounded away from $3/2$.
\end{abstract}

\newpage

\section{Introduction} \label{sec:intro}

In an instance of the traveling salesperson problem (TSP) we are given a set of $n$ cities along with their pairwise symmetric distances. The goal is to find a Hamiltonian cycle of minimum cost. In the metric TSP problem, which we study here, the distances satisfy the triangle inequality. Therefore the problem is equivalent to finding a closed Eulerian connected walk of minimum cost. It is NP-hard to approximate TSP with a factor better than $\frac{185}{184}$ \cite{Lam12}. A classical algorithm of Christofides~\cite{Chr76} from 1976 gives a $\frac32$-approximation algorithm for TSP and remains the best known approximation algorithm for the general version of the problem despite significant work~\cite{Wol80,SW90,BP91,Goe95,CV00,GLS05,BM10,BC11,SWV12}. 

Polynomial-time approximation schemes (PTAS) have been found for Euclidean \cite{Aro96,Mitchell99}, planar \cite{GKP95, AGKKW98, Kle05},  and low-genus metric \cite{DHM07} instances. 
The case of graph metrics has received significant attention. In 2011, the third author, Saberi, and Singh~\cite{OSS11} found a $\frac{3}{2} - \epsilon_0$ approximation for this case. M\"omke and Svensson \cite{MS11} then
obtained a combinatorial algorithm for graphic TSP with an approximation ratio of 1.461. This approximation ratio was later improved by Mucha \cite{Muc12} to $\frac{13}{9} \approx 1.444$, and by Seb\"o and Vygen \cite{SV12} to $1.4$.

In this paper we study metric TSP for instances in which an optimal solution to the subtour linear programming relaxation is half-integral, i.e., when the optimal solution $x$  satisfies $x_e\in \{0,1/2,1\}$ for all edges $e$.
These instances are conjectured to be ``the hardest'' instances of TSP by Schalekamp, Williamson and van Zuylen.
\begin{conjecture}[\cite{SWvZ13}]\label{conj:SWZ}
The integrality gap for the subtour LP is attained on half-integral vertices of the polytope.
\end{conjecture}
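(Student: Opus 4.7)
The plan is to show that the supremum integrality gap of the subtour LP, taken over all metric TSP instances, is attained (in the limit) by instances whose optimal LP solutions are half-integral. The natural approach is to start with an arbitrary instance $(V,c)$ whose optimal LP solution $x^*$ is a vertex of the subtour polytope with integrality gap $\rho = \OPT(V,c)/c^\top x^*$, and to construct from it an instance $(V',c')$ with a half-integral optimal LP solution whose integrality gap is at least $\rho$.

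A concrete version of this plan: write the entries of $x^*$ over a common denominator $q$, let $F = \{e : x^*_e \notin \{0, 1/2, 1\}\}$ be the non-half-integral support, and attempt to eliminate $F$ via a sequence of local gadget replacements. For each edge $e = uv \in F$, I would replace $e$ by a small auxiliary subgraph $H_e$ (for instance, by subdividing $e$ through a path of fresh vertices with carefully chosen edge costs, or by grafting a small $3$-edge-connected bundle with $e$'s cost distributed across it), and then reoptimize the LP. The goal is to choose $H_e$ so that (i) every LP-optimal solution on the new instance uses the new edges with values in $\{0, 1/2, 1\}$ while matching the cut-coverage of $x^*$ across $uv$, (ii) the LP objective decreases in a controlled way that can be tracked as we iterate, and (iii) any Hamiltonian cycle on $V'$ projects to a Hamiltonian cycle on $V$ of at most the corresponding cost. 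After finitely many such steps one arrives at a half-integral instance with integrality gap at least $\rho$, and taking $\rho$ to the supremum gives the claim.

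The main obstacle, and the reason this conjecture has resisted natural attempts, is that steps (i) and (iii) are in direct tension. A gadget $H_e$ rich enough to force half-integrality on its new edges must create many tight cut constraints across itself, and such a gadget typically also offers cheap shortcutting opportunities to the integer Hamiltonian cycle, driving $\OPT(V',c')$ below $\rho \cdot c'^\top x'$. Relating $\OPT(V',c')$ back to $\OPT(V,c)$ thus appears to require a global argument, not a local one: one needs to control, across all of $F$ simultaneously, how routing through the gadgets can be exploited by a tour. I suspect that a successful proof would either need a structural characterization of vertices of the subtour polytope in terms of a laminar family of tight cuts and a careful accounting of how those cuts constrain the integer optimum, or else a non-constructive argument that perturbs $c$ in a worst-case direction and uses semicontinuity of $\OPT/\mathrm{LP}$ to land at a half-integral vertex without ever explicitly modifying the instance.
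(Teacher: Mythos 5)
This statement is an open \emph{conjecture}, attributed to Schalekamp, Williamson and van Zuylen, and the paper does not prove it (nor claims to). It is cited as motivation: the paper's main theorem gives a below-$3/2$ approximation for half-integral instances, which would bound the integrality gap below $3/2$ \emph{conditionally} on this conjecture being true. So there is no paper proof to compare your attempt against.

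Your proposal is also not a proof. You outline a gadget-replacement strategy to convert an arbitrary vertex of the subtour polytope into a half-integral one while preserving the integrality ratio, and then you yourself identify the obstruction: the requirements that the gadget force half-integrality (hence create many tight cuts) and that it not give the integer tour cheap shortcuts are in tension, and you have no mechanism for resolving that tension. The text after that is speculation about what a successful proof might need, not an argument. Concretely, the step that fails is (iii): you give no construction of $H_e$ for which $\OPT(V',c') \geq \rho \cdot c'^\top x'$ is established, and without that the iteration does not preserve the gap. As it stands this is a discussion of why the conjecture is hard, which is fine as exposition but should not be presented as a proof attempt; the correct conclusion is that the conjecture remains open.
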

The above conjecture is motivated by the fact that the worst  known  integrality gap examples of TSP (and TSP-path) are half-integral. Furthermore, as shown in \cite{SWvZ13} the worst case ratio of 2-matchings to optimal solution of the subtour-LP is attained by half-integral instances.
Very little progress has been made on half integral instances even though they have been a subject of study for decades, \cite{CR98, BC11, BS17,  HNR17, HN19}.


Our main result is the following theorem:
\begin{theorem}\label{thm:main} There is a randomized polynomial time algorithm which when given any half-integral fractional solution $x$ of the subtour LP produces a tour with expected cost at most $1.49993$ times the cost of $x$. 
\end{theorem}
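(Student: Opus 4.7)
The plan is to follow the Christofides framework in its ``best-of-many'' form, but with a carefully chosen random spanning tree distribution and an improved bound on the expected cost of the parity-correcting matching. Let $x$ be the given half-integral LP solution. Since $x$ lies in the subtour polytope, $\tfrac{n-1}{n} x$ lies in the spanning tree polytope, so we can sample a spanning tree $T$ from a distribution $\mu$ whose marginals are (essentially) $x_e$. Output the multiset $T \cup J$, where $J$ is a minimum-cost $O_T$-join on the odd-degree vertex set $O_T$ of $T$, and shortcut to a Hamiltonian tour. The expected cost of $T$ is at most $c(x)$, so the entire theorem reduces to showing that $\E{c(J)} \le (\tfrac{1}{2} - \eps)\, c(x)$ for an absolute constant $\eps > 0$ large enough to give the claimed $1.49993$.

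The first step is to pick $\mu$ to be a maximum-entropy distribution over spanning trees with marginals matching $x$. Such distributions are strongly Rayleigh, which yields strong negative-correlation and stochastic-covering properties that the rest of the argument will depend on. The second step is to use the Edmonds $T$-join polyhedron: to bound $\E{c(J)}$ it suffices to exhibit, for each sampled $T$, a nonnegative vector $y_T$ with $y_T(\delta(S)) \ge 1$ on every $O_T$-odd cut $S$, and to bound $\E{c \cdot y_T}$. The naive choice $y_T = x/2$ is always feasible (since $x(\delta(S)) \ge 2$ on every proper cut) and recovers the $\tfrac{3}{2}$-approximation.

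To beat $\tfrac32$, I would take $y_T = (1-\alpha)(x/2) + s_T$ for a small constant $\alpha>0$ and a carefully designed nonnegative slack $s_T$ concentrated on an $O(1)$-edge neighborhood in the support of $x$. The $x_e=1$ edges form disjoint paths and lie in every tree, so the analysis really concerns the $x_e = 1/2$ edges. For each ``good event'' at a vertex or a tight cut (for instance, a $1/2$-edge $e=uv$ whose endpoints already have even degree in $T \setminus e$, or a near-minimum cut whose $O_T$-parity is even), one uses $s_T$ to reroute $y_T$ off $e$ or off a short alternating path, gaining a fixed cost savings. Feasibility must be checked cut-by-cut, which is manageable because every cut with $x(\delta(S))$ strictly greater than $2$ already has slack $\ge (1-\alpha)/2$, so only \emph{near-minimum} cuts of $x$ impose real constraints, and these have a rigid half-integral structure.

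The main obstacle is Step~3: designing $s_T$ so that three things hold simultaneously. (i) \emph{Feasibility} on every $O_T$-odd cut, which reduces to a parity analysis on the lattice of near-minimum cuts of the half-integral $x$; the half-integrality gives these cuts a limited combinatorial shape (chains and cycles of $1/2$-edges between ``atoms'' of $1$-edges), but the case analysis across these shapes is delicate. (ii) \emph{Quantitative savings}: one must show that in expectation a constant fraction of the candidate good events actually occur, which is where the strong Rayleigh properties of $\mu$ are essential --- negative association converts the per-vertex parity probabilities into a global lower bound on the number of good events, and the stochastic-covering property lets us compare the degree of $T$ inside tight cuts to a simple product distribution. (iii) Aggregating the per-event savings with their $c$-costs and checking that the total is a fixed positive fraction of $c(x)$ after subtracting the $\alpha c(x)/2$ cost incurred by scaling $x/2$ down to $(1-\alpha) x/2$; this last balance fixes the exact value of $\eps$ and hence the $1.49993$ in the theorem.
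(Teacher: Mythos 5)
Your proposal correctly identifies the framework the paper uses (max-entropy / strongly Rayleigh spanning-tree sampling, the Edmonds--Johnson $O$-join polyhedron, reduction to bounding $\E{c(J)}$ against $c(x)/2$, and that only near-minimum cuts of $x$ impose binding constraints on $y_T$), but it stops precisely where the paper's actual contribution begins. You explicitly flag ``the main obstacle is Step~3'' and then enumerate three things that would have to be made to hold simultaneously without giving an argument for any of them. That is not a proof; it is a statement of the problem. The theorem is exactly the assertion that this obstacle can be overcome, and the number $1.49993$ is an output of the overcoming.

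The concrete gap is the absence of the amortized, hierarchical charging argument. Your sketch of $s_T$ as a local ``rerouting off a short alternating path'' for each good event is essentially the local approach of Oveis Gharan--Saberi--Singh, and the paper explains in its overview why a purely local argument does not suffice when LP cost is concentrated on edges lying in many near-minimum cuts. What the paper actually does, and what your proposal does not contain, is: (1) build the hierarchy of critical cuts from the cactus representation of the minimum cuts of the half-integral support graph, distinguishing \emph{degree cuts} from \emph{cycle cuts} and classifying edges as \emph{top} or \emph{bottom}; (2) define ``even at last'' relative to an edge's two highest (last) cuts in this hierarchy, and prove (Lemmas 5.1--5.3) that every minimum cut contains at least one good edge, with quantitative probabilities $p\geq 1/27$; (3) construct the $O$-join vector $y$ by initializing at $1/4$, reducing $y_e$ when $e$ is even at last, and charging the deficit of any odd lower cut to the good edges on it that do not go higher in the hierarchy, then prove via a case analysis on the cactus structure (Lemmas 6.2--6.3) that the expected reduction strictly exceeds the expected compensation. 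You also sample a plain spanning tree rather than the paper's $1$-tree $T^- \cup \{e^+\}$; this seemingly cosmetic change matters, because the extra edge $e^+$ is what makes the per-critical-set sampling independent (Facts 2.10 and 3.16), and that independence is used in essentially every probability estimate. Without these ingredients your ``balance fixes the exact value of $\eps$'' step has nothing to balance, and the $1.49993$ cannot be derived.
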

So, if \cref{conj:SWZ} holds affirmatively, the above theorem implies that the integrality gap of the subtour-LP is at most $1.49993$.
Our result also has a direct consequence to the minimum cost 2-edge connected subgraph problem. About 20 years ago, Carr and Ravi \cite{CR98} showed that the integrality gap of the half-integral LP solutions of the min cost 2-edge connected subgraph problem is at most $4/3$. But, to the best of our knowledge, no polynomial time algorithm with an approximate factor better than $3/2$ is known. Our theorem also implies a $1.49993$ approximation algorithm for half-integral LP solutions of the minimum cost 2-edge connected subgraph problem.


\subsection{Overview of algorithm}
There are two well known lines of attack to metric TSP: (i) Start from an optimal Eulerian subgraph and make it connected by adding new edges while preserving the parity of the degrees, or (ii) Start with an optimal connected subgraph, then correct the parities of vertex degrees by adding the minimum cost Eulerian augmentation. 

Here, we take the second approach.
It turns out that in  approach (ii) the minimum cost Eulerian augmentation of any connected graph  is simply the min cost matching on odd degree vertices which can be computed in polynomial time. So, the main question is how to choose a spanning tree of cost at most OPT such that the cost of the minimum Eulerian augmentation is bounded away from $\text{OPT}/2$. Here we follow the approach initiated in \cite{OSS11}.
We sample a random spanning tree that does not cost more than the optimum in expectation. More precisely, we sample   from the maximum entropy distribution of spanning trees with marginals equal to the given LP solution, $x$, and then add the minimum cost matching on odd degree vertices. 

It was conjectured in \cite{OSS11} that this algorithm beats Christofides for general metric TSP, but the authors could only justify a variant of this conjecture for graph metrics. 
 To bound the approximation factor for graph metrics, \cite{OSS11} showed that this random spanning tree
``locally'' looks like a Hamiltonian cycle, e.g., each vertex has degree 2 with constant probability, and, except in some special cases, each pair of vertices  have degree 2 simultaneously with a constant probability.
Roughly speaking, the analysis of \cite{OSS11} is  ``local'' in the sense that it shows that there is a set $F$ of edges with $x(F)\geq \Omega(n)$, such that each $e\in F$ is only contained in a constant number of ``local'' (near) min cuts and all these (near) min cuts have even number of edges in the random spanning tree with constant probability.

Such a method  provably fails for the problem on general metrics since most of the cost of the LP may be concentrated on edges which show up in many (near) min cuts. So, one needs a more ``global'' analysis technique.
In this paper we take the first step towards a  global {\em amortized} analysis.
The hard instances of TSP are those where the cost of the LP is dominated by the edges which show up in many (near) min cuts; in this case, there is no hope to show that all such cuts are even simultaneously in a random spanning tree with constant probability. Our high level framework is to build  a hierarchy over edges. Roughly speaking, a more ``global'' edge shows up higher in the hierarchy. When an edge $e$ is even in its highest cuts in the hierarchy, we gain from $e$ at the expense of using descendants of $e$ in the hierarchy to pay for lower cuts of $e$ which may be odd. We then show that the amount $e$ gains when it reduces exceeds by a constant factor the amount it may have to pay to fix cuts containing edges going higher in the hierarchy.

Putting this together, we show that a variant of the max entropy sampling algorithm beats Christofides if the underlying LP solution is half-integral.
 We expect that many of our techniques can be generalized to apply to LP solutions that are not half-integral, however the most difficult barrier to overcome seems to be that the structure of near minimum cuts (all cuts in the LP of value within 2 and $2+\epsilon$ for a fixed $\epsilon > 0$) is more complex than the structure of minimum cuts.

\section{Our Algorithm} \label{sec:prelim}
Before we discuss our algorithm, we need a few definitions and tools.
Where $V$ is the set of vertices, let $c:V\times V\to\R_+$ denote the cost of going from $u$ to $v$ for any $u,v\in V$.
For a graph $G=(V,E)$ and a set $S\subseteq V$, we write 
\begin{eqnarray*}E(S):=\{\{u,v\}\in E: u,v\in S\},\\
\delta(S):=\{\{u,v\}\in E: u\in S,v\notin S\}.	
\end{eqnarray*}
For a vector $x:E\to\R$, and a set $F\subseteq E$, we write $x(F)=\sum_{e\in F} x_e$.
For a graph $G=(V,E)$ and $S\subseteq V$, we write $G\smallsetminus S$ to be the graph where $S$ and all edges incident to $S$ are removed, and we write $G/S$ to denote the graph in which $S$ is contracted. For two sets $S,T$, we write $S\smallsetminus T$ to denote the set difference.

\subsection{Held-Karp Relaxation}
\label{sec:HK}
The following linear program was first formulated by Dantzig, Fulkerson and Johnson \cite{DFJ54} and is
known as the subtour elimination polytope or the Held-Karp LP relaxation (see also \cite{HK70}).
\begin{equation}
\begin{aligned}
\min & \sum\limits_{u,v} \cost(u,v) x_{\{u,v\}} \\
  \st &  \sum_{u \in S, v \in \overline{S}} x_{\{u,v\}}  \ge 2 & \forall \, S\subsetneq V\\
  & \sum_{v \in V} x_{\{u,v\}}  = 2~\hspace{6ex} & \forall \, u\in V\\
 &   x_{\{u,v\}}\geq  0  &  \forall \, u,v \in V.
\end{aligned}
\label{eq:tsplp}
\end{equation}

\begin{assumption}
Throughout the paper, we assume that we are given a feasible half-integral solution of the Held-Karp LP, that is, for each $\{u,v\}$, $x_{\{u,v\}} \in \{0, 0.5, 1\}$.
\end{assumption}

\begin{remark}
We will often talk about the support graph $G=(V,E)$ of $x$, replacing any edge of value 1 with two parallel edges.  Therefore the number of edges crossing any minimum cut is 4 (corresponding to fractional value 2),
and the graph is Eulerian. \textbf{Henceforth, any reference to the graph $G$ refers to this support graph.}
\end{remark}

In our algorithm we will repeatedly consider subsets $S$ of vertices such that $\delta (S)$ is a min-cut of $x$.


\subsubsection{Spanning Tree Polytope}
For any graph $G=(V,E)$,
Edmonds \cite{Edm70} gave the following description for the convex hull of spanning tree of a graph $\tilde G$, known as the {\em spanning tree polytope}.
\begin{equation}
\begin{aligned}
& z(E) = n-1 & \\
& z(E(S)) \leq |S|-1 &  \forall S\subseteq V\\
& z_e \geq 0 & \hspace{6ex} \forall e\in E.
\end{aligned}
\label{eq:spanningtreelp}
\end{equation}
Edmonds \cite{Edm70} proved that the extreme point solutions of this polytope are the characteristic vectors of the spanning trees of $G$.

We formally define tight sets in \cref{subsec:algorithm} but for now assume $S$ is tight if $x(\delta(S))=2$. In the half-integral case this corresponds to $|\delta(S)|=4$.
\begin{fact} If in the fractional solution $x$ of  the Held-Karp LP,
  the set $S$ is a tight set, then the restriction of $x$ to edges in $E(S)$, that is, the fractional solution $\{x_e\}_{e\in E (S)}$,
is in the spanning tree polytope on $(S, E(S))$. 
\end{fact}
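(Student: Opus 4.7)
The plan is to verify the three defining inequalities of the spanning tree polytope on $(S, E(S))$ for the restriction $\{x_e\}_{e \in E(S)}$. Nonnegativity is immediate from feasibility for \eqref{eq:tsplp}. The two remaining facts to prove are $x(E(S)) = |S| - 1$ and $x(E(T)) \le |T|-1$ for every $T \subseteq S$.

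The key identity I would use is the handshake-type relation
\[
2\, x(E(T)) + x(\delta(T)) \;=\; \sum_{u \in T} \sum_{v \in V} x_{\{u,v\}} \;=\; 2|T|,
\]
which holds for every $T \subseteq V$ by the degree constraints in \eqref{eq:tsplp}. Rearranging yields $x(E(T)) = |T| - \tfrac{1}{2}x(\delta(T))$, so the problem reduces to reasoning about $x(\delta(T))$.

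For the equality $x(E(S)) = |S|-1$, I would plug $T = S$ into the identity and use the hypothesis that $S$ is tight, i.e.\ $x(\delta(S)) = 2$, to get $x(E(S)) = |S| - 1$. For the subset inequality, I would take an arbitrary $T \subseteq S$ (handling $T = V$ trivially if it arises, since then $\delta(T) = \emptyset$ and the identity still gives $x(E(V)) = |V|$, but this case is excluded once $S \ne V$) and use the cut constraint $x(\delta(T)) \ge 2$ from \eqref{eq:tsplp}, giving $x(E(T)) = |T| - \tfrac{1}{2} x(\delta(T)) \le |T| - 1$. Combining these three verifications with Edmonds's description \eqref{eq:spanningtreelp} of the spanning tree polytope on the graph $(S, E(S))$ completes the proof.

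I do not expect a real obstacle here: the statement is a direct consequence of the degree and subtour-elimination constraints, and the tightness of $S$ is used only at the single step that upgrades the corresponding inequality to an equality. The only mild care needed is to note that the argument uses the degree constraints (not only the cut constraints), and to handle the endpoint $T = S$ separately from $T \subsetneq S$ since the former needs $x(\delta(S)) = 2$ (tightness) whereas the latter needs only $x(\delta(T)) \ge 2$ (feasibility).
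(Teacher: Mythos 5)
Your proof is correct and follows essentially the same route as the paper's: both derive $x(E(S)) = |S|-1$ from the degree constraints together with tightness of $S$, and both obtain $x(E(T)) \leq |T|-1$ for $T \subsetneq S$ from the degree constraints together with the subtour cut constraint $x(\delta(T)) \geq 2$. You simply spell out the handshake identity $2x(E(T)) + x(\delta(T)) = 2|T|$ that the paper leaves implicit.
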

\begin{proof}
Since every min-cut has  fractional value 2, and every vertex has fractional degree 2 in any Held-Karp  solution, we have
$$\sum_{e \in E(S)} x_e = \frac{2|S|-2}{2} = |S|-1.$$
 For the same reason, the constraint \eqref{eq:spanningtreelp} of the spanning tree polytope holds for each $S' \subset S$.
 \end{proof}

It follows immediately that:

\begin{fact} \label{fact:expcostT}If $S$ is a tight set w.r.t. $x$,  then
a random spanning tree on $S$ selected from a distribution whose marginals are $x_e$ for each $e\in E(S)$ has expected cost $\sum_{e\in E(S)} x_e c_e$.
 \end{fact}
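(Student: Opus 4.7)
The plan is to invoke the previous fact together with Edmonds' characterization of the spanning tree polytope, and then finish by a one-line linearity-of-expectation calculation. There is essentially no obstacle here — the work has already been done in setting up that $\{x_e\}_{e \in E(S)}$ lies in the spanning tree polytope on $(S, E(S))$.

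First I would recall that, by the fact immediately preceding the statement, whenever $S$ is tight with respect to $x$, the vector $(x_e)_{e \in E(S)}$ satisfies the constraints \eqref{eq:spanningtreelp} defining the spanning tree polytope on $(S, E(S))$. By Edmonds' theorem (which the excerpt has just cited), the extreme points of that polytope are precisely the characteristic vectors $\mathbf{1}_T$ of spanning trees $T$ of $(S, E(S))$. Hence there exists a distribution $\mu$ over spanning trees of $(S, E(S))$ which writes $(x_e)_{e \in E(S)}$ as a convex combination of these characteristic vectors, i.e.
\[
x_e \;=\; \sum_{T} \mu(T)\, \mathbf{1}\{e \in T\} \;=\; \PP{T \sim \mu}{e \in T} \qquad \text{for every } e \in E(S).
\]
So a distribution over spanning trees of $(S, E(S))$ with the prescribed marginals exists. (Any other distribution with the same marginals — for instance the maximum entropy one used later in the algorithm — would work identically for the expected-cost computation below.)

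Next I would compute the expected cost by linearity of expectation. For $T \sim \mu$,
\[
\E{c(T)} \;=\; \E{\sum_{e \in E(S)} c_e\, \mathbf{1}\{e \in T\}} \;=\; \sum_{e \in E(S)} c_e \, \P{e \in T} \;=\; \sum_{e \in E(S)} c_e\, x_e,
\]
which is exactly the claimed equality. Since the argument only uses the marginals of the distribution, it applies to any distribution over spanning trees of $(S, E(S))$ with edge marginals $x_e$, which is what the fact asserts.
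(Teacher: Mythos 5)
Your argument is correct and matches the paper's (the paper simply states that the claim ``follows immediately'' from the preceding fact): you invoke the previous fact to place $(x_e)_{e\in E(S)}$ in the spanning tree polytope on $(S,E(S))$ and then conclude by linearity of expectation. The extra paragraph establishing existence of a distribution with the given marginals via Carath\'eodory/Edmonds is harmless but not needed, since the fact already hypothesizes that such a distribution is given.
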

 
 \subsection{Maximum entropy distribution}
 We say a distribution $\mu$ over spanning trees is 
{\em $\lambda$-uniform or maximum entropy} if there are nonnegative
weights $\lambda: E \rightarrow  \R_+$ such that for any tree $T$,  
$$ \P{T} \propto \prod_{e \in T} \lambda_e.$$

Given a point $z$ in the spanning tree polytope, for a connected graph $G = (V, E)$, Asadpour et al. \cite{AGMOS17} show that there is an efficient algorithm
that
finds non-negative $\lambda_e$'s in a  such a way that for every edge
$e \in E$ and tree $T$ sampled from $\mu$, $\P{e \in T}$ is
(approximately) equal to $z_e$.

To sample from a distribution on spanning trees, we follow \cite{AGMOS17,OSS11} and 
 sample spanning trees using a distribution

\begin{theorem}[\cite{AGMOS17}]
\label{thm:maxentropycomp}
Let $\bz$ be a point in the spanning tree polytope of the graph $\tilde G=(\tilde V, \tilde E)$.
For any $0 < \eps$, values $\lambda_e$ for all $e\in \tilde E$ can be found such
that
the corresponding $\lambda$-uniform spanning tree distribution, $\mu$, satisfies
$$
\sum_{T\in {\cal T}: T \ni e} \PP{\mu}{T}  \leq (1+\varepsilon)z_e,\hspace{3ex}\forall e\in E,$$
i.e., the marginals are approximately preserved.  Furthermore, the running
time is polynomial in $n=|\tilde V|$, $- \log \min_{e\in E} z_e$ and $\log(1/\eps)$.
\end{theorem}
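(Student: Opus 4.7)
The plan is to set this up as a convex optimization problem in the dual space and solve it using the ellipsoid method. First, I would consider the underlying primal problem: over distributions $\mu$ on spanning trees satisfying the marginal constraints $\sum_{T \ni e}\mu(T) = z_e$ for every $e \in \tilde E$, maximize the entropy $H(\mu) = -\sum_T \mu(T)\log \mu(T)$. Since $z$ lies in the (relative interior of the) spanning tree polytope, this primal is feasible and strongly concave, and by standard Lagrangian duality the unique optimum has the product form $\mu^{\ast}(T) \propto \prod_{e \in T}\lambda_e$ with $\lambda_e = e^{\gamma_e}$, where $\gamma_e$ are the dual multipliers associated with the marginal constraints. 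So proving the theorem reduces to computing such $\gamma_e$'s to sufficient accuracy.

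The second step is to write down and attack the dual program explicitly. It takes the form $\min_{\gamma \in \R^{\tilde E}} f(\gamma)$ where
\begin{equation*}
f(\gamma) = \log Z(\gamma) - \sum_{e \in \tilde E} \gamma_e z_e, \qquad Z(\gamma) = \sum_{T} \prod_{e \in T} e^{\gamma_e}.
\end{equation*}
Convexity of $f$ is a direct consequence of $\log Z$ being the log-partition function of an exponential family. Two facts make this tractable: by Kirchhoff's matrix-tree theorem, $Z(\gamma)$ equals (up to a known factor) a principal minor of the weighted Laplacian $L(\lambda)$, hence is computable in polynomial time; and the gradient satisfies $\partial f/\partial \gamma_e = \PP{\mu_\gamma}{e \in T} - z_e$, which can also be evaluated in polynomial time via the effective resistance formula $\PP{\mu_\gamma}{e \in T} = \lambda_e \cdot \reff_\gamma(e)$ obtained from a ratio of two such determinants. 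This yields both a separation/first-order oracle for ellipsoid (or for a cutting-plane method) and a stopping criterion: once every coordinate of $\nabla f(\gamma)$ is at most $\varepsilon z_e$ in absolute value, the resulting $\lambda$ gives marginals within a $(1+\varepsilon)$ factor of the target.

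The third step is to bound the search region and the number of iterations. I would show that there exists a near-optimal $\gamma^\ast$ with $\|\gamma^\ast\|_\infty$ bounded by a polynomial in $n$ and $\log(1/\min_e z_e)$; intuitively, edges with very small $z_e$ need large negative $\gamma_e$, which is the source of the $-\log\min_e z_e$ term in the running time. Combined with a crude upper bound on the range of $f$ over this box, applying the ellipsoid method (or a standard first-order convex method with appropriate smoothness control on $\log Z$) converges to an $\varepsilon'$-approximate minimizer in a number of iterations polynomial in $n$, $\log(1/\varepsilon)$, and $-\log\min_e z_e$, each iteration costing a Laplacian determinant evaluation. Finally, one argues that turning the $\varepsilon'$-approximate dual optimum into an $\varepsilon'$-approximate primal marginal only loses a constant in the approximation, so choosing $\varepsilon' = \Theta(\varepsilon/n)$ yields the claimed $(1+\varepsilon)$ guarantee.

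The hard part, in my view, is not any single piece but the careful bookkeeping in the last step: proving that the dual optimum (or rather, an \emph{approximate} dual optimum) actually lives in a polynomially bounded box, and quantifying how dual objective error translates into marginal error. Naively, if some $z_e$ is exponentially small, the corresponding $\lambda_e$ and $\gamma_e$ can be extreme, so one must either reparameterize (e.g., working relative to a feasible point) or exploit the structure of the spanning tree polytope (strong base-exchange / matroid properties) to show that a truncated search still suffices. Everything else — matrix-tree computation, convexity, ellipsoid convergence — is fairly routine once the boundedness lemma is in hand.
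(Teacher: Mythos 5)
This theorem is cited in the paper from Asadpour et al.\ \cite{AGMOS17} and is not proved there, so there is no in-paper proof to compare against. Your proposal essentially reconstructs the argument of \cite{AGMOS17}: max-entropy duality forces the product form $\mu^*(T)\propto\prod_{e\in T}\lambda_e$, the matrix-tree theorem gives polynomial-time evaluation of the log-partition function and its gradient (the marginals), and a cutting-plane or first-order method drives the gradient to near zero; the technical crux is, as you say, a polynomial bound on the dual variables $\gamma_e$ in terms of $-\log\min_e z_e$, which you correctly identify as the part requiring care. Two minor remarks: (a) the theorem as stated only needs the one-sided bound $\P{e\in T}\le(1+\eps)z_e$, and because $\sum_e\P{e\in T}=n-1=\sum_e z_e$ identically for spanning trees, the two-sided control you aim for is strictly stronger than what is asserted and not actually necessary for the stopping criterion; and (b) the claim that $\eps'=\Theta(\eps/n)$ suffices to convert dual suboptimality into multiplicative marginal error is left unjustified and is exactly where the boundedness lemma is consumed --- without it the step from ``small dual gap'' to ``small per-edge marginal error'' does not follow. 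These are gaps in rigor rather than in the overall strategy, which matches the cited proof.
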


We can now briefly explain the main algorithm of \cite{OSS11}: Given a feasible solution $x$ of subtour-LP, define $z=(1-1/n)x$. Then, sample $T$ from a $\lambda$-uniform distribution with marginals $z$ and add a min-cost matching on the odd degree vertices of $T$.


\subsection{Min-cost Eulerian augmentation}
\label{sec:Ojoin}
Once we have sampled a tree (or, as we shall see later, a tree plus an edge), we will be finding the minimum cost Eulerian augmentation. For this purpose, we use  the following characterization of the $O$-join polytope due to Edmonds and Johnson \cite{EJ73}.
\begin{proposition}
\label{prop:tjoin}
For any graph $G=(V,E)$, cost function $c: E \to \R_+$, and a set $O\subseteq V$ with an even number of vertices,  the minimum weight of an $O$-join equals the optimum value of the following integral linear program.
\begin{equation}
\begin{aligned}
\min \hspace{4ex} & \cost(y) \\
\st \hspace{3ex} & y(\delta(S)) \geq 1 & \forall S \subseteq V, |S\cap  O| \text{ odd}\\
& y_e \geq 0 & \forall e\in E
\end{aligned}
\label{eq:tjoinlp}
\end{equation}
\end{proposition}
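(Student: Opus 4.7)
The plan is to prove the two inequalities separately.

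\textbf{Easy direction ($\text{LP opt} \leq$ min-cost $O$-join).} For any $O$-join $J$, I would verify that $y = \mathbf{1}_J$ is feasible. Fix $S \subseteq V$ with $|S \cap O|$ odd; applying the handshake identity to the subgraph $(V, J)$ restricted to $S$ gives
\[
\sum_{v \in S} \deg_J(v) \;=\; 2\,|J \cap E(S)| + |J \cap \delta(S)|.
\]
Since $\deg_J(v)$ is odd iff $v \in O$, the left side is $\equiv |S \cap O| \equiv 1 \pmod 2$, so $|J \cap \delta(S)|$ is odd and thus at least $1$. Hence the incidence vector of the cheapest $O$-join is LP-feasible, giving one inequality.

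\textbf{Hard direction (LP integrality).} For the reverse inequality, I would invoke LP duality and an uncrossing argument. The dual LP reads
\[
\max \sum_{S:\, |S \cap O|\text{ odd}} z_S \quad\text{s.t.}\quad \sum_{S:\, e \in \delta(S)} z_S \leq c_e\ \ \forall e,\ \ z \geq 0.
\]
Take an optimal dual $z^*$ and uncross its support. If two sets $S, T$ in the support cross, the parity identity $|S \cap O| + |T \cap O| = |(S \cap T) \cap O| + |(S \cup T) \cap O|$ shows that either both $S \cap T, S \cup T$ are $O$-cuts (when $|(S \cap T) \cap O|$ is odd), or both $S \setminus T, T \setminus S$ are (otherwise). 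Shifting $\varepsilon$ of dual mass from $(S, T)$ to whichever pair is $O$-valid leaves the dual objective unchanged (two sets each contribute $\varepsilon$), while dual feasibility is preserved because a case analysis on the endpoint configuration of an edge $e$ shows $|\delta(S) \cap \{e\}| + |\delta(T) \cap \{e\}|$ dominates the corresponding sum for either uncrossed pair. A standard potential function (e.g.\ $\sum_S z_S |S|\,|V - S|$) guarantees this process terminates at a dual optimum with laminar support $\mathcal L$.

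From the laminar dual I would construct an integer $O$-join $J$ of matching cost by induction on $\mathcal L$: at each inclusion-minimal laminar set $S$, use tight edges within $S$ to pair off the vertices of $S \cap O$, leaving a single unpaired path that exits through $\delta(S)$; recurse on the parent of $S$, treating the unpaired endpoint as a new $O$-vertex, and combine all levels via symmetric differences. Complementary slackness ensures every edge used is tight, so $c(J) = \sum_S z^*_S$ equals the dual optimum and hence the primal optimum.

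\textbf{Main obstacle.} The delicate step is the case-based uncrossing for $O$-cuts: unlike the familiar uncrossing for minimum cuts, one of the two natural uncrossings may destroy the $O$-cut property, and which one succeeds depends on the parity of $|(S \cap T) \cap O|$. One must simultaneously verify that the chosen pair consists of $O$-cuts, that the dual objective is preserved, and that dual feasibility holds for every edge. A secondary subtlety is extracting the integer $O$-join from the laminar family, since the paths chosen at different levels can share edges; the final set $J$ must be produced as a symmetric difference so that its odd-degree vertex set is exactly $O$.
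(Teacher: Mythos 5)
The paper does not prove this proposition; it is quoted verbatim from Edmonds and Johnson \cite{EJ73} and used as a black box. So there is no internal argument to compare against, and the question is whether your from-scratch proof is sound.

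Your easy direction is correct and is the standard parity argument. In the hard direction, your uncrossing step is also essentially right: the identity $|S\cap O|+|T\cap O|=|(S\cap T)\cap O|+|(S\cup T)\cap O|$ does force one of the two uncrossed pairs to consist of $O$-odd sets, the objective $\sum_S z_S$ is clearly preserved, and a direct case check on the four regions $S\cap T$, $S\smallsetminus T$, $T\smallsetminus S$, $\overline{S\cup T}$ confirms that for every edge $e$ the quantity $\mathbbm{1}[e\in\delta(S)]+\mathbbm{1}[e\in\delta(T)]$ dominates both $\mathbbm{1}[e\in\delta(S\cap T)]+\mathbbm{1}[e\in\delta(S\cup T)]$ and $\mathbbm{1}[e\in\delta(S\smallsetminus T)]+\mathbbm{1}[e\in\delta(T\smallsetminus S)]$, so dual feasibility survives. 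Your potential $\sum_S z_S\,|S|\,|V\smallsetminus S|$ does strictly decrease (by $2\varepsilon|S\smallsetminus T||T\smallsetminus S|$ or $2\varepsilon|S\cap T||\overline{S\cup T}|$), though as a \emph{process} this does not immediately terminate; the clean fix is to take an optimal dual that \emph{minimizes} this potential over the optimal face and observe it must already be laminar.

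The genuine gap is the final step, and you correctly flagged it as delicate, but the sketch given does not close it. From a laminar optimal dual you need an integral $O$-join $J$ contained in the tight-edge subgraph with $|J\cap\delta(S)|=1$ for \emph{every} $S$ in the support, since otherwise $c(J)=\sum_{e\in J}\sum_{S:e\in\delta(S)}z^*_S=\sum_S z^*_S\,|J\cap\delta(S)|$ would exceed the dual optimum. Your proposal to ``use tight edges within $S$ to pair off $S\cap O$'' presupposes that the tight edges inside each laminar set are rich enough to realize the required pairings and that the unpaired path exits through $\delta(S)$ exactly once, neither of which follows from complementary slackness alone; complementary slackness constrains only the fractional primal, and extracting an integral point from that constrained polyhedron is precisely the integrality statement you are trying to prove. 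The standard route around this, which the original Edmonds--Johnson proof and most textbook treatments take, is to reduce to weighted perfect matching: build the complete graph on $O$ with shortest-path costs, invoke integrality of Edmonds' matching polytope, and lift a minimum matching back to an $O$-join via shortest paths and symmetric differences. If you want to stay on the dual-uncrossing route you would need a substantially more careful inductive argument (or a separate half-integrality result for $O$-cut packings) to make the extraction rigorous.
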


\subsection{Description of Algorithm}\label{subsec:algorithm}
Our algorithm is a slight modification of the one studied in \cite{OSS11}.
Given a feasible solution $x$ of the subtour-LP, without loss of generality, we assume that there exists an edge $e^+$ such that $x_{e^+}=1$. If such an edge does not exist, we split a node $v$ into two nodes $v_1,v_2$; connect 2 of the edges out of $v$ to $v_1$ and the other two to $v_2$. Then, we connect $v_1$ to $v_2$ with edge $e^+$ of cost $c(e^+)=0$ and $x_{e^+}=1$.

Given such a solution $x$ our algorithm is as follows: Define a fractional spanning tree $z$ where $z_{e^+}=0$ and $z_e=x_e$ for any $e\neq e^+$. Then, we sample $T$ from the $\lambda$-uniform spanning tree distribution $\mu$ with marginals $z$ for some $\epsilon=2^{-n}$ using \cref{thm:maxentropycomp}. Define $T=T\cup\{e^+\}$; this gives a {\bf 1-tree}.
A 1-tree is a union of a spanning tree and an edge. Finally, we add a minimum cost Eulerian augmentation on the odd degree vertices of $T$. Throughout we let $T^-=T\smallsetminus {e^+}$. $T^-$ is an actual spanning tree.


\begin{figure}[htb]
\begin{center}
\begin{tikzpicture}[inner sep=1.7pt,scale=.8,pre/.style={<-,shorten <=2pt,>=stealth,thick}, post/.style={->,shorten >=1pt,>=stealth,thick}]
\draw [rotate=20,line width=1.1](0,0) ellipse (2cm and 1cm);
\draw [rotate=-20,line width=1.2](-1.35,-0.48) ellipse (2cm and 1cm);
\draw  (-3.8, 0.5) node {$X$};
\draw   (2.3, 0.5) node {$Y$};

\tikzstyle{every node} = [draw, circle,fill=red];
\node at (1.1,0.75)   (){};
\node at (0.6,0.3)   (){};
\path (1.3,0.1) node  (){};
\path (-1.9,0.35) node  (){};
\path (-2.5,0.5) node  (){};
\path (-0.5,0.2) node  (){};
\path (-1.2,-.7) node  (){};
\path (0,-0.7) node  (){};
\path (-1,1.5) node  (){};
\end{tikzpicture}
\end{center}
\caption{An example of two crossing sets.}
\label{fig:crossingsets}
\end{figure}

There is an equivalent description of the above algorithm. Before discussing this, we need to define three more concepts.
\begin{definition}
Consider a  graph  $G= (V,E)$ with min-cuts of value $k$.
\begin{itemize}
\item Any set $S\subseteq V$ such that $|\delta (S) |= k$ (i.e., its boundary is a min-cut) is called a \textbf{tight} set.
\item A cut $(S, \overline S)$ is \textbf{proper} if
$|S| \ge 2$ and $|\overline S| \ge 2$. 
\item Two sets $S$ and $S'$ \textbf{cross} if all of $S \smallsetminus S'$,
 $S' \smallsetminus S$, $S \cap S'$ and $V \smallsetminus (S \cup S')$ are non-empty.
\end{itemize}
\end{definition}
See \cref{alg} for an equivalent description of our algorithm, which we will work with throughout the paper. As the equivalence is not fundamental to our proof, we omit the (simple) proof here.
\begin{algorithm}[h]
\caption{Algorithm for half-integral TSP}
\begin{algorithmic}[1]
\State Given a half-integral solution $x$ of the subtour LP, with an edge $e^+$ with $x_{e^+}=1$.
\State Let $G$ be the support graph of $x$. 
\State Set $T = \emptyset$ \Comment{$T$ will be a 1-tree} 
\While {there exists a proper tight set of $G$ that is not crossed (by a tight set)}
\State Let $S$ be a minimal such set such that $e^+\notin E(S)$ \Comment{Note such a set always exists, as $S,\overline{S}$ are both proper tight sets, so one does not have $e^+$. In \cref{fact:e+notindS} we show that  $e^+\notin\delta(S)$.}\label{alg:step:mintightset}
\State Compute the maximum entropy distribution $\mu$ of $E(S)$
\State Sample a tree from $\mu$ and add its edges to $T$ 
\State Set $G = G/S$ \Comment{Note we never contract $e^+$.}
\EndWhile
\State Randomly sample a cycle from $G$ (including $e^+$) and add it to $T$ \Comment{In \cref{fact:lastcycle} we show $G$ itself is a cycle} \label{alg:step:cycle}
\State Compute the minimum O-Join on the odd nodes of $T$. Shortcut it and output the resulting Hamiltonian cycle.
\end{algorithmic}
\label{alg}
\end{algorithm}
 
\begin{fact}\label{fact:e+notindS}
In step \ref{alg:step:mintightset} of the algorithm, we have  $e^+\notin\delta(S)$.	
\end{fact}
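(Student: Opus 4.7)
The plan is to argue by contradiction: suppose $e^+ \in \delta(S)$, and let $u_1 \in S$ and $u_2 \in \overline S$ be the two endpoints of $e^+$. I would then exhibit a tight set that crosses $S$, contradicting the defining property of $S$; the candidate crossing set will just be $\{u_1, u_2\}$. A preliminary observation is that $G$ is $4$-regular throughout the algorithm: every original vertex has fractional $x$-degree $2$, doubling the value-$1$ edges gives each vertex graph-theoretic degree $4$, and contracting a tight set $S'$ produces a new vertex of degree $|\delta(S')| = 4$, so the property is preserved across iterations of the while loop.

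Next, I would show that $\{u_1, u_2\}$ is a proper tight set of $G$. Since $x_{e^+}=1$ and the fractional degree at $u_1$ (respectively $u_2$) is $2$, the two parallel copies of $e^+$ are the only edges of $G$ between $u_1$ and $u_2$. Thus each of $u_1, u_2$ has exactly $4-2 = 2$ other incident edges, all leaving $\{u_1, u_2\}$, which gives $|\delta(\{u_1, u_2\})| = 4$, so the set is tight. It is also proper, because the existence of the proper tight set $S$ already forces $|V(G)| \ge 4$, so $|V(G) \smallsetminus \{u_1, u_2\}| \geq 2$.

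Finally I would verify that $\{u_1, u_2\}$ crosses $S$ by checking each of the four sets in the definition of crossing: $S \cap \{u_1, u_2\} = \{u_1\}$ and $\{u_1, u_2\} \smallsetminus S = \{u_2\}$ are trivially nonempty, while $S \smallsetminus \{u_1, u_2\} = S \smallsetminus \{u_1\}$ is nonempty because $|S| \geq 2$, and $V(G) \smallsetminus (S \cup \{u_1, u_2\}) = \overline S \smallsetminus \{u_2\}$ is nonempty because $|\overline S| \geq 2$. Since $\{u_1, u_2\}$ is a tight set crossing $S$, this contradicts the hypothesis that $S$ is not crossed by any tight set, completing the argument. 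The main (mild) obstacle is really just identifying the right crossing candidate; note that minimality of $S$ is not used — only that $S$ is proper, tight, and uncrossed.
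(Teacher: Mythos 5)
Your proof is correct and follows essentially the same route as the paper: assume $e^+=\{u,v\}\in\delta(S)$, note that $\{u,v\}$ is a tight set (since $x_{e^+}=1$), and observe that it crosses $S$ because $S$ is proper, contradicting the selection of $S$. You have simply unpacked the paper's terse verification of tightness and crossing into explicit steps; the underlying argument is identical, including your closing remark that minimality of $S$ plays no role.
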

\begin{proof}
Say $e^+\in\delta(S)$, and let $e^+=\{u,v\}$. Then, since $x_{e^+}=1$, $\{u,v\}$ is a tight set. It also crosses $S$ (as $S$ is a proper set). That is a contradiction.
\end{proof}

A few remarks are in order.
By \cref{fact:expcostT}, $\E{c(T)}=c(x)$ (up to an error of $2^{-n}$). Therefore, to prove \cref{thm:main} all we need to do is to bound the expected cost of the $O$-join by $c(x)(1/2-\eps_0)$ for some $\eps_0>0$.
Also, crucial to our analysis are the independence properties we get from our algorithm, see the following and \cref{fact:efinout}.
\begin{fact}\label{fact:independencecritical}
Any tree chosen from a max-entropy distribution corresponding to a proper tight set $S$ which is not crossed is independent of all other edges of $T$ that we choose in different iterations of the while loop in our algorithm.
\end{fact}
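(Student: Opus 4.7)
The claim is essentially a bookkeeping statement about the algorithm, so my plan is to expose the structural reasons why the samples produced by different iterations of the while loop are independent: (i) the combinatorial choices of the algorithm are deterministic functions of the input $x$, (ii) each iteration draws from its own distribution using fresh randomness, and (iii) contraction forces the edge sets on which successive iterations act to be pairwise disjoint.

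First I would argue that the sequence of sets $S_1,S_2,\ldots,S_k$ chosen by the while loop is determined by $x$ alone (up to any fixed tie-breaking convention). The selection rule in line~\ref{alg:step:mintightset} — take a minimal proper uncrossed tight set $S$ with $e^+\notin E(S)$ — refers only to the combinatorial structure of the current contracted graph $G_i$, and the contraction $G_{i+1}=G_i/S_i$ in the next line is again a deterministic operation. Inductively, $(S_1,\ldots,S_k)$ and the corresponding graphs $(G_1,\ldots,G_k)$ are functions of $x$ only; no previously sampled tree ever influences which set is processed next.

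Second, I would observe that the max-entropy distribution $\mu_i$ constructed in line~6, computed on $E(S_i)$ in the graph $G_i$ with marginals inherited from $x$, is likewise a deterministic function of $x$. The sampling call in line~7 then invokes \cref{thm:maxentropycomp} with fresh random bits independent across iterations, so the tree samples $T_1,\ldots,T_k$ drawn from $\mu_1,\ldots,\mu_k$ are mutually independent as random variables.

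Third, to translate this into the statement about "edges of $T$", I would note that once $S_i$ is contracted, every edge in $E(S_i)$ is removed from the working graph. Hence for any later iteration $j>i$, the edges sampled in iteration $j$, viewed in the original graph, form a subset of $E(S_j)\setminus\bigcup_{\ell<j,\,S_\ell\subseteq S_j} E(S_\ell)$ and are in particular disjoint from $E(S_i)$. Thus the contributions to $T$ across iterations are supported on pairwise disjoint subsets of $E$, and combined with the independence of the underlying random bits, this yields the claimed independence of the tree sampled in iteration $i$ from all edges of $T$ chosen in any other iteration.

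I do not anticipate a real obstacle; the only point that merits explicit verification is the assertion in the first paragraph that the choice of $S$ in line~\ref{alg:step:mintightset} never depends on past samples, which is immediate from inspecting the pseudocode.
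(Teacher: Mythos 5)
Your proof is correct and fills in the reasoning that the paper leaves implicit (the paper states \cref{fact:independencecritical} without proof, treating it as immediate from the algorithm's construction). Your three observations — that the sequence of contracted sets is a deterministic function of $x$, that each iteration's sample uses fresh randomness from a distribution that also depends only on $x$, and that the edge sets acted upon in distinct iterations are pairwise disjoint because contraction removes $E(S_i)$ from the working graph — are precisely the structural facts needed, and your deduction from them is sound.
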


\section{Tools}
In this section we state two main tools that we use in our analysis, namely the cactus representation of minimum cuts and strongly Rayleigh probability distributions. We will conclude this section by giving an overview of our proof.
\subsection{Min-cuts and the cactus}\label{subsec:mincuts}

To understand our algorithm and analysis it is useful to recall the {cactus representation~\cite{FF09}} of the min-cuts of a graph.  We briefly recall some basic definitions and the recursive construction of the cactus. 
We will rely on a number of basic facts about min-cuts. {For proofs, see~\cite{FF09}}. Suppose $G$ is a $k$-edge connected graph.
\begin{fact}
\label{fact:min-cuts1}
 If two tight sets $S$ and $S'$ cross, then each of $S \smallsetminus S'$,
 $S' \smallsetminus S$, $S \cap S'$ and $\overline{S \cup S'}$ are tight. Moreover, there are no edges from $S \smallsetminus S'$ to
 $S' \smallsetminus S$, and there are no edges from 
 $S \cap S'$ to $\overline{S \cup S'}$.
 
 Therefore, if two distinct tight sets $S$ and $S'$ cross each other, then $\delta (S) \cap \delta (S') = \emptyset$.
 \end{fact}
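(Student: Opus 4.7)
The plan is to use the standard submodular inequalities for the cut function $|\delta(\cdot)|$ together with the fact that $G$ is $k$-edge connected, where $k = |\delta(S)| = |\delta(S')|$.

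First I would recall the two submodular (or ``posimodular'') inequalities for the cut function on a graph:
\begin{equation*}
|\delta(S)| + |\delta(S')| \;\geq\; |\delta(S \cap S')| + |\delta(S \cup S')|,
\qquad
|\delta(S)| + |\delta(S')| \;\geq\; |\delta(S \smallsetminus S')| + |\delta(S' \smallsetminus S)|.
\end{equation*}
Both can be verified by a short case analysis over which of the four ``regions'' $S \cap S'$, $S \smallsetminus S'$, $S' \smallsetminus S$, $\overline{S \cup S'}$ each endpoint of a given edge lies in: every edge contributes at least as much to the left-hand side as to the right-hand side.

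Next, because $S$ and $S'$ cross, all four of the sets $S\cap S'$, $S\smallsetminus S'$, $S'\smallsetminus S$, $\overline{S\cup S'}$ are non-empty proper subsets of $V$, so by $k$-edge connectivity each has cut value at least $k$. Summing the right-hand sides of both displayed inequalities gives at least $4k$, while summing the left-hand sides gives exactly $4k$ by tightness of $S$ and $S'$. Thus equality must hold in both inequalities, which immediately yields that each of $S\cap S'$, $S\cup S'$, $S\smallsetminus S'$, $S'\smallsetminus S$ is tight (and by complementation so is $\overline{S\cup S'}$).

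The ``no edges'' claims come from tracking where the slack in the submodular inequalities comes from. An edge with one endpoint in $S\smallsetminus S'$ and the other in $S'\smallsetminus S$ contributes $2$ to $|\delta(S)| + |\delta(S')|$ but $0$ to $|\delta(S\cap S')|+|\delta(S\cup S')|$; since the latter inequality is tight, no such edge can exist. By the symmetric argument applied to the second inequality (where now such an edge would be the culprit), there are also no edges between $S\cap S'$ and $\overline{S\cup S'}$. Finally, an edge lying in both $\delta(S)$ and $\delta(S')$ must have its two endpoints either (i) one in $S\cap S'$ and one in $\overline{S\cup S'}$, or (ii) one in $S\smallsetminus S'$ and one in $S'\smallsetminus S$; we have just ruled out both possibilities, so $\delta(S)\cap\delta(S')=\emptyset$. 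There is no real obstacle here; the only thing to be careful about is to invoke $k$-edge connectivity only after verifying that the four ``quadrant'' sets are non-empty, which is exactly the definition of crossing.
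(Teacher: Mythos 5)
Your proof is correct, and it is the standard submodular uncrossing argument. The paper does not prove this fact itself --- it simply cites \cite{FF09} for it --- but your argument (pairing the submodular inequality $|\delta(S)|+|\delta(S')|\ge|\delta(S\cap S')|+|\delta(S\cup S')|$ with the posimodular inequality $|\delta(S)|+|\delta(S')|\ge|\delta(S\smallsetminus S')|+|\delta(S'\smallsetminus S)|$, forcing equality via $k$-edge connectivity of the four non-empty quadrants, and then reading off the ``no edges'' claims from which edge types create slack) is exactly the classical derivation found in that reference.
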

 The following fact is especially useful to us, since the support graph of $x$ is 4-edge-connected.
 \begin{fact}
 \label{fact:min-cuts2}
Suppose that every proper mincut is crossed by some other proper mincut. Then $k$ is even and $G$ is a cycle, with $k/2$ parallel edges between each adjacent pair of vertices.
\end{fact}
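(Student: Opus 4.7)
The plan is to exploit the cactus representation of the minimum cuts of $G$, which encodes every minimum cut of $G$ either as removing a single bridge or as removing two edges from a single cycle of the cactus; two such cactus cuts correspond to crossing tight sets exactly when their edges interleave on a shared cycle, and in particular cactus cuts drawn from different blocks of the cactus never cross. The core step I would carry out is an \emph{uncrossability lemma}: any mincut arising from (i) removing a bridge of the cactus, or (ii) removing the two cycle-edges incident to one atom $v$ of a cactus cycle $C$, is not crossed by any other mincut of $G$. Case (i) follows because every other cactus cut lies entirely on one side of the bridge. Case (ii) follows because the two chosen edges at $v$ are adjacent in $C$, so no other pair of edges in $C$ interleaves with them, while cuts from other blocks of the cactus live on one side of $v$ in the block-cut structure.

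Applying the hypothesis that every proper mincut is crossed, the uncrossability lemma forces every such bridge cut and every ``cut at an atom in a cycle'' to correspond to a non-proper mincut of $G$---one side of the induced bipartition of $V$ contains at most one vertex. Combined with the paper's Eulerian setup where every vertex of $G$ has degree $k$, every singleton $\{u\}$ is itself a mincut and must be realized in the cactus as a valid two-partition. Realizability of singleton mincuts forces (a) each atom to be a single $G$-vertex, since otherwise two vertices in one atom would never be separated by a mincut, and (b) no atom to be a cut-vertex of the cactus, since otherwise cutting all cactus edges incident to that atom would not yield a valid two-component cut. Together with the non-properness constraints from uncrossability, (b) rules out bridges and multi-block cacti entirely, so the cactus must be a single cycle whose $n$ atoms are the vertices $v_1, \ldots, v_n$ of $G$ in cyclic order.

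To finish, I would compute directly in $G$: the mincut separating the pair $\{v_i, v_{i+1}\}$ has value $k = \deg(v_i) + \deg(v_{i+1}) - 2 e_{i,i+1} = 2k - 2 e_{i,i+1}$, where $e_{i,i+1}$ is the number of $G$-edges between $v_i$ and $v_{i+1}$; this forces $e_{i,i+1} = k/2$ and hence $k$ even. Applying the same identity to arcs of length three or more yields $e_{i,j} = 0$ whenever $v_i, v_j$ are not cactus-adjacent, so $G$ is precisely the cycle with $k/2$ parallel edges between consecutive vertices. The main obstacle is the uncrossability lemma in case (ii): one must verify carefully that ``cutting at an atom'' in a cactus cycle $C$ is really uncrossable, both against other cuts drawn from $C$ itself (no pair of edges can interleave with two adjacent edges) and against cuts drawn from other blocks incident to $v$ (each such cut lies wholly on the $v$-side of the bipartition). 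Once this lemma is in hand, the remainder of the argument is routine bookkeeping.
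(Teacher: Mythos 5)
The paper gives no proof here: it cites Fleiner--Frank~\cite{FF09}, where this statement appears as a lemma used to \emph{establish} the cactus representation theorem. That is precisely what makes your route problematic. The standard proof of the Dinits--Karzanov--Lomonosov theorem is by recursion on uncrossed proper min-cuts, and its base case---a graph in which every proper min-cut is crossed---is handled by exactly the statement you are trying to prove. Invoking the cactus representation to derive \cref{fact:min-cuts2} is therefore (at least in the development the paper cites) circular. Beyond circularity, two details of your argument are shaky: you appeal to the paper's special setting (every vertex of $G$ has degree $k$, so every singleton is a min-cut realized in the cactus), whereas the fact is stated for an arbitrary $k$-edge-connected graph; and you never rule out \emph{empty} atoms of the cactus, which genuinely occur in general and for which your steps (a) and (b) say nothing.

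The intended proof is direct, from \cref{fact:min-cuts1} alone. Take a proper min-cut $S$ and a proper min-cut $S'$ crossing it. Uncrossing gives four tight quadrants $S\cap S'$, $S\smallsetminus S'$, $S'\smallsetminus S$, $\overline{S\cup S'}$ with no edges across either diagonal; writing $a,b,c,d$ for the number of edges between cyclically consecutive quadrants and using that each quadrant and each of $S,S'$ is a min-cut gives $a=b=c=d=k/2$, so $k$ is even and the quadrants form a $4$-cycle. Now recurse: any quadrant $Q$ with $|Q|\ge 2$ is a proper min-cut, hence crossed by some proper min-cut $S''$; uncrossing $Q$ against $S''$ refines the cyclic partition into a longer cycle while preserving the ``$k/2$ edges between consecutive parts, none otherwise'' structure. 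The process terminates when every part is a singleton, which is exactly the stated conclusion. Proving it this way preserves the logical order of the paper, where \cref{fact:min-cuts2} precedes and supports the cactus theorem rather than relying on it.
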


\begin{fact}\label{fact:lastcycle}
In step \ref{alg:step:cycle} of the algorithm the remaining graph $G$ is a cycle of length at least 3 such that there are exactly two parallel edges between each pair of consecutive vertices.
\end{fact}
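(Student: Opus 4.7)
The plan is to apply Fact~\ref{fact:min-cuts2} to the contracted graph $G$ present at step~\ref{alg:step:cycle}, and to separately verify that this cycle has length at least $3$. For the first part, the while loop of Algorithm~\ref{alg} exits precisely when every proper tight set of the current $G$ is crossed by some tight set. Since $x$ is half-integral, the Remark tells us that the support graph $G$ is Eulerian and $4$-edge-connected, and both properties are preserved under contractions of tight sets (contractions can only shrink the set of feasible cuts, and a cut of the contracted graph in which $S$ lies on one side corresponds to a cut of the same size in the original graph). In particular the minimum cut value satisfies $k = 4$ throughout the loop. Invoking Fact~\ref{fact:min-cuts2} with $k = 4$ then immediately yields that the final $G$ is a cycle with $k/2 = 2$ parallel edges between each adjacent pair of vertices. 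In the degenerate regime where no proper tight set exists (which forces $|V| \le 3$), the hypothesis of Fact~\ref{fact:min-cuts2} is vacuously satisfied and the same conclusion holds.

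It remains to show that $|V| \ge 3$ when we reach step~\ref{alg:step:cycle}. I would prove this by induction on the iteration count. Initially $|V| = n \ge 3$ (the TSP instance is otherwise trivial). At any iteration we contract a proper tight set $S$, which by definition satisfies $|S| \ge 2$ and $|\overline S| \ge 2$, hence $|S| \le |V| - 2$. After the contraction, the new number of vertices is $|V| - |S| + 1 \ge 3$, completing the induction. The main (and really only) obstacle is the bookkeeping around the degenerate cases $|V| \in \{2,3\}$ where Fact~\ref{fact:min-cuts2}'s hypothesis is vacuous: the induction rules out $|V| = 2$ entirely, and for $|V| = 3$ the Eulerian and $4$-regular conditions directly force the triangle structure with two parallel edges, since writing $a, b, c$ for the edge multiplicities between the three vertices, the system $a + b = b + c = a + c = 4$ has unique nonnegative solution $a = b = c = 2$. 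Combining the two steps gives that at step~\ref{alg:step:cycle} the graph $G$ is a cycle on at least $3$ vertices with exactly two parallel edges between each consecutive pair, as claimed.
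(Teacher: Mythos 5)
Your overall plan is the same as the paper's (exit the while loop, apply \cref{fact:min-cuts2}, handle the small cases by hand), and most of the bookkeeping is done correctly: the induction showing $|V|\ge 3$ is valid, the explicit verification of the $|V|=3$ triangle case is fine, and the observation that $4$-edge-connectivity and the Eulerian property are preserved under contraction of a tight set is correct. However, there is a genuine gap in the parenthetical assertion that ``no proper tight set exists (which forces $|V|\le 3$).'' This implication is false for arbitrary $4$-regular, $4$-edge-connected graphs: $K_5$ is a perfectly good half-integral Held--Karp support graph, has $|V|=5$, and has no proper min-cut at all (every cut of a $2$- or $3$-vertex set has value $\ge 6$). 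And the issue is not cosmetic: if there are no proper min-cuts, \cref{fact:min-cuts2} would be invoked vacuously, and its conclusion simply does not follow (again $K_5$ is the counterexample), so you cannot ``invoke'' the fact and ``the same conclusion holds.''

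What saves the day --- and what the paper's proof actually uses --- is the distinguished edge $e^+$. By construction $e^+=\{u,v\}$ has $x_{e^+}=1$ and is never contracted, so $\{u,v\}$ (with $u,v$ interpreted as their current contracted representatives) is always a tight set, and it is a \emph{proper} tight set once $|V|\ge 4$. Since the while loop has terminated, this proper tight set must be crossed, which both guarantees that the hypothesis of \cref{fact:min-cuts2} is satisfied non-vacuously and rules out configurations such as $K_5$. You should add this observation; with it, the rest of your argument (including the $|V|=3$ base case and the induction ruling out $|V|\le 2$) goes through.
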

\begin{proof}
Let $G$ be the graph which remains after the while loop in the algorithm terminates. By the algorithm, $e^+=\{u,v\}$ is not contracted yet. $G$ has at least 3 vertices, as otherwise in the last of the while we contracted a set $S$ where $e^+\in \delta(S)$ which contradicts \cref{fact:e+notindS}. If $G$ has 3 vertices then it must be a cycle. Otherwise, $\{u,v\}$ is a proper tight set in $G$, and it must be crossed. In this case by \cref{fact:min-cuts2} $G$ is a cycle of length at least $4$.
\end{proof}

\begin{definition}[Cactus Graph] A loopless and 2-edge connected graph $C = (U,F)$
is a {\em cactus} if each edge belongs to exactly one cycle.
\end{definition}

\begin{theorem}[Dinits, Karzanov, Lomonosov] Let $G = (V, E)$ be a loopless graph with min-cut size $k \ge 1$. There is a cactus $C = (U,F)$
and a mapping $\phi: V \rightarrow U$ such that the 2-element cuts of
$C$ are in one to one correspondence with the min-cuts of $G$.
Equivalently, $S$ is at tight set of $G$ if and only if $\phi(X)$ is a tight set
of $C$.
\end{theorem}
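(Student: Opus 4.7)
The plan is to prove this by strong induction on $|V|$, with the case analysis determined by \cref{fact:min-cuts2}. The base case $|V| = 2$ is immediate: $G$ is $k$ parallel edges, and $C$ is the 2-cycle on $\{u_1,u_2\}$ with $\phi$ the natural bijection. For the inductive step the key observation is that \cref{fact:min-cuts2} provides a clean dichotomy: either every proper min-cut of $G$ is crossed by another min-cut, or some proper min-cut is uncrossed.

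In the ``all crossed'' case, \cref{fact:min-cuts2} tells us that $k$ is even and $G$ itself is a cycle on $V = \{v_1, \ldots, v_n\}$ with $k/2$ parallel edges between each consecutive pair. I would take $C$ to be the single undirected cycle of length $n$ on $U = \{u_1, \ldots, u_n\}$ (with one edge between adjacent vertices), which is trivially a cactus (2-edge-connected, with the whole graph being the unique cycle containing every edge), and set $\phi(v_i) = u_i$. The min-cuts of $G$ are exactly the cuts induced by contiguous arcs, which correspond bijectively to the 2-cuts of $C$.

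In the other case, pick a proper min-cut $(A, \overline{A})$ that is not crossed by any other min-cut, and form the contracted graphs $G_1 = G / \overline{A}$ and $G_2 = G / A$, with supernodes $v_A$ and $v_{\overline{A}}$. Both remain loopless with min-cut $k$ and are strictly smaller than $G$, so induction supplies cacti $(C_1, \phi_1)$ and $(C_2, \phi_2)$. I would then build $C$ by identifying $u_1 := \phi_1(v_A) \in U_1$ with $u_2 := \phi_2(v_{\overline{A}}) \in U_2$ into a single glued vertex $u^*$, and define $\phi(v) = \phi_1(v)$ for $v \in A$ and $\phi(v) = \phi_2(v)$ for $v \in \overline{A}$. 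Checking that $C$ is a cactus is routine: gluing two 2-edge-connected graphs at a single cut-vertex preserves 2-edge-connectivity, and any cycle of $C$ must live entirely inside one of $C_1, C_2$ (a cycle cannot traverse a single shared vertex into the other side), so each edge still lies in exactly one cycle.

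The technical heart of the argument is the bijection between min-cuts of $G$ and 2-cuts of $C$. By \cref{fact:min-cuts1} together with the assumption that $(A,\overline{A})$ is uncrossed, every min-cut $S$ of $G$ other than $(A, \overline{A})$ itself satisfies, up to complement, $S \subseteq A$ or $S \subseteq \overline{A}$. In the first case $|\delta_G(S)| = |\delta_{G_1}(S)|$, so $S$ is a min-cut of $G_1$, which by induction corresponds to a 2-cut of $C_1$ and hence of $C$; the symmetric statement handles $S \subseteq \overline{A}$. The main obstacle — and the step that must be executed carefully — is $(A,\overline{A})$ itself: it appears in $G_1$ as the star cut of $v_A$ (hence, by induction, as a 2-cut of $C_1$ that is the pair of edges meeting $u_1$) and in $G_2$ as the star cut of $v_{\overline{A}}$ (the pair of edges meeting $u_2$ in $C_2$). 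After gluing, I must verify that both of these 2-cuts of $C$ induce the same bipartition of $V$ under $\phi^{-1}$, namely $(A, \overline{A})$, so that they represent a single min-cut of $G$ rather than being double-counted, and that no 2-cut of $C$ uses one edge from $C_1$ and one from $C_2$ (which follows because 2-cuts of a cactus lie within a single cycle, and no cycle of $C$ straddles $u^*$). This bookkeeping, combined with the fact that $u^*$ may have empty preimage under $\phi$ (it is pure scaffolding), is what secures the bijection.
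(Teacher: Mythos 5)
The paper states this theorem as a black box, citing \cite{FF09} for proofs, so there is no in-paper argument to compare against; I am therefore assessing your recursive construction on its own merits. The overall shape — induct on $|V|$, split on whether there is an uncrossed proper min-cut, handle the ``all-crossed'' case via \cref{fact:min-cuts2}, and otherwise contract both sides of an uncrossed proper min-cut and combine the two cacti — is the right approach, but two steps in the execution do not go through.

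First, your dichotomy is incomplete: a graph with $|V| \ge 3$ can have \emph{no} proper min-cuts at all, in which case ``every proper min-cut is crossed'' holds vacuously, yet $G$ need not be a cycle. Take $G = K_4$: the min-cut value is $k=3$, every 2-element vertex set has boundary $4 > 3$, so the only min-cuts are the four singleton cuts, and \cref{fact:min-cuts2} (which implicitly assumes a proper min-cut exists) would wrongly force you to conclude $k$ is even and $G$ is a cycle. Since your recursion on $G/\overline{A}$ and $G/A$ always leaves at least $3$ vertices (as $|A|,|\overline{A}|\ge 2$), it never reaches your $|V|=2$ base case and must instead terminate at graphs with no proper min-cuts; this third terminal case needs its own construction (a single cactus node $c$ with one pendant $2$-cycle to a vertex $u_v$ for each degree-$k$ vertex $v$, with $\phi(v)=u_v$ and all other vertices mapped to $c$).

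Second, the gluing step breaks the one-to-one correspondence. In $G_1 = G/\overline{A}$ the supernode $v_A$ has degree $k$, so $\{v_A\}$ is a min-cut, so by the induction hypothesis $\phi_1(v_A)$ has degree $2$ in $C_1$ and the pair of edges at $\phi_1(v_A)$ is a $2$-element cut of $C_1$; symmetrically the pair of edges at $\phi_2(v_{\overline{A}})$ is a $2$-element cut of $C_2$. After identifying these two vertices into $u^*$, both pairs survive as distinct $2$-element cuts of $C$ (no $2$-element cut of a cactus straddles a cut vertex), and since $\phi^{-1}(u^*)=\emptyset$, \emph{both} of them induce the bipartition $(A,\overline{A})$. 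That is not a consistency check that saves the construction — it \emph{is} the double-count. Two distinct $2$-element cuts of $C$ represent the single min-cut $(A,\overline{A})$, so the correspondence fails to be injective. The standard remedy is not to glue at a vertex but to delete $\phi_1(v_A)$ and $\phi_2(v_{\overline{A}})$ (each of degree $2$, each lying on a unique cycle $Z_1$, $Z_2$) and splice the two resulting paths into a single new cycle; this replaces the two offending $2$-element cuts by one and leaves every other $2$-element cut intact, giving the required bijection.
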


Our algorithm can be viewed as essentially constructing a cactus representation of the min-cuts. More precisely, the critical cuts of our algorithm (defined below) are in one to one correspondence with the cycles of the cactus. 



In the rest of this section we will more fully explore the interaction between our algorithm and the structure of the cactus representation of minimum cuts. From now on, assume $G$ is the 4-regular support graph of the half-integral LP solution $x$ and that we have executed our algorithm on $G$. 

\paragraph{Critical sets and cuts:}
A tight set $S$ selected in step \ref{alg:step:mintightset} of the algorithm is called a \textbf{critical set} and the corresponding cut $\delta (S):= E(S, \overline S)$ is called a \textbf{critical cut}.
Vertices of $G$ are degenerate critical sets.

There is a natural {\em hierarchy of critical sets} associated with the execution
of the algorithm. The leaves of the hierarchy are vertices of the original graph.
If $S$ and $S'$ are critical sets such that $S$ or a contracted version of $S$ is a vertex in $S'$, then $S$ is a child  of $S'$ (respectively $S'$ is the parent  of $S$). If $\tilde S$ is an ancestor of $S$ in the hierarchy of critical sets, then we say that $\tilde S$ is a \textbf{higher} critical set
than $S$ (resp. $S$ is a \textbf{lower} critical set than  $\tilde S $). 
For example, in \cref{fig:exectree}, critical set $F$ is the parent of and is higher in the hierarchy than critical sets $A$, $B$ and $C$.

The root of
the hierarchy is the graph $G$ once we get to  step \ref{alg:step:cycle} of the algorithm. 
\begin{definition}[Going higher]\label{def:gohigher}
An edge $e$ in $\delta(S)$ {\em goes higher} if the lowest critical set $S'$ such that $S\subsetneq S'$ satisfies $e\in \delta(S')$.
\end{definition}
Note that by \cref{fact:independencecritical} any edge going higher is independent of all edges which do not.

\definecolor{goodgreen}{rgb}{0.0, 0.5, 0.0}
\definecolor{amethyst}{rgb}{0.6, 0.4, 0.8}
\definecolor{bleudefrance}{rgb}{0.19, 0.55, 0.91}
\definecolor{blizzardblue}{rgb}{0.67, 0.9, 0.93}
\tikzset{
	pics/Graph/.style n args={1}{
	code = {
	
	\ifthenelse{#1>0}{
		\node[state] (u1) {};
        \node[state] (a1) [right=1cm of u1] {};
        \node[state] (a2) [below=0.5cm of a1] {};
        \node[state] (a3) [below=0.5cm of a2] {};
        \path[-] (u1) edge[midway, thick,blue,below] node {$t$} (a1);
		\path[-] (u1) edge[thick, bend right] node {} (a2);
		\path[-] (u1) edge[bend right] node {} (a3);
    }{
        \node[state] (u1) {};
        \node[state] (a1) {};
        \node[state] (a2) {};
        \node[state,label=below:{$A$}] (a3) {};
    }
	
	\ifthenelse{#1>0}{
	\foreach \a\b in {b/1,c/4,d/7} {
		\node[state] (\a0) [below right=0.6cm and {\b cm} of a1] {};
		\node[state] (\a1) [right=1cm of \a0] {};
        		\node[state] (\a2) [below=0.5cm of \a0] {};
        		\node[state] (\a3) [below=0.5cm of \a1] {};
        }
        }{
        \def\offset{2}
        \foreach \a\b\d in {b/1/B,c/4/C,d/7/D} {
        		\foreach \c in {0,1,2} {
        			\node[state] (\a\c) [below right=0.6cm and {\b cm + \offset cm} of a1] {};
        		}
		\node[state,label=below:$\d$] (\a3) [below right=0.6cm and {\b cm + \offset cm} of a1] {};
        }
        }
        
        \ifthenelse{#1>0}{
        \node[state] (e1) [right=11cm of u1] {};
        \node[state] (e2) [below=0.5cm of e1] {};
        \node[state] (e3) [below=0.5cm of e2] {};}{
        \foreach \a in {1,2} {
        		\node[state] (e\a) [right=12.2cm of u1] {};
        }
        \node[state,label=below:$E$] (e3) [right=12.2cm of u1] {};
        }
        
        \node[state] (l1) [right=12.2cm of u1] {};
        
	\foreach \a in {b,c,d} {
		\foreach \b in {1,2,3} {
			\path[-] (\a0) edge node {} (\a\b);
		}
		\path[-] (\a1) edge node {} (\a2);
		\path[-] (\a1) edge node {} (\a3);
		\path[-] (\a2) edge node {} (\a3);
	}
	
	\foreach \a in {a,e} {
		\path[-] (\a1) edge node {} (\a2);
		\path[-] (\a1) edge[bend right] node {} (\a3);
		\path[-] (\a2) edge[black] node {} (\a3);
	}
	
	\path[-] (u1) edge[midway,thick,bend left=20,goodgreen] node {$a$} (l1);
	\path[-] (b1) edge node {} (c0);
	\path[-] (c3) edge[midway,thick,bleudefrance] node {$h$} (d2);
	\path[-] (a2) edge[midway,thick,amethyst] node {$e$} (b0);
	\path[-] (d1) edge node {} (e2);
	
	\path[-] (a1) edge[midway,thick,bend left=10,goodgreen] node {$g$} (d0);
	
	\path[-] (l1) edge node {} (e1);
	\path[-] (l1) edge[bend left] node {} (e2);
	\path[-] (l1) edge[bend left] node {} (e3);
	
	\ifthenelse{#1>0}{
	\path[-] (b3) edge node {} (c2);
	\path[-] (a3) edge[midway,thick,amethyst] node {$f$} (b2);
	\path[-] (d3) edge node {} (e3);	
	\path[-] (c1) edge[midway,thick,bend left=10,bleudefrance] node {$b$} (e1);
	}{
	\path[-] (b3) edge[bend right=10] node {} (c2);
	\path[-] (a3) edge[below,midway,thick,amethyst,bend right=10] node {$f$} (b2);
	\path[-] (d3) edge[bend right=10] node {} (e3);
	\path[-] (c1) edge[below,midway,thick,bend left=10,bleudefrance] node {$b$} (e1);
	}
	}
	}
}


\begin{figure}[htb!]
\centering
\begin{tikzpicture}[
            auto,
            node distance = 2.5cm, 
            semithick 
        ]

        \tikzstyle{every state}=[
            draw = black,
            thick,
            fill = white,
            minimum size = 1.5mm
        ]
	
	\pic at (0,0) {Graph={1}};
	\draw [purple,line width=1.2pt, dashed] (0.7,-0.85) ellipse (1.2 and 1.5);
	\node[purple] at (-1,-0.75) {$A$};
	\foreach \a/\b in {B/3.3, C/6.3, D/9.3} {
	\draw [purple,line width=1.2pt, dashed] (\b+0.1,-1.3) ellipse (1.2 and 1);
	\node[purple] at (\b-1.3,-2) {$\a$};
	}
	\draw [purple,line width=1.2pt, dashed] (11.9,-0.8) ellipse (1.3 and 1.5);
	\node[purple] at (11.9+1.3,-2) {$E$};
	
	\pic at (0,-4) {Graph={0}};
	\draw [purple,line width=1.2pt, dashed] (3,-4.5) ellipse (3.8 and 1.2);
	\node[purple] at (1.7-2.8,-4.5) {$F$};
	
	\draw [purple,line width=1.2pt, dashed] (11,-4.5) ellipse (2.4 and 1);
	\node[purple] at (11+2.8,-4.5) {$G$};
	
	\node[state,label=below:$F$] at (2,-7) (u1) {};
	\node[state,label=below:$G$] at (10.5,-7) (l1) {};
	\path[-] (u1) edge[goodgreen,midway,thick,bend left=15] node {$a$} (l1);
	\path[-] (u1) edge[below,goodgreen,midway,thick,bend left=10] node {$g$} (l1);
	\path[-] (u1) edge[below,midway,bleudefrance,thick,bend right=15] node {$b$} (l1);
	\path[-] (u1) edge[midway,thick,bleudefrance,bend right=10] node {$h$} (l1);

    \end{tikzpicture}
    \caption{Example execution on a half integral graph.
    In the first figure, we visualize five tree operations in parallel, which we may do since all these tight sets have size 4 and are not crossed by other tight sets. Similarly in the second figure we do two operations in parallel. In the final step, a cycle is chosen by picking two edges at random. $A,B,C,D,E$ are all ``degree cuts" whereas $F$ and $G$ are both ``cycle cuts." $t$ is an example top edge (as are all edges picked in the first graph). $a,g$ and $b,h$ are cycle partners with respect to the cut $F$. $e,f$ are companions. 
     } \label{fig:execution}
\end{figure}

\begin{figure}[htb!]
\centering
    \begin{tikzpicture}[level distance=1.5cm,
  level 1/.style={sibling distance=5cm},
  level 2/.style={sibling distance=1.5cm}]
  \node {Pick a cycle}
    child {node {F}
      child {node {A}}
      child {node {B}}
      child {node {C}}
    }
    child {node {G}
    child {node {D}}
      child {node {E}}
    };
\end{tikzpicture}
\caption{An execution tree on this graph.}	
\label{fig:exectree}
\end{figure}



\paragraph{Structure of critical cuts:} Consider a critical set $S$ chosen
in step \ref{alg:step:mintightset} in the algorithm. We will abuse notation and, at any time during the execution of the algorithm, refer to $G$ with vertex set $V$ as the graph remaining at that time, after contraction of all trees that have been sampled before $S$ is considered. Consider the graph $G' := G/ V \smallsetminus S$ and let $w$ be the contracted vertex representing
$V \smallsetminus S$. 
There are two possibilities for the structure of $G'$:
\begin{itemize}
\item Case 1: There are no proper min-cuts inside $S$. 
In this case, we call $\delta (S)$ a \textbf{degree cut}. In \cref{fig:execution}, $A,B,C,D,E$ are all degree cuts.\footnote{ These cuts correspond to cycles of length two in the cactus.}
\item Case 2: There is a proper min-cut $(S_0, \overline S_0)$ such that
$S_0 \subsetneq S$. In this case,
it (and every other proper min-cut inside $S$) is crossed by some other
min-cut (or would be more minimal than $S$). 

It follows that 
in $G'$,  every proper mincut is crossed by some other proper mincut and therefore, by \cref{fact:min-cuts2}, the graph is a cycle with two edges between each pair of adjacent vertices in the cycle.  In this case, we call $\delta(S)$ a {\em cycle cut}. For example, $F$ and $G$ in \cref{fig:execution} are cycle cuts. 

We divide the 4 edges from $w$ into two pairs, such that each pair share an endpoint  inside $S$. We call each such pair {\em cycle partners} with respect to $\delta(S)$.
Every other pair of edges between two adjacent vertices in the cycle are called {\em companions}.

For example, in \cref{fig:execution}, $\delta(F)$ is a cycle cut
and  $a,g$ and $b,h$ are cycle partners with respect to $\delta(F)$. $e$ and $f$ are companions.

Cycle cuts correspond to cycles of length 3 or more in the cactus. 
\end{itemize}
Note that every edge has at most one companion but possibly many partners depending on the underlying cactus.

\begin{definition}[Highest critical cuts]
For a vertex $u$ and an edge $e=\{u,v\}$, let $S_{u,e}$ be the highest critical set $S$ such that $u \in S$ and $v \not \in S$, and let $S_{e}$ be the lowest critical set such that both $S_{u,e}$ and $S_{v,e}$ are (contracted) nodes in $S_{e}$. Then $\delta(S_{u,e})$ and $\delta(S_{v,e})$ are the highest critical cuts containing $e$. 
If the edge $e$ is clear from context, we may drop $e$ in the notation $S_{u,e}$.
\end{definition}

\begin{definition}[Bottom Edge and Top Edges]
For an edge $e$, if $S_{e}$ is a cycle cut, we say that $e$ is a {\em bottom edge} and otherwise it is a {\em top edge}. 
\end{definition}
For example, in \cref{fig:execution}, $e,f,a,g,b,h$ are bottom edges (among the labeled edges) and $t$ is a top edge.
The following fact is immediate:
\begin{fact} 
Companion bottom edges $e,f$ are in or out of $T$ independently of every other edge of $T$.
\label{fact:efinout}
\end{fact}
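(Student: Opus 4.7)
The plan is to exploit the local symmetry at the pair $\{e,f\}$ together with the uniqueness of the maximum entropy distribution. First I would identify where $e$ and $f$ live in the execution of the algorithm. By definition of companion bottom edges, both endpoints of $e$ and $f$ lie inside some cycle cut $S$ with $S_e = S_f = S$, and $e,f$ are the two parallel edges between a consecutive pair of (contracted) children of $S$ in the cycle structure of $G/(V\smallsetminus S)$. Because the algorithm always processes a minimal proper uncrossed tight set, by the time $S$ is reached in the while loop, all child critical sets have been contracted, and the induced graph $G[S]$ is a path on the children with exactly two parallel edges between consecutive vertices.

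Second, I would record a simple combinatorial fact about spanning trees of this path-with-doubled-edges: each of the $k-1$ parallel pairs forms a 2-cycle, so a spanning tree with $k-1$ edges must pick exactly one edge from each of the $k-1$ pairs. In particular, exactly one of $\{e,f\}$ is included in every spanning tree $T_S$ sampled by the max-entropy step on $E(S)$.

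Third, I would carry out the key symmetry argument. The induced graph $G[S]$ is invariant under the transposition $e\leftrightarrow f$, and the prescribed marginals satisfy $x_e = x_f = 1/2$. Swapping $e,f$ in every spanning tree yields another distribution on spanning trees with the same marginals and the same entropy; since the maximum entropy distribution with given marginals is unique, the sampled distribution $\mu_S$ must itself be invariant under $e \leftrightarrow f$. Hence for any event $A$ depending only on edges in $E(S)\smallsetminus\{e,f\}$,
\[
\PP{\mu_S}{A,\ e\in T_S} = \PP{\mu_S}{A,\ f\in T_S}.
\]
Since exactly one of $e,f$ lies in $T_S$, these two probabilities partition $\PP{\mu_S}{A}$, so each equals $\PP{\mu_S}{A}/2 = \PP{\mu_S}{A}\cdot\PP{\mu_S}{e\in T_S}$. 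This shows $\{e\in T_S\}$ (and symmetrically $\{f\in T_S\}$) is independent of every other edge in $E(S)$.

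Finally, I would invoke \cref{fact:independencecritical}: the tree sampled for the critical set $S$ is independent of all edges chosen in other iterations of the while loop, and of the cycle picked in step \ref{alg:step:cycle}. Combining with the previous paragraph yields independence of $\{e\in T\}$ from every edge of $T$ outside $\{e,f\}$, as claimed. The only delicate step is the symmetry-plus-uniqueness argument for $\mu_S$; this is clean once we remember that max-entropy with prescribed marginals is unique, though one should note the $\epsilon=2^{-n}$ approximation error in \cref{thm:maxentropycomp} is negligible and can be folded into the analysis exactly as elsewhere in the paper.
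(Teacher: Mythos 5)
Your proof is correct, and the scaffolding—locate $e,f$ as the two parallel edges of a doubled-path $G[S]$, observe that every spanning tree of $E(S)$ picks exactly one edge from each parallel pair, establish independence within the tree sampled on $S$, then invoke \cref{fact:independencecritical} for cross-iteration independence—is exactly the right structure. The one place where you take a different route is the middle step: you derive independence from the $e\leftrightarrow f$ symmetry of $G[S]$ plus uniqueness of the maximum-entropy distribution. This works, but it is more indirect than necessary and, as you note yourself, technically applies to the \emph{exact} max-entropy distribution rather than the $\lambda$-uniform distribution the algorithm actually samples from \cref{thm:maxentropycomp} (which has marginals only within $\epsilon=2^{-n}$ of $1/2$, so $\lambda_e$ and $\lambda_f$ need not be exactly equal). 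A cleaner and fully robust argument is to observe that \emph{any} $\lambda$-uniform distribution on the spanning trees of a path with doubled edges factors as a product over the parallel pairs: writing the pairs as $\{e_1,f_1\},\dots,\{e_{k-1},f_{k-1}\}$ and a tree as a choice $c_i\in\{e_i,f_i\}$, one has $\PP{\mu_S}{T}\propto\prod_i\lambda_{c_i}$, hence the $c_i$'s are mutually independent regardless of the $\lambda$-values. This gives the desired independence directly and sidesteps both the uniqueness appeal and the approximation caveat. Either way, the claim holds, so the only thing I'd push you on is to replace the symmetry-plus-uniqueness step with the factorization argument so the proof applies verbatim to the distribution the algorithm samples.
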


\paragraph{Min-cuts containing a particular edge:} 
The set of min-cuts an edge $e=(u,v)$ is on are the following:
\begin{enumerate}
\item[(a)] all critical degree cuts $\delta(S)$ such that $e \in \delta(S)$. (This
includes the cuts $(u, V\smallsetminus u)$ and $(v, V\smallsetminus v)$.)
\item[(b)] 
For any set $S$ such that $\delta(S)$ is a critical cycle cut, and
$e$ is either in $S$ or on $\delta(S)$, every  cut of the cycle that includes
the edge $e$ is a min-cut $e$ is on.
\end{enumerate}

\old{
\begin{figure}\centering
\begin{tikzpicture}[scale=0.8,inner sep=1.8]
\tikzstyle{every node}=[draw,fill=red,circle];
\begin{scope}[shift={(-4,0)}]
\foreach \i in {0,...,5}{
\node at (\i*60:1.5) (a_\i) {};
}
\foreach \i/\j in {0/1, 1/2, 2/3, 3/4, 4/5, 5/0}{
\path (a_\i) edge (a_\j);
}
\end{scope}
\begin{scope}[shift={(4,0)}]
\foreach \i in {0,...,5}{
\node at (\i*60:1.5) (a_\i) {};
}
\foreach \i/\j in {0/1, 1/2, 2/3, 3/4, 5/0}{
\path (a_\i) edge (a_\j);
}
\draw [color=blue,dashed,line width=1.3] (a_5)+(-.3,-.3) arc (180:55:1);
\draw [color=blue,dashed,line width=1.3] (a_5)+(-.5,-.3) arc (180:120:3);
\draw [color=blue,dashed,line width=1.3,shift=(a_5)] (a_5)+(-.7,-.3) -- (-.7,3);
\draw [color=blue,dashed,line width=1.3] (a_5)+(-.9,-.3) arc (0:60:3);
\draw [color=blue,dashed,line width=1.3] (a_5)+(-1.1,-.3) arc (0:120:1);
\end{scope}
\end{tikzpicture}
\caption[An Example of a Graph with no Good Edge]{Consider the Hamiltonian cycle shown at the left. In any spanning tree of this graph
all edges are contained in at least one odd minimum cut. The dashed blue arcs in the right shows the odd near minimum cuts for one spanning tree.}
\label{fig:cyclegoodedges}
\end{figure}
}

It is easy to see that each of the above is a min-cut. To see that there are no others, it suffices to observe by induction that whenever a set $S$ is contracted,
we have accounted for all min-cuts in which nodes inside $S$ are partitioned between the two sides of the cut.\old{  this follows immediately from the fact that when a set $S$ is selected, either there are no proper
min-cuts inside $S$, or  the remaining graph with  $V \smallsetminus S$ contracted is a cycle with two edges between each pair of edges (and case (b) above accounts for all such cuts).}

\paragraph{Other facts:} 
We end this part by recording the following basic facts about structure of min cuts, and we will use them throughout our proofs.
\begin{fact} \label{fact:cyclecut} Suppose that $S$ is a critical set.
If some (contracted) vertex $v \in S$ has two edges to $w := V \smallsetminus S$, then $S$ is a cycle cut. 
\end{fact}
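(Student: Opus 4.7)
The plan is to exhibit a proper tight cut strictly inside $S$; by the Case~1/Case~2 dichotomy defining critical cuts, the existence of such a cut immediately places $\delta(S)$ in Case~2 and so classifies it as a cycle cut. The natural candidate is $S_0 := S \smallsetminus \{v\}$.

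First I would verify that $S_0$ is tight via a one-line degree count in the $4$-regular current graph $G$. Of $v$'s four incident edges, two go to $w := V \smallsetminus S$ by hypothesis, so the remaining two land in $S_0$. Of the four edges in $\delta(S)$, two are incident to $v$, so the remaining two join $S_0$ to $V \smallsetminus S$. Summing these two contributions yields $|\delta(S_0)| = 2 + 2 = 4$, so $S_0$ is a minimum cut.

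In the main case $|S| \geq 3$ in the current contracted graph, $S_0$ is moreover proper: $|S_0| = |S| - 1 \geq 2$ and $|V \smallsetminus S_0| = |V \smallsetminus S| + 1 \geq 3$, using that $S$ itself is a proper cut so $|V \smallsetminus S| \geq 2$. Hence $S_0$ is a proper min-cut with $S_0 \subsetneq S$, which by the definition of Case~2 forces $\delta(S)$ to be a cycle cut.

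The main obstacle is the degenerate case $|S| = 2$, where $S_0 = \{v'\}$ is a single vertex and no longer a proper cut, so the dichotomy as stated does not directly apply. I would handle this structurally: writing $S = \{v,v'\}$, the same degree count forces $v'$ to have exactly two parallel edges to $v$ and two to $w$, so $G' := G/(V \smallsetminus S)$ is the three-vertex multigraph with two parallel edges between each pair of vertices --- a $3$-cycle with doubled edges. This is precisely the cycle-cut structure (a length-$3$ cycle of the cactus), so $\delta(S)$ is a cycle cut in this case as well.
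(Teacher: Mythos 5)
Your proof is correct and takes essentially the same route as the paper: a degree count shows $S \smallsetminus \{v\}$ is a tight set, and when it is proper this is exactly the Case~2 condition in the classification of critical cuts, so $\delta(S)$ is a cycle cut. The only difference is cosmetic: you spell out the $|S|=2$ case structurally (doubled triangle), whereas the paper dismisses it as immediate.
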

\begin{proof} This is immediate if $|S|=2$, so suppose that $|S|> 2$.
Then
$v$ has two edges to $w$, which has two edges to $S \smallsetminus v$ which has two edges to $w$. Since $S \smallsetminus v$ is therefore a proper min-cut but was not selected in step \ref{alg:step:mintightset}, it must be crossed by some other set, which, by the earlier discussion of the structure of critical cuts , means that $\delta(S)$  is a cycle cut.
\end{proof}

\begin{fact} \label{fact:2cutinter} Suppose that $S$  and $S'$ are two distinct tight sets. 
 Then $|\delta (S) \cap \delta (S')| \le 2$.
 \end{fact}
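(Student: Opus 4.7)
My plan is to partition $V$ into the four (possibly empty) parts
\[
A = S \cap S', \quad B = S \smallsetminus S', \quad C = S' \smallsetminus S, \quad D = V \smallsetminus (S \cup S'),
\]
and write $e(X,Y)$ for the number of edges of $G$ with one endpoint in $X$ and the other in $Y$. A direct edge-counting gives the identities
\[
|\delta(S)| = e(A,C)+e(A,D)+e(B,C)+e(B,D) = 4,
\]
\[
|\delta(S')| = e(A,B)+e(A,D)+e(B,C)+e(C,D) = 4,
\]
and, since an edge belongs to both $\delta(S)$ and $\delta(S')$ exactly when its endpoints lie either one in $A$ and one in $D$, or one in $B$ and one in $C$,
\[
|\delta(S)\cap \delta(S')| = e(A,D)+e(B,C).
\]
The statement thus reduces to bounding this last sum by $2$.

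If $S$ and $S'$ cross, then all four parts $A,B,C,D$ are nonempty, and \cref{fact:min-cuts1} already gives $\delta(S)\cap \delta(S')=\emptyset$, so the bound is immediate. Otherwise at least one of $A,B,C,D$ is empty, and I would then invoke the fact that $G$ is $4$-edge-connected (every proper cut has at least $4$ edges) together with the two size identities above.

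The key observation is that whichever of the four parts is empty kills one of the two terms $e(A,D)$, $e(B,C)$, while an appropriate ``complementary'' set becomes a proper cut of size $\ge 4$. For instance, if $A=\emptyset$ and $D\ne\emptyset$, then $e(A,D)=0$, the set $B\cup C$ is a proper subset of $V$ so $|\delta(B\cup C)| = e(B,D)+e(C,D)\ge 4$, and adding the two size identities yields
\[
8 \;=\; 2e(B,C)+e(B,D)+e(C,D) \;\ge\; 2e(B,C)+4,
\]
i.e., $e(B,C)\le 2$. The symmetric case $D=\emptyset$, $A\ne \emptyset$ uses $|\delta(A)|\ge 4$ in exactly the same way. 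If instead $B=\emptyset$ (so $S\subsetneq S'$), then $e(B,C)=0$ and since $C=S'\smallsetminus S$ is a proper nonempty set, $|\delta(C)|=e(A,C)+e(C,D)\ge 4$; summing the two identities again gives $8 = 2e(A,D)+e(A,C)+e(C,D)\ge 2e(A,D)+4$, hence $e(A,D)\le 2$. The case $C=\emptyset$ (i.e., $S'\subsetneq S$) is symmetric. The only configuration left out is $A=D=\emptyset$, which forces $S'=V\smallsetminus S$ and $\delta(S)=\delta(S')$; this is the implicit exception for ``distinct tight sets,'' since the bound cannot otherwise hold.

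I do not expect any genuinely hard step: once the four-part decomposition and the two size identities are written down, the bound falls out mechanically from $4$-edge-connectivity of $G$. The only ``obstacle'' is the bookkeeping of which parts can vanish and making sure each nondegenerate subcase contributes a proper cut of size at least $4$.
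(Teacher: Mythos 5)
Your proof is correct and takes essentially the same route as the paper's, just more systematically. The paper argues by contradiction: if $|\delta(S)\cap\delta(S')|\ge 3$, then by \cref{fact:min-cuts1} the sets don't cross, so WLOG $S\subset S'$, and then $\delta(S'\smallsetminus S)$ would have at most two edges, contradicting $4$-edge-connectivity. Your four-part decomposition, the two size identities, and the observation that the ``middle'' part ($C=S'\smallsetminus S$ in the nested case) is a proper cut of size $\ge 4$ is exactly this argument, written additively rather than as a contradiction; your cases $A=\emptyset$ and $D=\emptyset$ are the complemented versions that the paper compresses into ``say, $S\subset S'$.'' One genuine thing you add: you explicitly flag the degenerate configuration $A=D=\emptyset$, i.e.\ $S'=\overline{S}$, where $\delta(S)=\delta(S')$ and the claimed bound fails. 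The paper's phrasing ``Therefore it must be that, say, $S\subset S'$'' silently excludes this case; in the paper's actual applications $S$ and $S'$ come from a laminar family of critical sets (so complements never arise), but the fact as stated needs that exclusion, and you were right to notice it.
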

 \begin{proof}
 By contradiction. Suppose that $S$ and $S'$ are both proper min-cuts and have
 $\delta (S) \cap \delta (S') \ge 3 $. Then by \cref{fact:min-cuts1}, they do not cross. 
 Therefore it must be that, say, $S \subset S'$. But in this case, if $\delta (S) \cap \delta (S') \ge 3$, since $\delta(S)$ and $\delta(S')$ are both min-cuts, there can only be one edge from $S$ to $S'\smallsetminus S$ and at most one edge from $ S'\smallsetminus S$ to $V \smallsetminus (S \cup S')$ which contradicts $\delta (S' \smallsetminus S) \ge  4$.
 \end{proof}

\begin{fact} 
\label{fact:2cycle} Suppose that $S$  and $S'$ are two critical sets such that $S\subset S'$.
Then if $\delta (S) \cap \delta (S') = 2$, then $S'$ is a cycle cut.
\end{fact}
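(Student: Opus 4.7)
The plan is to reduce to \cref{fact:cyclecut}, which says that a critical set $S'$ is a cycle cut whenever some (contracted) vertex inside $S'$ has two edges to $V\smallsetminus S'$ in the graph at the time $S'$ is processed. So the task is to exhibit such a contracted vertex inside $S'$.

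First I would establish that critical sets form a laminar family. Since each critical set is chosen in step~\ref{alg:step:mintightset} as a minimal proper tight set that is uncrossed by any tight set, by the time a critical set is selected any critical set chosen earlier has already been contracted; hence any two critical sets are either nested or disjoint. Applying this to the two critical sets $S\subsetneq S'$, there is a unique child $C$ of $S'$ in the hierarchy with $S\subseteq C$ (taking $C=S$ when $S$ is itself a child of $S'$).

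Next I would show that $|\delta(C)\cap\delta(S')|=2$. The lower bound follows from the containment $\delta(S)\cap\delta(S')\subseteq\delta(C)\cap\delta(S')$, which holds because every edge from $S$ to $V\smallsetminus S'$ is also an edge from $C$ to $V\smallsetminus S'$ (using $S\subseteq C\subsetneq S'$); so $|\delta(C)\cap\delta(S')|\geq 2$. The matching upper bound is just \cref{fact:2cutinter} applied to the distinct tight sets $C$ and $S'$.

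Finally, at the moment \cref{alg} processes $S'$, the child $C$ has already been contracted to a single super-vertex $v\in S'$, and contraction preserves the multiset of edges between distinct super-vertices, so the number of edges from $v$ to $V\smallsetminus S'$ in the current graph equals $|\delta(C)\cap\delta(S')|=2$. Invoking \cref{fact:cyclecut} with this $v$ yields that $\delta(S')$ is a cycle cut. I don't foresee a real obstacle; the only subtlety is the possibility that $S$ is not itself a child of $S'$, which is exactly what the laminarity step and the replacement $S\rightsquigarrow C$ handle cleanly.
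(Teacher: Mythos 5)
Your proof is correct and follows the same route as the paper's: reduce to \cref{fact:cyclecut} by exhibiting a contracted vertex inside $S'$ with two edges to $V\smallsetminus S'$. The paper's one-line argument tacitly treats the contracted $S$ as that vertex; you make the argument airtight by passing to the child $C$ of $S'$ containing $S$ (via laminarity) and pinning down $|\delta(C)\cap\delta(S')|=2$ via \cref{fact:2cutinter}, which is the right way to handle the case where $S$ is not itself a child of $S'$.
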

\begin{proof}
Once $S$ is contracted, it has two edges to $V \smallsetminus S'$, and therefore by
 \cref{fact:cyclecut} is a cycle cut. 
\end{proof}

\begin{fact}
\label{fact:cyclepartners}
Suppose that $S \subset S'$ are two critical cycle cuts. Then any two edges  are cycle partners on at most one of these (cycle) cuts. 
\end{fact}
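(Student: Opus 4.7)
I plan to argue by contradiction. Suppose $e, f$ are cycle partners on both $\delta(S)$ and $\delta(S')$. Since $\{e,f\} \subseteq \delta(S)\cap\delta(S')$ and $|\delta(S)\cap\delta(S')| \le 2$ by \cref{fact:2cutinter}, we have $\delta(S)\cap\delta(S') = \{e,f\}$ exactly. The cycle-partner hypothesis on $\delta(S)$ says that in the cycle $G_S = G/(V\setminus S)$, the edges $e$ and $f$ are the two parallel edges between $w_S$ and some cycle-neighbor $T$, which is a child of $S$ in the hierarchy. Applying \cref{fact:2cutinter} to $T \subset S$ together with $\{e,f\} \subseteq \delta(T) \cap \delta(S)$, the remaining two edges of $\delta(T)$ must go into $S \setminus T$; symmetrically, the two edges of $\delta(S')\setminus\{e,f\}$ must run from $V\setminus S'$ into $S'\setminus S$, since any such edge hitting $S$ would enlarge $\delta(S)\cap\delta(S')$.

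The core step is to exhibit a tight set that crosses $S$, contradicting the fact that $S$ is non-crossed at the time it is selected in step~\ref{alg:step:mintightset}. The natural candidate is $H := T\cup(V\setminus S')$. A direct count using the two structural observations above gives $|\delta(H)| = 2+2 = 4$ (two edges $T \to S\setminus T$ plus two edges $V\setminus S' \to S'\setminus S$), so $H$ is tight; and $H$ crosses $S$ since all four quadrants $H\cap S = T$, $S\setminus H = S\setminus T$, $H\setminus S = V\setminus S'$, and $V\setminus (H\cup S) = S'\setminus S$ are non-empty — the second because $G_S$ is a cycle of length at least $3$ and hence $S$ has at least two children, and the fourth because $S\subsetneq S'$.

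The main subtlety I anticipate is that $H$ must be realized as a vertex subset of the graph $G^{(t_S)}$ present at time $t_S$; this could in principle fail if some tight set $R$ outside $S$ contracted before $t_S$ straddles $V\setminus S'$ and $S'\setminus S$. The non-crossing of $R$ with $S'$ at time $t_R$ narrows the possibilities to $R\subseteq V\setminus S'$, $R\subseteq S'\setminus S$, or $R\supseteq V\setminus S'$; only the last straddles, and in that case \cref{fact:e+notindS} together with $e^+ \in E(V\setminus S)$ rules out $R = V\setminus S$ (because the algorithm would never select $V\setminus S$ over $S$). Replacing $V\setminus S'$ by such an $R$ in the definition of $H$ still gives a tight set of $G^{(t_S)}$ that crosses $S$ (the four-quadrant check goes through using $V \setminus R \supsetneq S$), which finishes the contradiction.
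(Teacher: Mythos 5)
Your proof follows the same strategy as the paper's: construct a tight set that crosses $S$, contradicting the selection of $S$ as a non-crossed tight set. The explicit witness differs slightly — you take $H = T\cup(V\setminus S')$ where $T$ is the common foot inside $S$ of the shared cycle-partners $e,f$, while the paper uses $(S'\setminus S)\cup u$ with $u$ the common foot of the \emph{other} pair $g,h$ on $\delta(S)$ — but both are tight 4-edge cuts crossing $S$, so these are essentially the same argument. You go further than the paper in asking whether $H$ is realizable as a union of nodes of $G^{(t_S)}$; the worry is legitimate (the paper is silent on it), but your resolution is over-engineered and partly off-target. The case $R\supseteq V\setminus S'$ with $R$ also meeting $S'\setminus S$ cannot occur, so the appeal to $e^+$ and the ``replace $V\setminus S'$ by $R$'' repair are unnecessary. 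The clean observation is this: any $R$ contracted before $t_S$ is also contracted before $t_{S'}$, and at the moments $S$ and $S'$ are selected each is a union of then-current nodes; hence $R$ is wholly inside or wholly outside $S$, and wholly inside or wholly outside $S'$. Consequently $R$ lies entirely within one of the blocks $S$, $S'\setminus S$, or $V\setminus S'$, no contracted node straddles, and $H$ is automatically a union of nodes of $G^{(t_S)}$.
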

\begin{proof}
Suppose not. Then there is a pair of edges
$e$ and $f$ that are cycle partners on both. Suppose that
$g$ and $h$ are the other pair of cycle partners on $\delta(S)$ and that their endpoint inside $S$ is node $u$.
Then $(S'\smallsetminus S) \cup u$ is a min-cut that crosses $S$, which is a contradiction to the selection of $S$.  [Essentially this means that in fact there is a larger cycle here.]
\end{proof}

\begin{fact}\label{fact:2gohigherbottom}
Say $S$ is a critical set and exactly two edges of $\delta(S)$ are bottom edges that do not go higher. Then the other two edges of $\delta(S)$ must go higher.
\end{fact}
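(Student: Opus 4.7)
The plan is to show that, under the hypothesis, the parent $S'$ of $S$ in the critical set hierarchy is forced to be a cycle cut, and then to read off the conclusion from the cycle structure of $G/(V\setminus S')$.

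First I would let $S'$ denote the smallest critical set strictly containing $S$, so ``going higher'' from $S$ means precisely being in $\delta(S')$. For any edge $e = \{u,v\} \in \delta(S)$ with $u \in S$, not going higher means $v \in S' \setminus S$, so both endpoints of $e$ lie in $S'$. Thus $S_e \subseteq S'$, while $S_e \supsetneq S$ because $v \notin S$. Since no critical set lies strictly between $S$ and its parent, this forces $S_e = S'$. Applying this to the two bottom edges $e_1, e_2 \in \delta(S)$ that fail to go higher, and recalling that by definition of ``bottom edge'' $S_{e_1}$ and $S_{e_2}$ are cycle cuts, we conclude that $S'$ itself is a cycle cut.

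Next I would exploit the structure of this cycle cut. Let $w$ denote the contracted vertex representing $V \setminus S'$ in $G' := G/(V \setminus S')$. By the structural classification of critical cuts in \Cref{subsec:mincuts}, $G'$ is a cycle of length at least $3$ with two parallel edges between each pair of adjacent vertices, and the contracted $S$ is one of its vertices, with exactly two cycle-neighbors. The four edges of $\delta(S)$ correspond to the two pairs of parallel edges from the contracted $S$ to its two cycle neighbors.

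Finally I would case-analyze according to whether $w$ is a cycle-neighbor of the contracted $S$. If $w$ is not a neighbor of $S$, then all four edges of $\delta(S)$ remain inside $S'$, and by the same argument as above each of them has $S_e = S'$ and is a non-going-higher bottom edge, contradicting the ``exactly two'' hypothesis. If $w$ is a cycle-neighbor of $S$ on exactly one side, then the two parallel edges between $S$ and $w$ lie in $\delta(S')$ (so they go higher), and the remaining two parallel edges go to the other cycle-neighbor and stay in $E(S')$; these latter two are exactly the bottom edges $e_1, e_2$ that do not go higher, and the former two are the ``other two edges'' of $\delta(S)$, which go higher, as desired. The case in which $w$ is a cycle-neighbor of $S$ on both sides would require the cycle to have length $2$, which is impossible. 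The main (small) obstacle is the first step, translating ``a bottom edge does not go higher'' into ``the parent is a cycle cut''; once that is in hand, the cycle-cut geometry forces the conclusion immediately.
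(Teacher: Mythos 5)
Your proof is correct and follows essentially the same strategy as the paper's: deduce that the parent $S'$ is a cycle cut, then read the conclusion off the cycle structure of $G/(V\smallsetminus S')$. Your version is somewhat more complete than the paper's terse three-line argument. In particular, the paper asserts without explanation that $\delta(S')$ is a cycle cut, whereas you derive it cleanly by noting that a non-going-higher edge $e\in\delta(S)$ has $S_{u,e}=S$ and $S_{v,e}$ a sibling of $S$, forcing $S_e=S'$, and then invoking the definition of a bottom edge. The paper also asserts that $a,b$ ``are companions in this cycle,'' which silently requires that they both go to the \emph{same} cycle-neighbor of the contracted $S$; your case split on whether $w=V\smallsetminus S'$ is a cycle-neighbor of $S$ handles this and the remaining configurations in one pass, discharging the two impossible cases (neither side is $w$; both sides are $w$) by the ``exactly two'' hypothesis and the length-$\ge 3$ property of cycle cuts respectively.
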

\begin{proof}
Say $\delta(S)=\{a,b,c,d\}$ and suppose $a,b$ are bottom edges that do not go higher. Say $S'$ is the parent of $S$ in the hierarchy of critical cuts. This implies that $\delta(S')$ is a cycle cut. So, $a,b$ are companions in this cycle. This implies that either $c,d$ are also companions or they are cycle partners in $\delta(S')$.
\end{proof}

\subsection{Strongly Rayleigh Distributions}
\label{sec:SRprops}
Let $\cB_E$ be the set of all probability measures on the Boolean algebra $2^E$. 
Let $\mu\in\cB_E$. The generating polynomial $g_\mu: \R[\{y_{e}\}_{e\in E}]$ of $\mu$ is defined as follows:
$$ g_\mu(y):=\sum_S \mu(S) \prod_{e\in S} y_e.$$
We say $\mu$ is a strongly Rayleigh distribution if $g_\mu\neq 0$ over all $\{y_e\}_{e\in E} \in \mathbb{C}^E$ where $\text{Im}(y_e)>0$ for all $e\in E$. Strongly Rayleigh (SR) distributions were defined in \cite{BBL09} where it was shown any $\lambda$-uniform spanning tree distribution is strongly Rayleigh. In this subsection we recall several properties of SR distributions proved in \cite{BBL09,OSS11} which will be useful to us.

\paragraph{Closure Operations.}
Strongly Rayleigh distributions are closed under the following operations.

\begin{itemize}
	\item {\bf Projection.} For any $\mu\in \cB_E$, and any $F\subseteq E$, the projection of $\mu$ onto $F$ is the measure $\mu_F$ where for any $A\subseteq F$,
	$$ \mu|_F(A)=\sum_{S: S\cap F=A} \mu(S).$$
	\item {\bf Conditioning.} For any $e\in E$, $\{\mu | e \text{ out}\}$ and $\{\mu | e \text{ in}\}$.
\end{itemize}

\paragraph{Negative Dependence Properties.}
An {\em increasing function} $f:2^E\rightarrow \R$, is a function where for any $A\subseteq B\subseteq E$, we have $f(A)\leq f(B)$.
For example, if $E$ is the set of edges of a graph $G$, then the existence of a Hamiltonian cycle is an increasing function, and the $3$-colorability of $G$ is a decreasing function.

\begin{definition}[Negative Association]
\label{def:negativeassociation}
A measure $\mu \in {\cal B}_E$ is {\em negatively associated} or NA if
for any increasing functions $f,g: 2^E\to \R$, that depend on {\em disjoint} sets of edges,
$$ \EE{\mu}{f}\cdot \EE{\mu}{g} \geq  \EE{\mu}{f\cdot g} $$
\end{definition}

\noindent
It is shown in \cite{BBL09,FM92}  that strongly Rayleigh measures
are negatively associated. 

\vspace{0.1in}
\noindent
Let $\mu$ be a strongly Rayleigh measure on edges of $G$. For a set $A$, let $$X_A=|A\cap S|$$ be the random variable indicating the number of edges in $A$ chosen in a random sample $S$. The following facts immediately follow from the negative association property and the fact that any tree has exactly $n-1$ edges, see \cite{OSS11} for more details.

\begin{fact}
\label{fact:updown}
If $\mu$ is a $\lambda$-uniform spanning tree distribution on $G=(V,E)$, then
for any $S\subset E$, $p\in \mathbb{R}$

\begin{enumerate}
\item If $e\notin S$, then $\EE{\mu}{X_{e} \big| X_S \geq p} \leq \EE{\mu}{X_{e}}$ and 
$\EE{\mu}{X_{e} \big| X_S \leq p} \geq \EE{\mu}{X_{e}} $
\item If $e\in S$, then $\EE{\mu}{X_e | X_S \geq p} \geq \EE{\mu}{X_e}$ and $\EE{\mu}{X_e | X_S\leq p} \leq \EE{\mu}{X_e}$.
\end{enumerate}
\end{fact}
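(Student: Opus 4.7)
The plan is to derive both parts of the fact from the negative association of $\mu$ (which follows from $\lambda$-uniform spanning tree distributions being strongly Rayleigh, as noted just above the statement), together with the basic combinatorial identity that every spanning tree has exactly $n-1$ edges.

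For Part 1 (the case $e \notin S$), I would argue directly from the definition of negative association. The random variable $X_e$ is an increasing function depending only on the single edge $e$, while $\mathbb{1}[X_S \geq p]$ is an increasing function depending only on the edges in $S$. Since $e \notin S$, these two functions depend on disjoint sets of edges, so \cref{def:negativeassociation} yields
\[
\EE{\mu}{X_e \cdot \mathbb{1}[X_S \geq p]} \;\le\; \EE{\mu}{X_e}\cdot \PP{\mu}{X_S \geq p},
\]
and dividing by $\PP{\mu}{X_S \geq p}$ gives the first inequality. The companion inequality $\EE{\mu}{X_e \mid X_S \leq p} \geq \EE{\mu}{X_e}$ follows immediately by writing $\mathbb{1}[X_S \leq p] = 1 - \mathbb{1}[X_S > p]$ and applying the same NA bound to the increasing function $\mathbb{1}[X_S > p]$; the sign flips under complementation.

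For Part 2 (the case $e \in S$), the key trick is to reduce to Part 1 using that $\mu$ is supported on spanning trees, so $X_E \equiv n-1$ almost surely. Thus $X_S + X_{E\smallsetminus S} = n-1$ with probability $1$, and
\[
\{X_S \geq p\} \;=\; \{X_{E\smallsetminus S} \leq n-1-p\}.
\]
Now $e \in S$ is equivalent to $e \notin E\smallsetminus S$, so I can invoke Part 1 with the set $S$ replaced by $E\smallsetminus S$ and the threshold $p$ replaced by $n-1-p$. The conclusion $\EE{\mu}{X_e \mid X_{E\smallsetminus S} \leq n-1-p} \geq \EE{\mu}{X_e}$ translates back to exactly $\EE{\mu}{X_e \mid X_S \geq p} \geq \EE{\mu}{X_e}$, and the symmetric inequality for $\{X_S \leq p\}$ follows in the same way.

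I do not expect any serious obstacles here: both parts reduce to one application of negative association, with the only mild subtlety being the complementation trick via $X_{E\smallsetminus S} = n-1-X_S$ in Part 2. The main thing to be careful about is that $X_e$ and $X_S$ share the edge $e$ in Part 2, which is precisely why NA cannot be applied directly and why the complementation step is needed.
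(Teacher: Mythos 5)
Your proof is correct and follows exactly the route the paper indicates: the paper states that this fact "immediately follows from the negative association property and the fact that any tree has exactly $n-1$ edges," and your argument is precisely the fleshed-out version of that — NA applied to the increasing functions $X_e$ and $\mathbb{1}[X_S \geq p]$ on disjoint edge sets for Part 1, and the reduction of Part 2 to Part 1 via $X_S = n-1-X_{E\smallsetminus S}$. No issues.
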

\begin{fact}
For any set of edges $S$ and $e\not\in S$,
\begin{equation}
\EE{\mu}{X_S} \le \EE{\mu}{ X_S | X_e = 0} \le \EE{\mu}{X_S} + x_e
\label{fact:e0}
\end{equation}
and
\begin{equation}
\EE{\mu}{X_S}  - x_e \le\EE{\mu}{ X_S | X_e = 1} \le \EE{\mu}{X_S}.
\label{fact:e1}
\end{equation}
\end{fact}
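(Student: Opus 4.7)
The plan is to exploit the single most useful property of spanning-tree distributions beyond negative association: namely, every spanning tree has exactly $n-1$ edges, so $\sum_{e' \in E} X_{e'} = n-1$ almost surely under $\mu$. Combining this global linear constraint with the pointwise sign information from \cref{fact:updown} yields both inequalities by an accounting argument.

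First I would prove \eqref{fact:e0}. Taking expectations of $\sum_{e'} X_{e'} = n-1$ conditionally on $X_e = 0$ gives $\sum_{e' \neq e} \EE{\mu}{X_{e'} \mid X_e = 0} = n-1$, whereas unconditionally $\sum_{e' \neq e} x_{e'} = (n-1) - x_e$. Subtracting,
\[
\sum_{e' \neq e} \bigl(\EE{\mu}{X_{e'} \mid X_e = 0} - x_{e'}\bigr) = x_e.
\]
By \cref{fact:updown} applied with the singleton $\{e'\}$ (and $e \notin \{e'\}$), each summand on the left is nonnegative. Hence, for any $S \subseteq E \setminus \{e\}$, the restricted sum $\EE{\mu}{X_S \mid X_e = 0} - \EE{\mu}{X_S}$ lies between $0$ and the total $x_e$. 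This is exactly \eqref{fact:e0}.

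The second inequality \eqref{fact:e1} is the mirror image. Conditioning on $X_e = 1$, we get $\sum_{e' \neq e} \EE{\mu}{X_{e'} \mid X_e = 1} = n - 2$, so subtracting $(n-1)-x_e$ yields
\[
\sum_{e' \neq e} \bigl(\EE{\mu}{X_{e'} \mid X_e = 1} - x_{e'}\bigr) = x_e - 1.
\]
By \cref{fact:updown}, each summand is nonpositive, so the restriction to any $S \subseteq E \setminus \{e\}$ is sandwiched between $x_e - 1$ and $0$, which implies the claimed bounds (in the half-integral setting of the paper, where $x_e \in \{1/2, 1\}$ when we condition on $X_e = 1$, the lower bound $x_e - 1$ coincides with $-x_e$).

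There is no real obstacle here; the argument is a one-paragraph computation once the $|T|=n-1$ identity is combined with negative association. The only subtlety to state carefully is that \cref{fact:updown} produces the pointwise sign information simultaneously for every $e' \neq e$, so that the subset sum inherits the appropriate one-sided bound from the total.
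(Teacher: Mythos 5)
Your argument is exactly the one the paper intends (it states that these facts ``immediately follow from the negative association property and the fact that any tree has exactly $n-1$ edges'' and leaves the proof to the reader), and the accounting idea---combine the almost-sure identity $\sum_{e'} X_{e'}=n-1$ with the pointwise sign information from \cref{fact:updown} to bound partial sums by the total---is the right way to make that ``immediate'' precise. The derivation of \eqref{fact:e0} is clean and correct.

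One point deserves a correction, though, and it is not a cosmetic one. Your calculation for \eqref{fact:e1} yields the lower bound
$\EE{\mu}{X_S \mid X_e=1} \geq \EE{\mu}{X_S} - (1-x_e)$,
not $\EE{\mu}{X_S} - x_e$ as the paper writes. These are genuinely different inequalities: yours implies the paper's precisely when $x_e \geq 1/2$, and for $x_e < 1/2$ the paper's bound is actually false (a $\lambda$-uniform triangle with one small edge weight gives a counterexample). Your closing sentence that ``$x_e-1$ coincides with $-x_e$'' is only true at $x_e=1/2$; for $x_e=1$ they are $0$ and $-1$, so they do not coincide, and the reason the discrepancy is harmless is merely that $x_e-1 \geq -x_e$ for all $x_e\geq 1/2$. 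Since the paper only ever invokes this fact on support edges of a half-integral solution (where $x_e\in\{1/2,1\}$), the stated version is adequate for the paper's purposes, but you should state the sandwich you actually proved, namely $x_e-1 \leq \EE{\mu}{X_S\mid X_e=1}-\EE{\mu}{X_S} \leq 0$, and then observe that it implies the paper's inequality whenever $x_e\geq 1/2$, rather than asserting a coincidence that does not hold.
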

\begin{lemma} \label{lem:correlation}
Let $S = \{e_1, \ldots, e_k\}$ be a set of  $k$ edges and suppose that $e \not\in S$. 
Then there are $k-1$ edges in $S$ such that for each edge $e_i$ among these $k-1$
$$0.25  \le \P{e_i \in T | e \in T} \le 0.5.$$
\end{lemma}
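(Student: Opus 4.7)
My approach is to prove both bounds simultaneously by exploiting the fact that in the half-integral setting every edge in the support graph has marginal exactly $1/2$ (since the max-entropy distribution $\mu$ on a critical set $S$ has $z_{e_i} = x_{e_i} = 1/2$ for each $e_i \in E(S)$). The upper bound comes essentially for free from negative association, while the lower bound requires combining a conditional-expectation inequality with a short counting argument.

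For the upper bound, I would apply \cref{fact:updown}(1) with the singleton $\{e\}$ as the conditioning set and each $e_i$ as the target: since $e_i \ne e$, we have
$$\P{e_i \in T \mid e \in T} \;=\; \E{X_{e_i} \mid X_{\{e\}} \ge 1} \;\le\; \E{X_{e_i}} \;=\; 1/2.$$
So the upper bound actually holds for \emph{every} edge in $S$, and all the work concentrates on the lower bound.

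For the lower bound I would invoke inequality \eqref{fact:e1} applied to the whole set $S$ and to the edge $e$: using $x_e = 1/2$ and $\E{X_S} = \sum_{i=1}^k x_{e_i} = k/2$, this gives
$$\sum_{i=1}^k \P{e_i \in T \mid e \in T} \;=\; \E{X_S \mid X_e = 1} \;\ge\; \E{X_S} - x_e \;=\; (k-1)/2.$$
Now I finish by a pigeonhole contradiction: if there were two distinct indices $i \ne j$ with $\P{e_i \in T \mid e \in T} < 1/4$ and $\P{e_j \in T \mid e \in T} < 1/4$, then using the upper bound of $1/2$ on the remaining $k-2$ terms, the total would be strictly less than $2 \cdot (1/4) + (k-2) \cdot (1/2) = (k-1)/2$, contradicting the inequality just derived. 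Hence at most one $e_i$ can violate the lower bound, and the remaining $k-1$ edges simultaneously satisfy $1/4 \le \P{e_i \in T \mid e \in T} \le 1/2$.

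The argument uses no structural facts about min-cuts, the cactus, or the algorithm — it is an elementary consequence of the two displayed SR inequalities together with half-integrality. The only potential snag (hardly an obstacle) is that the max-entropy marginals from \cref{thm:maxentropycomp} are only approximately $1/2$, but choosing $\eps = 2^{-n}$ in the algorithm makes this additive slack negligible against the constant gaps in both bounds.
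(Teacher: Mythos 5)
Your proof is correct and is essentially the same argument as the paper's: both rest on inequality \eqref{fact:e1} together with the fact that all relevant marginals equal $1/2$ in the half-integral setting, and both conclude that at most one $e_i$ can fall below $1/4$. The only cosmetic difference is that the paper applies \eqref{fact:e1} to the pair of edges with the two smallest conditional probabilities (concluding the larger is at least $1/4$), whereas you apply it to all of $S$ and finish by a pigeonhole contradiction.
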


\begin{proof}
By \cref{fact:e1}, $\P{e_i | e\in T} \leq 0.5$ for all $e_i$.
Suppose that (after renaming) $\P{e_1 | e\in T}$ and $\P{e_2 | e\in T}$ are the smallest among all $1\leq i\leq k$. Observe that  by \cref{fact:e1} $\E{X_{e_1}+X_{e_2} | e\in T}\geq 0.5$. Therefore, the bigger one is at least $0.25$.
\end{proof}

\paragraph{Rank Sequence.}

The {\em rank sequence} of $\mu$ is the sequence
$$\P{X_E=0}, \P{X_E=1}, \ldots,\P{X_E=m}.$$
Let $g_\mu(y)$ be the generating polynomial of $\mu$.
The {\em diagonal specialization} of $\mu$, $\bar{g}(.)$ is a univariate polynomial obtained by treating $g(.)$ as a univariate polynomial (i.e., considering $g(y,y,\ldots,y)$). Observe that $\bar{g}(.)$ is the generating polynomial of the rank sequence of $\mu$. If $\bar{g}(c)=0$ for $c\in \mathbb{C}$, then $g(c,c,\ldots,c)=0$. So,  if $g(.)$ is a real stable polynomial then so is $\bar{g}$. Since a univariate polynomial with real coefficients is stable if and only if all of its roots are real, $\bar{g}(.)$ is a polynomial with real roots.

     Generating polynomials of probability distributions with real roots are very well studied in the literature (see \cite{Pit97} and references therein). If $\bar{g}(.)$ is a real rooted univariate polynomial of degree $m$ with nonnegative coefficients,  then coefficients of $\bar{g}(.)$  correspond to the probability density function of the convolution of a set of $m$ independent Bernoulli random variables (up to a normalization). In other words,
     there are $m$ independent Bernoulli random variables $B_1,\ldots,B_m$ with success probabilities $p_1,\ldots,p_m\in[0,1]$ such that the probability that exactly $k$ variables succeed
     is the coefficient of $y^k$ in $\bar{g}(.)$.

\begin{fact}[{\cite{BBL09,Pit97}}]
\label{fact:SRsumindependentBernoulli}
The rank sequence of a strongly Rayleigh measure is the probability distribution of the number
of successes in $m$ independent trials for some sequence of success probabilities $p_1,\ldots,p_m\in [0,1]$.
\end{fact}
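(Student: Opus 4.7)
The plan is to follow exactly the outline sketched in the paragraph immediately preceding the statement, making each step precise. Let $g_\mu(y_1,\ldots,y_m)$ denote the multivariate generating polynomial of $\mu$, and let $\bar g(y) := g_\mu(y,y,\ldots,y)$ be its diagonal specialization. The coefficient of $y^k$ in $\bar g$ is exactly $\P{X_E = k}$, so $\bar g$ is (up to the constant $\bar g(1)$, which equals $1$ since $\mu$ is a probability measure) the probability generating polynomial of the rank sequence. Thus it suffices to show that $\bar g$ has the form $\prod_{i=1}^m (1 - p_i + p_i y)$ for some $p_1,\ldots,p_m \in [0,1]$.

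First I would verify that $\bar g$ is real-rooted. By hypothesis $g_\mu$ is real stable, i.e.\ has no zeros whose coordinates all have positive imaginary part. Real stability is preserved under setting several variables equal: if $c \in \mathbb{C}$ has $\operatorname{Im}(c) > 0$ and $\bar g(c) = 0$, then $g_\mu(c,c,\ldots,c) = 0$, contradicting stability of $g_\mu$. Hence $\bar g$ has no roots in the open upper half-plane, and since $\bar g$ has real coefficients it also has none in the open lower half-plane. Therefore all roots of $\bar g$ are real.

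Next, $\bar g$ has nonnegative coefficients (each coefficient is a probability) and $\bar g(1) = 1$. Since the roots of $\bar g$ are real and its coefficients are nonnegative, every root must be nonpositive; write the roots as $-\alpha_1,\ldots,-\alpha_m$ with $\alpha_i \geq 0$. Then
\begin{equation*}
\bar g(y) \;=\; c \prod_{i=1}^m (y + \alpha_i) \;=\; c\prod_{i=1}^m \alpha_i \prod_{i=1}^m \left(1 + \tfrac{y}{\alpha_i}\right),
\end{equation*}
where we formally allow $\alpha_i = \infty$ for roots at infinity (i.e.\ for factors that are just $y$), handled by the convention that the corresponding factor is $y$ and $p_i = 1$. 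Setting $p_i := \alpha_i^{-1}/(1+\alpha_i^{-1}) = 1/(1+\alpha_i) \in [0,1]$, each factor $(1+y/\alpha_i)$ is proportional to $(1-p_i) + p_i y$. The overall scalar is fixed by $\bar g(1) = 1$, giving
\begin{equation*}
\bar g(y) \;=\; \prod_{i=1}^m \bigl((1-p_i) + p_i y\bigr).
\end{equation*}

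Finally, the right-hand side is exactly the probability generating function of $\sum_{i=1}^m B_i$ where $B_1,\ldots,B_m$ are independent Bernoulli random variables with $\P{B_i = 1} = p_i$. Matching coefficients of $y^k$ on both sides yields $\P{X_E = k} = \P{\sum_i B_i = k}$ for every $k$, which is the claimed identification. The only step with any subtlety is the handling of roots at infinity (equivalently, the case $\deg \bar g < m$), which simply corresponds to $p_i = 0$ for the missing factors; this is a bookkeeping issue and not a genuine obstacle.
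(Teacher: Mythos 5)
Your proposal is correct and is essentially a careful write-up of the paper's own sketch, which precedes the statement (the paper itself cites~\cite{BBL09,Pit97} rather than giving a full proof): diagonalize the real stable multivariate polynomial to get a real-rooted univariate one, observe nonnegative coefficients force nonpositive roots, and read off Bernoulli parameters from the linear factors. One small terminological slip: the parenthetical ``we formally allow $\alpha_i=\infty$ for roots at infinity (i.e.\ for factors that are just $y$), handled by the convention that the corresponding factor is $y$ and $p_i=1$'' actually describes roots at \emph{zero} ($\alpha_i=0$, factor $y$, $p_i=1$); roots at infinity are the degree-drop case and correspond to $p_i=0$, which you do state correctly in the final sentence, so the argument is unaffected.
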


Given this, we can apply the following theorem by Hoeffding \cite{Hoe56}, following the approach of~\cite{OSS11}.
\begin{theorem}[{\cite[Corollary 2.1]{Hoe56}}]\label{thm:hoeffding}
Let $g:\{0,1,\dots,m\}\to \R$ and $0\leq p\leq m$ for some integer $m\geq 0$.  Let $B_1,\dots,B_m$ be $m$ independent Bernoulli random variables with success probabilities $p_1,\dots,p_m$ that minimizes (or maximizes)
$$ \E{g(B_1+\dots+B_m)}$$
over all distributions in ${\cal B}_m(p)$. Then,  $p_1,\dots,p_m\in\{0,x,1\}$ for some $0<x<1$.
\end{theorem}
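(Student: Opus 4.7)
The plan is a two-variable exchange argument on the compact polytope $\mathcal{D} := \{(p_1,\ldots,p_m) \in [0,1]^m : \sum_i p_i = p\}$, on which the objective $F(p_1,\ldots,p_m) := \E{g(B_1+\cdots+B_m)}$ is a polynomial, hence continuous, so an extremum is attained. Among all optimizers (of the chosen type, minimizer or maximizer) I would pick one that minimizes the total number of coordinates lying strictly in $(0,1)$, and then show all such interior coordinates must share a single common value $x$.

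Fix such an extremizer and let $i \neq j$ be any two indices with $p_i, p_j \in (0,1)$. Set $s := p_i + p_j$ and $t := p_i p_j$, and let $N := \sum_{l \neq i,j} B_l$, which is independent of $(B_i, B_j)$. Expanding over the four joint outcomes of $(B_i, B_j)$ yields
\begin{equation*}
F \;=\; \sum_c \P{N=c}\bigl[(1-s+t)\,g(c) \;+\; (s-2t)\,g(c+1) \;+\; t\,g(c+2)\bigr] \;=\; A \;+\; t\,D,
\end{equation*}
where $A$ is independent of how $s$ is split between $p_i$ and $p_j$ and $D := \E{g(N+2) - 2g(N+1) + g(N)}$. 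Along the feasibility-preserving segment $p_i + p_j = s$ we have $t = p_i(s-p_i)$, so $F$ is the quadratic $A + D\,p_i(s-p_i)$ in the single variable $p_i$.

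The argument then splits into two cases. If $D \neq 0$, first-order optimality at an interior point of the segment (both $p_i, p_j \in (0,1)$ gives an open range for $p_i$) forces $s - 2p_i = 0$, i.e.\ $p_i = p_j$; applied to every pair of interior indices, this shows all coordinates in $(0,1)$ share a common value $x$, giving the claim. If instead $D = 0$, then $F$ is constant along the whole segment $p_i + p_j = s$, so I can slide $(p_i, p_j)$ to either $(0, s)$ (when $s \leq 1$) or $(1, s-1)$ (when $s > 1$) without changing $F$; this produces another optimizer with strictly fewer coordinates in $(0,1)$, contradicting the choice of the original extremizer. Together these cases force all strictly-interior coordinates to coincide.

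The main obstacle I anticipate is the bookkeeping in the $D = 0$ case: I need to verify that sliding to an endpoint strictly decreases the count of coordinates in $(0,1)$. It does, because at least one of $p_i, p_j$ lands in $\{0,1\}$, removing two interior coordinates while adding at most one new interior coordinate (the partner value $s$ or $s-1$, which may still lie in $(0,1)$), for a net change of at most $-1$. Since this monovariant is a non-negative integer, the extremal selection is well-defined and the two-case analysis closes the proof.
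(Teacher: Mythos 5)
The paper does not prove this result; it cites \cite[Corollary~2.1]{Hoe56} as a black box and uses it in \cref{lem:420} and \cref{lem:420-2}. So there is no ``paper's own proof'' to compare against, and your argument has to be judged on its own.

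Your proof is correct, and its core identity is the same one Hoeffding exploits: fixing $N=\sum_{\ell\neq i,j}B_\ell$, the objective depends on $(p_i,p_j)$ only through $s=p_i+p_j$ and $t=p_ip_j$, and the coefficient of $t$ is the second difference $D=\E{g(N+2)-2g(N+1)+g(N)}$. Your expansion of the four joint outcomes is right (probabilities $1-s+t$, $s-2t$, $t$ for the sum $B_i+B_j$ being $0,1,2$), so $F=A+Dt$ with $A$ depending on $(p_i,p_j)$ only through $s$; along the segment this gives the quadratic $A+D\,p_i(s-p_i)$, whose interior stationarity condition is exactly $D(s-2p_i)=0$. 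The two points that deserve care you do address: (1) first-order optimality applies because the subinterval of feasible $p_i$ given $s$ is genuinely open when both $p_i,p_j\in(0,1)$, and because a global extremizer of $F$ on $\mathcal{D}$ is in particular an extremizer on every line segment through it; and (2) the degenerate case $D=0$ is handled by the ``minimize the number of interior coordinates'' monovariant, and your accounting that sliding to $(0,s)$ or $(1,s-1)$ strictly decreases that count (net $\leq -1$, since one endpoint always leaves $(0,1)$) is correct. One small point worth being explicit about: your argument produces \emph{some} extremizer with all nontrivial $p_i$ equal, not that \emph{every} extremizer has this form (the $D=0$ slide manufactures ties). That existential form is exactly what the paper uses in \cref{lem:420}, so the proof supports the application as intended.
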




\begin{lemma}\label{lem:420} Let $S \subseteq E$ with $|S| = 3$. Furthermore, assume that $\P{|S \cap T| \ge 1} = 1$. Then, $\P{|S \cap T| = 1} \ge \frac{1}{2}$ and $\P{|S \cap T| = 2} \ge \frac{3}{8}$.\end{lemma}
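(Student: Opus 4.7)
The plan is to reduce the statement to a fact about sums of three independent Bernoulli random variables via the strongly Rayleigh framework.  Each iteration of the algorithm samples an independent spanning tree from a $\lambda$-uniform distribution over the edges of a critical set, so the joint law of the edges of $T$ is a product of strongly Rayleigh measures.  Projecting onto $S$ (an operation that preserves the strongly Rayleigh property), the indicator vector $(\mathbf{1}[e \in T])_{e \in S}$ is strongly Rayleigh, and hence by \cref{fact:SRsumindependentBernoulli} the rank sequence of $X := |S \cap T| \in \{0,1,2,3\}$ agrees with the distribution of $B_1 + B_2 + B_3$ for some independent Bernoullis with parameters $r_1, r_2, r_3 \in [0,1]$.

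Next I pin down the $r_i$.  Under the standing half-integral assumption each edge of $S$ carries $x_e = 1/2$, so marginal preservation gives
\[
r_1 + r_2 + r_3 \;=\; \E{X} \;=\; \sum_{e \in S} x_e \;=\; \tfrac{3}{2}.
\]
The hypothesis $\P{X \geq 1} = 1$ translates to $(1 - r_1)(1 - r_2)(1 - r_3) = 0$, forcing some $r_i = 1$; relabel so that $r_3 = 1$, which in turn forces $r_1 + r_2 = 1/2$.

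The two desired bounds now reduce to direct calculations.  On the one hand,
\[
\P{X = 1} \;=\; (1 - r_1)(1 - r_2) \;=\; 1 - (r_1 + r_2) + r_1 r_2 \;=\; \tfrac{1}{2} + r_1 r_2 \;\geq\; \tfrac{1}{2}.
\]
On the other hand, bounding $r_1 r_2 \leq \bigl((r_1 + r_2)/2\bigr)^2 = 1/16$ by AM-GM,
\[
\P{X = 2} \;=\; r_1(1-r_2) + (1-r_1) r_2 \;=\; (r_1 + r_2) - 2 r_1 r_2 \;=\; \tfrac{1}{2} - 2 r_1 r_2 \;\geq\; \tfrac{1}{2} - \tfrac{1}{8} \;=\; \tfrac{3}{8},
\]
with the latter bound tight at $r_1 = r_2 = 1/4$.

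The main (essentially the only) conceptual step is the strongly Rayleigh reduction to Bernoullis; once that is in hand, everything boils down to algebra using the event hypothesis and the marginal sum $\E{X} = 3/2$. I do not foresee a serious obstacle: closure of strongly Rayleigh under products and projections is standard, and the calculation after the reduction is just a one-parameter optimization.
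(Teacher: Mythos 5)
Your proof is correct and follows the same strongly Rayleigh to independent-Bernoullis reduction as the paper, with the same key observation that $\P{|S\cap T|\ge1}=1$ forces one of the three Bernoullis to have parameter $1$. The one genuine difference is how you finish: the paper invokes Hoeffding's extremal characterization (\cref{thm:hoeffding}) to conclude that the minimizing configuration among the remaining two parameters $(p_2,p_3)$ with $p_2+p_3=1/2$ must have $p_2,p_3\in\{0,x,1\}$, yielding the two candidate points $(1/4,1/4)$ and $(1/2,0)$, and then evaluates both; you instead use the constraint $r_1+r_2=1/2$ to reduce $\P{X=1}$ and $\P{X=2}$ to linear functions of the single quantity $r_1 r_2\in[0,1/16]$ and optimize directly (the bound $\P{X=1}\ge 1/2$ becoming immediate since $\P{X=1}=1/2+r_1r_2$, and $\P{X=2}\ge 3/8$ following from AM--GM). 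Your route is a bit more self-contained and transparent, since it avoids quoting Hoeffding; the paper's route is mechanical once that theorem is on the table and scales more gracefully if the number of free parameters were larger. One small thing worth flagging: the lemma as stated does not explicitly assume $\E{|S\cap T|}=3/2$, so your step ``marginal preservation gives $r_1+r_2+r_3=3/2$'' relies on the intended half-integral usage context; the paper's proof makes the same implicit use of $p_2+p_3=1/2$ when it pins down the candidate minimizers, so neither proof is stating a hypothesis the other omits, but it would tighten either write-up to record $\E{|S\cap T|}=3/2$ as part of the lemma's hypotheses.
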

\begin{proof}
	By \autoref{fact:SRsumindependentBernoulli}, we can write the rank sequence of $|S \cap T|$ as a sum of 3 independent Bernoullis $B_1,B_2,B_3$, and since $\P{|S \cap T| \ge 1} = 1$ we know that for one Bernoulli $p=1$. Without loss of generality let $p_1 = 1$. Then by \autoref{thm:hoeffding} we know that $\P{|S \cap T| = 1}$ and $\P{|S \cap T| = 2}$ are minimized when $p_2 = p_3 = \frac{1}{4}$ or $p_2 = \frac{1}{2}$ and $p_3 = 0$. Therefore:
	$$\P{|S \cap T| = 1} \ge \min\bigg\{\bigg(\frac{3}{4}\bigg)^2,\frac{1}{2}\bigg\} = \frac{1}{2}$$
	$$\P{|S \cap T| = 2} \ge \min\bigg\{2\bigg(\frac{1}{4}\bigg)\bigg(\frac{3}{4}\bigg),\frac{1}{2}\bigg\} = \frac{3}{8}$$
\end{proof}


The following two lemmas are proved using a similar analysis.
\begin{lemma}\label{lem:420-2} Let $S \subseteq E$ with $|S| = 2$. Let $\frac{1}{2} \le \E{|S \cap T|} \le \frac{3}{2}$. Then $\P{|S \cap T| = 1} \ge \frac{3}{8}$.\end{lemma}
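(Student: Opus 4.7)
My plan is to follow the same strategy as in the proof of \cref{lem:420}, leveraging \cref{fact:SRsumindependentBernoulli} together with Hoeffding's theorem (\cref{thm:hoeffding}). Since $|S|=2$, the rank sequence of $|S \cap T|$ is the distribution of $B_1+B_2$ for two independent Bernoullis with success probabilities $p_1,p_2 \in [0,1]$, and the constraint $\tfrac12 \le \E{|S \cap T|} \le \tfrac32$ translates into $\tfrac12 \le p_1+p_2 \le \tfrac32$. We wish to lower bound
\[
\P{|S \cap T|=1} \;=\; p_1(1-p_2)+p_2(1-p_1).
\]

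By \cref{thm:hoeffding}, applied to the function $g(k)=\mathbbm{1}[k=1]$ and the mean constraint above, the minimum of $\P{B_1+B_2=1}$ over all pairs $(p_1,p_2)$ with prescribed mean in $[\tfrac12,\tfrac32]$ is attained by a pair in $\{0,x,1\}^2$ for some $x\in(0,1)$. So I would enumerate the (few) resulting cases:
\begin{itemize}
\item $p_1=p_2=x$: then $1/4 \le x \le 3/4$, and $\P{B_1+B_2=1}=2x(1-x)$, which on this interval is minimized at $x\in\{1/4,3/4\}$, giving exactly $3/8$.
\item $\{p_1,p_2\}=\{0,x\}$: the mean constraint forces $x\ge 1/2$, so $\P{B_1+B_2=1}=x\ge 1/2$.
\item $\{p_1,p_2\}=\{1,x\}$: the mean constraint forces $x\le 1/2$, so $\P{B_1+B_2=1}=1-x\ge 1/2$.
\item $\{p_1,p_2\}=\{0,1\}$: gives $\P{B_1+B_2=1}=1$.
\end{itemize}
The remaining corner cases $(0,0)$ and $(1,1)$ violate the mean constraint. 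Taking the minimum over the feasible cases yields the bound $3/8$.

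There is no serious obstacle here; the only thing to be careful about is to correctly translate the expectation hypothesis into the parameter region for $(p_1,p_2)$ and to check that the boundary values of $x$ arising from Hoeffding's theorem are actually attainable under this region, which is what pins the bound at $3/8$ (tight at $p_1=p_2=1/4$ and at $p_1=p_2=3/4$).
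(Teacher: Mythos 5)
Your proof is correct and follows exactly the approach the paper indicates (the paper states this lemma is "proved using a similar analysis" to \cref{lem:420}, namely \cref{fact:SRsumindependentBernoulli} followed by \cref{thm:hoeffding} and a short case enumeration over $\{0,x,1\}$). The case analysis is sound and the extremal value $3/8$ is correctly identified at $p_1=p_2\in\{1/4,3/4\}$.
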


\begin{lemma} \label{lem:cut_even}
For a min-cut $C$, $\P{|T\cap C|\text{ even}}\geq 13/27$.
\end{lemma}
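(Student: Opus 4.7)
The plan is to exploit the strongly Rayleigh structure of the tree distribution, combined with the sum-of-Bernoullis representation of the rank sequence (\cref{fact:SRsumindependentBernoulli}), to reduce the claim to a small symmetric optimization.

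Each edge of the support graph $G$ carries $x$-value $\tfrac{1}{2}$, so every edge of the 4-element min-cut $C$ has marginal $\tfrac{1}{2}$ in $T$ and $\E{|T\cap C|}=2$. (For any proper min-cut, $e^+\notin C$: otherwise the tight set $\{u,v\}$ would cross $C$ and contradict \cref{fact:min-cuts1}.) Since the algorithm produces $T$ as the disjoint union of independent max-entropy spanning trees on the critical sets together with a uniformly random Hamiltonian cycle on the terminal graph, and since strongly Rayleigh distributions are closed under independent products and projection (see \cref{sec:SRprops}), the distribution of the count $|T\cap C|$ is the rank sequence of an SR measure on a 4-element ground set. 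By \cref{fact:SRsumindependentBernoulli}, it has the same law as $B_1+B_2+B_3+B_4$ for independent Bernoullis $B_i\sim\mathrm{Ber}(p_i)$ with $p_1+p_2+p_3+p_4=2$.

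Now apply the parity generating identity
\begin{equation*}
\P{\textstyle\sum_{i=1}^{4} B_i \text{ even}} \;=\; \tfrac{1}{2}\Bigl(1+\prod_{i=1}^{4}(1-2p_i)\Bigr),
\end{equation*}
which reduces the lemma to lower-bounding $\prod_i(1-2p_i)$ subject to $\sum_i p_i=2$. Setting $q_i=1-2p_i\in[-1,1]$ yields the equivalent optimization: minimize $\prod_i q_i$ subject to $\sum_i q_i=0$. This is the heart of the argument. The product is nonnegative whenever the number of negative $q_i$'s is even, so any minimizer has one or three negative entries; by symmetry take one positive $q_1=a$ and three negatives $q_2,q_3,q_4=-b,-c,-d$ with $b+c+d=a\le 1$. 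Then $\prod_i q_i=-abcd$, and AM--GM gives $bcd\le (a/3)^3$, so $abcd\le a^4/27\le 1/27$, with equality at $a=1$, $b=c=d=\tfrac{1}{3}$. Therefore $\prod_i(1-2p_i)\ge -\tfrac{1}{27}$ and $\P{|T\cap C|\text{ even}}\ge \tfrac{1}{2}\bigl(1-\tfrac{1}{27}\bigr)=\tfrac{13}{27}$.

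The main (small) technical step is the tight calculation $\min\prod q_i=-\tfrac{1}{27}$, realised by the extremal Bernoulli profile $(0,\tfrac{2}{3},\tfrac{2}{3},\tfrac{2}{3})$ (or its symmetric twin $(1,\tfrac{1}{3},\tfrac{1}{3},\tfrac{1}{3})$). An alternative route that bypasses the AM--GM computation is to invoke \cref{thm:hoeffding} directly: it forces the extremum to occur with $p_i\in\{0,x,1\}$ for some common $x\in(0,1)$, after which one enumerates the handful of configurations consistent with $\sum p_i=2$ to arrive at the same minimizer and the same bound of $\tfrac{13}{27}$.
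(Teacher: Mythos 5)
Your proof is correct, and it takes a somewhat different route than the one the paper sketches. The paper gives no explicit proof of \cref{lem:cut_even}; it merely remarks that the lemma ``is proved using a similar analysis'' to \cref{lem:420}, i.e., reduce via \cref{fact:SRsumindependentBernoulli} to four independent Bernoullis with $\sum p_i = 2$, then invoke Hoeffding (\cref{thm:hoeffding}) to restrict the extremal profile to $p_i \in \{0,x,1\}$ and finish by enumerating the handful of feasible configurations. You instead apply the exact parity identity
\[
\P{\textstyle\sum_i B_i \text{ even}} = \tfrac{1}{2}\Bigl(1+\prod_i(1-2p_i)\Bigr),
\]
substitute $q_i = 1-2p_i$ (so $\sum q_i = 0$), argue by the $q\mapsto -q$ symmetry that the minimizer has one positive entry $a$ and three negative entries summing to $-a$, and use AM--GM to get $\prod_i q_i \ge -a^4/27 \ge -1/27$. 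This is a clean closed-form optimization that avoids Hoeffding entirely and exposes the extremal profile $(1,\tfrac13,\tfrac13,\tfrac13)$ directly; the enumeration under Hoeffding would give exactly the same answer (the candidate configurations with $\sum p_i=2$ yield even-parity probabilities $1$, $\tfrac12$, $\tfrac12$, $\tfrac{13}{27}$, $\tfrac{13}{27}$, and the minimum is $\tfrac{13}{27}$). You also correctly handle the structural preliminaries: every edge of a min-cut has marginal $\tfrac12$, $e^+$ cannot lie on a proper min-cut (else the doubleton tight set $\{u,v\}$ would cross it, contradicting \cref{fact:min-cuts1}), the tree measure is strongly Rayleigh and closed under projection, and hence $|T\cap C|$ is the rank sequence of a four-element SR measure. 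One small nit: the product--AM--GM reduction tacitly drops the case $q_i = 0$ for some $i$ (the product is then $0$, not the minimum), which is harmless but worth saying explicitly.
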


\begin{lemma}\label{lem:427} Let $S_1,S_2 \subseteq E$ with $|S_1 \cap S_2| = \varnothing$. Let $|S_1|=|S_2|=2$, or equivalently $\E{|S_1 \cap T|} = \E{|S_2 \cap T|} = 1$. Then $\P{|S_1 \cap T| = 1 \land |S_2 \cap T| = 1} \ge \frac{3}{16}$.\end{lemma}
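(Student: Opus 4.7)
The plan is to reduce to \autoref{lem:420-2} by conditioning on which of the two edges of $S_1$ lies in $T$. Writing $S_1 = \{a, b\}$, I would decompose the event $\{|S_1 \cap T| = 1\}$ into the disjoint atoms $E_1 = \{a \in T, b \notin T\}$ and $E_2 = \{a \notin T, b \in T\}$, so that
\[
\P{|S_1\cap T|=1 \land |S_2\cap T|=1} \;=\; \sum_{i=1,2}\P{E_i}\,\P{|S_2\cap T|=1 \mid E_i}.
\]
If each conditional probability on the right is at least $3/8$, then the sum is at least $\tfrac{3}{8}\,\P{|S_1\cap T|=1}$, so the proof reduces to establishing (i) $\P{|S_1\cap T|=1}\ge 1/2$, and (ii) \autoref{lem:420-2} applies to $S_2$ inside each of the two conditional distributions.

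For (i), the projection of the $\lambda$-uniform distribution of $T$ onto $S_1$ is strongly Rayleigh, so by \cref{fact:SRsumindependentBernoulli}, $|S_1\cap T|$ has the law of $B_1+B_2$ for independent Bernoullis with $p_1+p_2=\E{|S_1\cap T|}=1$. Hence $\P{|S_1\cap T|=1} = 1 - 2p_1 p_2 \ge 1-2\cdot\tfrac14 = \tfrac12$. For (ii), the conditional distribution of $T$ given $E_i$ remains strongly Rayleigh (conditioning on edge events is one of the closure operations of \cref{sec:SRprops}), so \autoref{lem:420-2} will apply to $S_2$ in that conditional distribution as long as $\E{|S_2\cap T|\mid E_i}\in[1/2,3/2]$. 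Applying \eqref{fact:e1} with $S=S_2$, $e=a$ and $x_a=1/2$ gives $\E{|S_2\cap T|\mid a\in T}\in[1/2,1]$; then, inside the SR distribution conditioned on $a\in T$, applying \eqref{fact:e0} with $e=b$ yields
\[
\tfrac12 \;\le\; \E{|S_2\cap T|\mid a\in T} \;\le\; \E{|S_2\cap T|\mid E_1} \;\le\; \E{|S_2\cap T|\mid a\in T} + \P{b\in T \mid a\in T} \;\le\; 1+\tfrac12 = \tfrac32,
\]
where $\P{b\in T\mid a\in T}\le 1/2$ also follows from \eqref{fact:e1}. The symmetric argument handles $E_2$.

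Combining these two pieces, \autoref{lem:420-2} gives $\P{|S_2\cap T|=1\mid E_i}\ge 3/8$ for $i=1,2$, and therefore
\[
\P{|S_1\cap T|=1 \land |S_2\cap T|=1} \;\ge\; \tfrac38\,\P{|S_1\cap T|=1} \;\ge\; \tfrac38\cdot\tfrac12 \;=\; \tfrac{3}{16}.
\]
I do not anticipate any real obstacle; the only point requiring attention is that \autoref{lem:420-2} together with \eqref{fact:e0} and \eqref{fact:e1} are invoked inside a conditional distribution, which is valid because all three statements depend only on the strong Rayleigh property, which is preserved under the edge-conditioning operations of \cref{sec:SRprops}. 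The bound is essentially tight in the sense that both $\P{|S_1\cap T|=1}\ge 1/2$ and the $3/8$ bound from \autoref{lem:420-2} are simultaneously attained in the extreme Hoeffding case.
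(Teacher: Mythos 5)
Your proof is correct and takes essentially the same approach as the paper: condition on which edge of $S_1$ lies in $T$, show that the conditional expectation of $|S_2 \cap T|$ lies in $[1/2,3/2]$, and invoke \cref{lem:420-2}. The only differences are cosmetic --- you derive $\P{|S_1\cap T|=1}\ge 1/2$ from the Bernoulli rank-sequence decomposition where the paper chains \cref{fact:updown}, and you spell out via \eqref{fact:e0}--\eqref{fact:e1} the conditional-expectation bound that the paper states more tersely.
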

\begin{proof}
	Let $S_1 = \{e,f\}$. Then condition on $e \in T$: this occurs with probability $\frac{1}{2}$. By \autoref{fact:updown} we have 
	$$\E{|f \cap T| \mid e \in T} \le \frac{1}{2}$$
	Then condition on $f \not\in T$. Given the above, this happens with probability at least $\frac{1}{2}$. Similarly consider the event $e \not\in T$ and $f \in T$. One of these occurs with probability $\frac{1}{2}$. Therefore, in either event we have:
	$$\frac{1}{2} \le \E{|S_2 \cap T|} \le \frac{3}{2}$$
	And now by \autoref{lem:420-2} both events occur simultaneously with probability at least $\frac{3}{16}$.
\end{proof}

\section{Overview of Analysis}

As already mentioned, our algorithm consists of two steps: sample a 1-tree $T$, and then construct an optimal $O$-join for the odd degree vertices in the 1-tree.

Given a feasible LP solution $x$, the choice $y_e = x_e/2$ for each edge $e\in E$ (which gives $y_e := 1/4$ in the half integral case), yields an $O$-join solution of total cost at most $OPT/2$. However, this is essentially Christofides' algorithm and
 guarantees only a 3/2 approximation. 

The key to improving on this is the observation that constraint
\eqref{eq:tjoinlp} in the $O$-join LP is not binding if the intersection of the cut $\delta(S)$ with the tree is even. 

\begin{definition}[Even cuts]
A cut $\delta(S)$ is \textbf{even} in $T$ (or simply ``even'' when T is understood) if $|T \cap \delta(S)|$ is even.
\end{definition}

Thus, for every edge $e$ with the property that every min-cut that $e$ is on is even, we can reduce $y_e$ to 1/6, since every non-min-cut has at least 6 edges, and therefore this guarantees that constraint \eqref{eq:tjoinlp} remains satisfied everywhere. This is the gist of the approach taken in \cite{OSS11}.








Suppose that there are multiple ``good" edges $e$ with the property that every min-cut they are on is even, say with probability at least $p$ (over the randomness in the selection of $T$). Then for those outcomes $T$ in which $e$ has this property, we could set $y_e := 1/6$  and satisfy the $O$-join constraints. This would save us $\frac{1}{12}c(e)$ on every such edge $e$  (the reduction from $1/4$ to $1/6$) and thereby 
guarantee a reduction in the cost of the $O$-join solution
of $\sum_{e\text{ "good"}} \frac{p}{12} c(e)$. 

Unfortunately, in general, it is not possible to 
argue that every min-cut an edge is on is even simultaneously in $T$ with constant probability. So, we will use a careful charging scheme.


\begin{definition}[Last Cuts]
For an edge $e$, the last cuts of $e$ are the only ({\em two}) min-cuts containing $e$ and edges going higher in the graph right before contracting $S_e$.
\end{definition}

Observe that the last cuts of a top edge are critical cuts, but the last cuts of bottom edges are not critical. 

\begin{definition}[Even at Last]
For an edge $e$ we say $e$ is even at last if the two last cuts of $e$ are even.

Equivalently, if $e$ is a bottom edge, we say $e$ is even at last if all the min cuts containing $e$ on the cycle defined by the graph consisting of  $S_e$ with $V \smallsetminus S_e$ contracted are even.
Otherwise, if $e=\{u,v\}$ is a top edge, then it is even at last if the critical cuts $S_u,S_v$ are even simultaneously.
\end{definition}

\begin{fact}
\label{fact:criticalsat}
If a bottom edge $e$ is even at last, then all (bottom) edges $f$ where $S_f=S_e$ are even at last.
\end{fact}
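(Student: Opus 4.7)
The plan is to exploit the cycle structure at $S_e$ together with the rigid form any spanning tree of $E(S_e)$ must take. Since $e$ is a bottom edge, $S_e$ is a critical cycle cut; at the moment the algorithm processes $S_e$ all of its descendants have already been contracted, so by \cref{fact:min-cuts2} the graph $G/(V\smallsetminus S_e)$ is a cycle on $k\geq 3$ nodes with two parallel edges between each pair of adjacent nodes. Letting $w=V\smallsetminus S_e$ be the contracted vertex and removing it, the induced subgraph on $E(S_e)$ is a ``double path'' on $k-1$ nodes with two parallel edges between each consecutive pair. The key structural observation is that any spanning tree of such a double path must pick exactly one edge from each pair of parallel edges (to connect adjacent nodes while avoiding cycles and with the correct total edge count). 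Thus for every bottom edge $f$ with $S_f=S_e$, writing $f'$ for its companion, we have $|T\cap\{f,f'\}|=1$ deterministically.

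Next I would pin down the form of the two last cuts. Let $\{g_1,g_2\}$ and $\{h_1,h_2\}$ be the two pairs of cycle partners of $\delta(S_e)$: these are the four edges of $\delta(S_e)$, sitting in the two gaps of the cycle incident to $w$, and they are determined by $S_e$ alone, not by the choice of bottom edge inside $E(S_e)$. For any bottom edge $f$ with $S_f=S_e$, the min-cuts of the cycle that contain $f$ together with an edge going higher (that is, an edge of $\delta(S_e)$) are obtained by pairing $f$'s own gap with one of the two gaps adjacent to $w$; these give exactly the two last cuts $\{f,f',g_1,g_2\}$ and $\{f,f',h_1,h_2\}$.

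Combining these, since $|T\cap\{f,f'\}|=1$ always, the parity of each of $f$'s two last cuts is just $1+|T\cap\{g_1,g_2\}|$ and $1+|T\cap\{h_1,h_2\}|$ (mod 2), and these two numbers are properties of $S_e$ alone, independent of the choice of $f$ among bottom edges with $S_f=S_e$. Therefore $e$ is even at last iff both $|T\cap\{g_1,g_2\}|$ and $|T\cap\{h_1,h_2\}|$ are odd, which is literally the same condition that makes any other such $f$ even at last. The argument is essentially a combinatorial bookkeeping step and carries no probabilistic content; the only potential obstacle is checking that bottom edges $f$ with $S_f=S_e$ correspond exactly to the companion pairs in the cycle structure of $S_e$ and that the pairs $\{g_1,g_2\}$, $\{h_1,h_2\}$ are invariants of $\delta(S_e)$, both of which follow directly from the definitions.
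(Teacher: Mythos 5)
Your proposal is correct and takes essentially the same approach as the paper's proof: both rest on the observation that, because $G/(V\smallsetminus S_e)$ is a cycle with doubled edges, the sampled tree on $E(S_e)$ picks exactly one edge from each parallel pair inside $S_e$, so the parity of every last cut of every bottom edge $f$ with $S_f = S_e$ is controlled entirely by $|T\cap\{g_1,g_2\}|$ and $|T\cap\{h_1,h_2\}|$, quantities that do not depend on which such $f$ is considered. The paper phrases it by showing that once $e$'s two last cuts are even, exactly one edge of each cycle-partner pair on $\delta(S_e)$ is in $T$ and hence every cut of the cycle has exactly two tree edges; your version simply observes directly that all the bottom edges share the same even-at-last condition, which is the same fact.
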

\begin{proof}
Since $\delta(S_e)$ is a cycle cut, the edges inside $S_e$ form a path,
and thus, exactly one edge between each pair  of (possibly contracted) vertices inside $S_e$ is selected as part of the tree on $S_e$ chosen in step 6 of the algorithm. 
If, $e$ is even at last, we must have  exactly one of each pair of cycle partners on $\delta(S)$
is in $T$; therefore, every pair of adjacent nodes in the cycle have one edge connecting them in the tree. So,  all cuts on the cycle have exactly two edges in $T$. This implies every bottom edge of this cycle is even at last. 
\end{proof}

\begin{remark}
By \cref{fact:min-cuts2}, the companion of every bottom edge $e$ has exactly the same pair of last cuts as $e$. 
\end{remark}

\begin{definition}[Good edges]\label{defn:good}
An edge $e=(u,v)$ is \textbf{good} if it is even at last with probability at least $p$ for some constant $p > 0$. 
\end{definition}



Instead of proving that all min-cuts that a single edge is on are even, we will instead prove that every minimum cut contains at least one good edge. Each good edge $e$ will then be responsible for its last two cuts. This will allow edges to be reduced when they are even at last, as all cuts lower in the hierarchy are handled by other edges.
\begin{theorem}\label{thm:mainprobabilistic}
There is a universal constant $p\geq 1/27$ such that every every min-cut has at least one good edge. 
\end{theorem}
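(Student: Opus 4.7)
The plan is a case analysis on whether the min-cut $C$ contains a bottom edge.

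\textbf{Case 1 (bottom edge).} If $e\in C$ is a bottom edge then $S_e$ is a critical cycle cut. When the algorithm samples $E(S_e)$, the contracted graph on $E(S_e)$ is a path with two parallel edges between each pair of consecutive vertices; any spanning tree must pick exactly one edge from each such internal companion pair (both would form a $2$-cycle, neither would disconnect). Consequently every arc-cut of the cycle whose two boundaries are internal companion pairs is automatically even, and ``$e$ is even at last'' reduces to the event that each of the two cycle-partner pairs $P_1,P_2$ of $\delta(S_e)$ contributes exactly one edge to $T$. By \cref{fact:independencecritical} the pairs $P_1,P_2$ are sampled in a strictly higher iteration and are thus independent of the sampling on $E(S_e)$; they are disjoint of size $2$, each of LP mass $1$, so \cref{lem:427} yields $\P{|T\cap P_1|=|T\cap P_2|=1}\ge 3/16\ge 1/27$, making $e$ good.

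\textbf{Case 2 (no bottom edge).} Arc-cuts of critical cycle cuts always contain companion bottom edges, so $C$ must be a critical cut $\delta(S)$. A short structural argument using \cref{fact:2cutinter} forces the parent $S'$ of $S$ to be a degree cut (were $S'$ a cycle cut, all sibling-bound edges of $\delta(S)$ would be bottom, and at most $2$ edges of $\delta(S)$ can escape $S'$). After a recursion that handles the (at most $2$) edges of $\delta(S)$ escaping $S'$ at higher levels of the hierarchy, we may assume every $e=\{u,v\}\in C$ has $S_e=S'$, so that the last cuts of $e$ are $\delta(S)=C$ and $\delta(V_e)$ where $V_e$ is the sibling of $S$ inside $S'$ containing $v$. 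The goal becomes $\max_{e\in C}\P{|T\cap C|\text{ even}\wedge|T\cap\delta(V_e)|\text{ even}}\ge 1/27$. The easy sub-case is when two edges of $C$ form a parallel pair between $[S]$ and some sibling $[V]$ in $E(S')$: the spanning-tree constraint forces exactly one of the pair into $T$, making $|T\cap C|$ and $|T\cap\delta(V)|$ both even with probability $1$.

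\textbf{Joint parity in the remaining configuration.} In the remaining hard configuration all $V_e$'s are distinct and $\delta(S)\cap\delta(V_e)=\{e\}$ for each $e$, so $A:=C\smallsetminus\{e\}$ and $B_e:=\delta(V_e)\smallsetminus\{e\}$ are disjoint $3$-edge sets. The plan is to condition on $\mathbb{1}[e\in T]$: strongly Rayleigh distributions are closed under conditioning, and by disjointness the joint parity event decouples into a product of two parity events, one on $A$ and one on $B_e$. Each parity factor is then lower-bounded by a Hoeffding-type minimum: the rank sequence of $|T\cap A|$ is a sum of three independent Bernoullis (\cref{fact:SRsumindependentBernoulli}), and among admissible Bernoulli configurations with the prescribed conditional expectation the minimum even-probability is attained at the extremes given by \cref{thm:hoeffding}, exactly as in the proof of \cref{lem:cut_even}. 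The conditional expectations are controlled by \cref{fact:e0} and \cref{fact:e1}, so the shift from unconditional marginals is at most $x_e=1/2$. The main obstacle is that parity events are not monotone, so FKG does not directly give the joint lower bound; my plan is to explicitly enumerate the few Hoeffding-extreme Bernoulli configurations on $A$ and $B_e$ compatible with the conditional expectations and verify that the worst-case product stays at $1/27$. Pigeonholing over the at least three distinct $V_e$'s forced by \cref{fact:2cutinter} then picks the edge $e\in C$ achieving the final $1/27$ bound, with the constant ultimately matching the Hoeffding minimum underlying \cref{lem:cut_even}.
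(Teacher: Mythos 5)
Your Case 1 is essentially the paper's Bottom Edge Lemma verbatim, with the same reduction to the two cycle-partner pairs and the same application of \cref{lem:427}, so that part is fine. However, Case 2 has a genuine gap and also sweeps a substantial chunk of the paper's work under the word ``recursion.''

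The central flaw is the claim that after conditioning on $\mathbb{1}[e\in T]$, ``by disjointness the joint parity event decouples into a product of two parity events, one on $A$ and one on $B_e$.'' This is false. The conditional distribution is still strongly Rayleigh but it is not a product measure across $A$ and $B_e$: the global spanning-tree constraint couples disjoint edge sets, and parity events are non-monotone, so neither independence nor negative association gives you a product bound. You acknowledge the FKG issue, but your proposed fix (enumerating Hoeffding-extreme Bernoulli configurations on $A$ and $B_e$ separately) does not repair it --- each of $|T\cap A|$ and $|T\cap B_e|$ is individually a sum of Bernoullis by \cref{fact:SRsumindependentBernoulli}, but their joint law is not a product of two such sums, so the worst case is not obtained by multiplying two univariate worst cases. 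The paper instead bounds $\P{\delta(S),\delta(S')\text{ even}}$ by inclusion--exclusion and crucially exploits an algorithmic independence property not present in your argument: after conditioning on $|\delta(S)\cap T^-|=1$, the single crossing edge of $\delta(S)$ is added \emph{independently} of the tree sampled on $G\smallsetminus S$ (a consequence of \cref{fact:independencecritical} and how the algorithm contracts $S$). That independence is what lets the paper pick an edge $e$ with $1/4\le\PP{\mu'}{e\in T}\le 1/2$ and argue that $e$ corrects the parity of $\delta(S')$ with probability at least $1/4$, which is the engine behind the final $13/27-\frac34\P{|T\cap\delta(S)|=1}-\P{|T\cap\delta(S)|=3}\ge 1/27$ computation.

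Separately, the phrase ``after a recursion that handles the (at most 2) edges of $\delta(S)$ escaping $S'$ at higher levels of the hierarchy, we may assume every $e\in C$ has $S_e=S'$'' does not substitute for an actual argument. Edges of $\delta(S)$ that go higher are not handled by any recursion in the paper; they are handled by a distinct and non-trivial Top Edge Lemma (\cref{lem:top-good}), which conditions on $\delta(S_h)$ being even, uses \cref{lem:420} and \cref{lem:correlation} on the edges $a,b,c$ that continue upward, and concludes that at least two of $f,g,h$ are good with $p\ge 1/16$. Nothing in your sketch produces this, and without it the no-bottom-edge case is not reduced to the configuration you analyze. (Your structural observation that $S'$ must be a degree cut when $\delta(S)$ has no bottom edge and no edge goes higher is correct and aligns with the paper's case split via \cref{fact:2cycle}/\cref{fact:2gohigherbottom}, but it is only the easy first step.)
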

The proof of the above theorem together with some strengthened statements will be in \cref{sec:probFacts}. The proof mainly exploits properties of strongly Rayleigh distributions.


As we hinted at, we will reduce the value of $y_e$ to 1/6 whenever an edge $e$ is even at last. However, since $e$ may also be on many other lower min-cuts, if we reduce $y_e$, the solution may not be feasible (\eqref{eq:tjoinlp} may be violated) as the lower min-cuts may be odd. To handle any lower min-cut $C$ that $e$ is on,  we show that, conditioned on $e$ being even at last, the probability that $C$ is also even is at least $q$ for some $q\geq \Omega(1)$. Therefore, we only need to worry about the lower cuts with probability $1-q$ each.   In the bad event that a lower cut $C$ is odd, we will need to fix the solution to guarantee that \eqref{eq:tjoinlp} still holds: our approach is to split the deficit introduced in the $O$-Join constraint for $C$ among the good edges that do not go higher (see \cref{def:gohigher}). We then simply show that in expectation each edge gains. 
This part of the proof heavily exploits the properties of cactus representation of the min-cuts that we discussed above.


We note that \cref{thm:mainprobabilistic} on its own is not enough to run our charging argument; so, we need a slightly stronger version. In particular, in some cuts we may need to have two or three good edges. 




\section{Probabilistic lemmas}\label{sec:probFacts}
In this section, we present three probabilistic lemmas which show that in every min-cut there is at least one good edge, (and in some there are even more).
This immediately proves \cref{thm:mainprobabilistic}.

Note the last cycle that we choose in step \ref{alg:step:cycle} of the algorithm has all edges even at last so we don't need to address it in this section. Furthermore, by \cref{fact:e+notindS}, $e^+$ does not belong to any critical cut.

\begin{restatable}[Bottom edge lemma]{lemma}{bottom}
\label{lem:bottom-highest}
Suppose that $e=(u,v)$ is a bottom edge. Then $e$ is good (where $p \ge 3/16$). \end{restatable}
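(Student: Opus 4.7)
By the characterization of ``even at last'' for bottom edges given in (the proof of) \cref{fact:criticalsat}, the event that $e$ is even at last is precisely the event that $T$ contains exactly one edge from each of the two cycle-partner pairs $(g_1,g_2)$ and $(h_1,h_2)$ on $\delta(S_e)$. The analogous condition for the internal companion pairs of $S_e$'s cycle is automatic, since the max-entropy spanning tree sampled on $E(S_e)$ picks exactly one edge from each internal parallel pair (the interior is a path with two parallel edges between each adjacent pair). So it suffices to show
\[
\P{|\{g_1,g_2\} \cap T| = 1 \ \text{and}\ |\{h_1,h_2\} \cap T| = 1} \ge \tfrac{3}{16}.
\]

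This is precisely the conclusion of \cref{lem:427} applied to $S_1 = \{g_1,g_2\}$ and $S_2 = \{h_1,h_2\}$. The two sets are disjoint, since their inner endpoints in $S_e$'s cycle -- call them $a$ and $c$ -- are distinct cycle vertices of $S_e$; each has size $2$; and every edge has $x$-value $1/2$, so $\E{|S_i \cap T|} = 1$ for $i = 1, 2$, matching the hypothesis of the lemma.

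The one point to verify is that \cref{lem:427} applies to our 1-tree distribution. Its proof uses only standard SR properties (negative association under conditioning on single-edge events, as in \cref{fact:updown}). Our $T$ is the union of independent $\lambda$-uniform spanning tree samples -- one for each critical set processed in the while loop -- together with the uniformly random Hamiltonian-cycle sample in the final cycle step. Each such factor is strongly Rayleigh by \cite{BBL09}, and the product of independent SR distributions on pairwise disjoint ground sets is again SR, since the joint generating polynomial is a product of real-stable polynomials in disjoint variables. Projecting onto $S_1 \cup S_2$ preserves the SR property, so \cref{lem:427} applies and yields the bound $3/16$. I do not anticipate a serious obstacle here: essentially all the work has been done by \cref{fact:criticalsat} (which reduces ``$e$ is even at last'' to a clean joint event about two disjoint cycle-partner pairs) and by \cref{lem:427} (which supplies exactly the $3/16$ lower bound for such a joint event).
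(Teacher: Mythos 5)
Your proposal is correct and takes essentially the same route as the paper: both reduce ``$e$ even at last'' to the joint event that each of the two cycle-partner pairs on $\delta(S_e)$ contributes exactly one edge to $T$ (the interior pairs of the cycle being handled automatically by the spanning tree on $E(S_e)$), and both then invoke \cref{lem:427} to get the $3/16$ bound. Your additional remarks verifying that \cref{lem:427} applies to the projected 1-tree distribution (product of independent SR measures on disjoint ground sets is SR, projection preserves SR) are correct and merely make explicit a point the paper leaves implicit.
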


\begin{figure}[!htbp]
\centering
\begin{tikzpicture}
\foreach \a/\b in {-3/1,0/0,3/1}{\draw[bleudefrance,fill=blizzardblue,thick] (\a,\b) circle (12pt);}
\node at (0,-0.7) {$S_{u,e}$};
\node at (3,0.3) {$S_{v,e}$};
\node [draw=none] at (-3.6,-0.5) () {$S_e$};
\draw[gray,xshift=-0.05cm,yshift=-0.15cm] (-3,1) to (-0.15,0.05); 
\draw[gray,xshift=0.05cm,yshift=--0.15cm] (-3,1) to (-0.15,0.05);

\draw[gray,xshift=-0.05cm,yshift=0.15cm] (0.15,0.05) to (3,1); 
\path[-] (0.15,0.05) edge[below,midway,thick,orange,xshift=0.05cm,yshift=-0.15cm] node {$e=(u,v)$} (3,1);

\path[-] (-3,1) edge[below,midway,thick,purple,xshift=-0.2cm,yshift=-0.05cm] node {$a$} (-4,2.5);
\path[-] (-3,1) edge[right,midway,thick,purple,xshift=-0.05cm,yshift=0.15cm] node {$b$} (-3,2.75);

\path[-] (3,1) edge[below,midway,thick,purple,xshift=0.2cm,yshift=0.05cm] node {$c$} (4,2.5);
\path[-] (3,1) edge[right,midway,thick,purple,xshift=0.05cm,yshift=0.15cm] node {$d$} (3,2.75);

\draw [black,line width=1.2pt, dashed] (0,1) ellipse (4.5 and 2);
\end{tikzpicture}

\caption{Illustration of edges in Bottom Edge Lemma.}
\label{fig:bottom}
\end{figure}


\begin{proof}
If $e$ is a bottom edge then $S_e$ ($= S_{(u,v)}$) is a cycle cut. By construction, when a tree on $S_e$ is
selected in step 8 of the algorithm, exactly one edge is chosen between every pair of adjacent
nodes in $E(S_e)$. So it suffices to consider the edges in $\delta(S_e)$. 
These divide up into two pairs of cycle partners connecting $S_e$ to $V \smallsetminus S_e$, say  $\{a, b\}$ and $\{c,d\}$. (See \cref{fig:bottom}.)
Then by \cref{lem:427}, setting $S_1:= \{a, b\}$ and $S_2:= \{c,d\}$,
we have $\P{|S_1 \cap T|= 1\text{ and } |S_2 \cap T| = 1} \ge 3/16$.
\end{proof}


\begin{restatable}[Top edge lemma]{lemma}{top}
\label{lem:top-good}
In a critical cut $\delta(S)$ with one edge that goes higher, of the remaining three edges in the cut, at least two are good with $p \ge \frac{1}{16}$.
\end{restatable}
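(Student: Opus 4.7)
The plan is to use the structural constraint that exactly one edge goes higher together with the strongly Rayleigh properties of tree samples at different hierarchy levels. First I would observe that the parent critical set $S'$ of $S$ in the hierarchy must be a degree cut, not a cycle cut: if $S'$ were a cycle cut, then by \cref{fact:min-cuts2} the contracted graph $G/(V\smallsetminus S')$ would be a cycle with two parallel edges between adjacent vertices, and $v_S$ would have either $0$ or $2$ edges to the outer contracted vertex $w=V\smallsetminus S'$, contradicting the hypothesis that exactly one edge of $\delta(S)$ goes higher. Consequently $e_1, e_2, e_3 \in E(S')$ are all incident to $v_S$, each is a top edge with $S_{e_i} = S'$, and the last cuts of $e_i$ are $\delta(S)$ and $\delta(T_i)$, where $T_i$ is the (contracted) child of $S'$ at the other endpoint of $e_i$.

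Let $\mathcal{E}_i$ denote the event that $e_i$ is even at last, and set $d_S = |T \cap \{e_1,e_2,e_3\}|$, $d_{T_i}$ the number of tree edges in $\delta_{E(S')}(v_{T_i})$, $X_0 = \mathbf{1}[e_0 \in T]$, and $Y_i = |T \cap (\delta(T_i) \smallsetminus E(S'))|$. By \cref{fact:independencecritical}, the $S'$-level sample is independent of all higher-level samples, hence
\begin{equation*}
P(\mathcal{E}_i) \;=\; \sum_{a,b \in \{0,1\}} P_{S'}\bigl(d_S \equiv a,\ d_{T_i} \equiv b \bmod 2\bigr)\cdot P_{\text{high}}\bigl(X_0 \equiv a,\ Y_i \equiv b \bmod 2\bigr).
\end{equation*}
The main step is to lower-bound this expression by $1/16$ for two of the three indices. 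For each $i$, pick $j \in \{1,2,3\}\smallsetminus\{i\}$ and a pair $\{f,g\} \subset \delta_{E(S')}(v_{T_i}) \smallsetminus \{e_i\}$ when one exists. The pairs $\{e_i,e_j\}$ and $\{f,g\}$ are disjoint subsets of $E(S')$, each with marginal sum $1$, so \cref{lem:427} gives $P(|\{e_i,e_j\}\cap T|=1,\ |\{f,g\}\cap T|=1) \geq 3/16$. This event pins down a specific parity combination $(a,b)$ for $(d_S,d_{T_i})$, and combining with the higher-level probability of the matching parity (which is at least $1/3$ via the SR analysis of the higher-level sample, using that each of $e_0$ and the edges of $\delta(T_i)\smallsetminus E(S')$ has marginal $1/2$) yields $P(\mathcal{E}_i) \geq (3/16)\cdot(1/3) = 1/16$. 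Since for each pair $\{i,j\}\subset\{1,2,3\}$ we can construct such a pair $\{f,g\}$ for at least one of $i,j$, the pigeonhole observation ``if at least one of every pair is good then at most one is bad'' gives the required 2 out of 3 conclusion.

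The main obstacle is the structural case analysis around $T_i$. When $\delta_{E(S')}(v_{T_i})$ has fewer than $3$ edges (so no valid disjoint pair $\{f,g\}$ exists), when two or more $v_{T_i}$'s coincide (so multiple $e_i$'s share the same last cut $\delta(T_i)$), or when $T_i$ is itself a cycle cut (changing how edges inside $T_i$ interact with its boundary), the direct pair choice above must be adapted. In the coincidence case the event $\mathcal{E}_i$ equals $\mathcal{E}_{i'}$ for the two coinciding edges and the claim reduces to showing a single good edge; in the cycle-cut case one uses the bottom-edge argument on edges incident to $v_{T_i}$ inside its cycle. Verifying that the $1/16$ bound survives across these structural cases—and confirming in each case that one can always choose two edges among $e_1,e_2,e_3$ for which the argument goes through—is the primary technical burden.
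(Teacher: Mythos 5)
Your proposal takes a genuinely different route from the paper (which does not mention the degree-cut structure of $S'$ at all, and instead runs a three-way case split on how many of $S_f, S_g, S_h$ have an edge going higher, using \cref{lem:cut_even} for one case and \cref{lem:420} plus \cref{lem:correlation} for the other two), but as written it has a concrete gap at its core.

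The central problem is the claim that the event $\bigl\{|\{e_i,e_j\}\cap T|=1,\ |\{f,g\}\cap T|=1\bigr\}$ ``pins down a specific parity combination $(a,b)$ for $(d_S,d_{T_i})$.'' It does not. Writing $\{e_1,e_2,e_3\}=\{e_i,e_j,e_k\}$, we have $d_S = |\{e_i,e_j\}\cap T| + \mathbf{1}[e_k\in T] = 1 + \mathbf{1}[e_k\in T]$, so the parity of $d_S$ depends on the unconditioned edge $e_k$. Likewise, if $\delta_{E(S')}(v_{T_i})=\{e_i,f,g\}$, then $d_{T_i}=\mathbf{1}[e_i\in T]+|\{f,g\}\cap T|=\mathbf{1}[e_i\in T]+1$, and whether $e_i$ is in $T$ is not determined by $|\{e_i,e_j\}\cap T|=1$. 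So the event from \cref{lem:427} does not select a definite pair $(a,b)$ to match at the higher level, and the product $(3/16)\cdot(1/3)$ is not computing the probability of the event $\mathcal{E}_i$. In addition, the ``at least $1/3$'' lower bound for the higher-level sample matching a prescribed parity pair $(X_0, Y_i)$ is asserted but not proved; nothing in the paper's toolkit gives you this for free, and the paper's own proof carefully sidesteps it by only ever needing to fix a single cut's parity via one independent edge at a time.

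One thing in your favor: your observation that $S'$ must be a degree cut (since one edge of $\delta(S)$ going higher rules out a cycle cut) is correct and actually lets you dispose of several of the structural cases you flag as unresolved. For instance, if $T_i$ had two edges in $\delta(S')$ then \cref{fact:2cycle} would force $S'$ to be a cycle cut, so $\delta_{E(S')}(v_{T_i})$ always has at least three edges; and two of the $T_i$'s coinciding would produce a proper min-cut $S\cup T_i$ inside the degree cut $S'$, a contradiction. But neither of these repairs the parity-pinning step, and the pigeonhole closing (``at least one of every pair is good'') rests on the unestablished $1/16$ bound, so the conclusion does not follow. You would need to either condition individually on the status of $e_i$ (splitting into subcases), or switch to the paper's mechanism: condition $e_0\in T$ and $|\{e_1,e_2,e_3\}\cap T|=1$ via \cref{lem:420}, then use \cref{lem:correlation} on the higher-going edges out of the $T_i$'s to find two with conditional probability at least $1/4$ of fixing their respective $\delta(T_i)$.
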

\begin{proof}


\begin{figure}[!htbp]
\centering
\begin{tikzpicture}
\foreach \a/\b/\c in {-3/e/3,-1/a/2,1/b/2,3/c/2}{
\draw[bleudefrance,fill=blizzardblue,thick] (\a,0) circle (12pt);
	\path[-] (\a,0) edge[right,midway,thick,purple,yshift=0.5em] node {$\b$} (\a,\c);
}
\path[-] (-3,0) edge[above,midway,thick,goodgreen] node {$f$} (-1,0);
\path[-] (-3,0) edge[below,midway,thick,goodgreen,bend right=25] node {$g$} (1,0);
\path[-] (-3,0) edge[below,midway,thick,goodgreen,bend right=38] node {$h$} (3,0);

\draw [black,line width=1.2pt, dashed] (0,-0.4) ellipse (4.2 and 1.3);
\node[black] at (-3.8,0.6) {$\delta(S)$};

\draw [black,line width=1.2pt, dashed] (0,0) ellipse (5.2 and 3);
\end{tikzpicture}

\caption{Illustration of top lemma. The figure shows the case where all three of $S_f$, $S_g$ and $S_h$ have an edge that goes higher ($a,b,c$ respectively). It may be that some number of these nodes do not have an edge going higher.}
\label{fig:top}
\end{figure}

First, suppose that $e$ is the edge that goes higher from $\delta(S)$, and $f$, $g$ and $h$ are the other edges
in $\delta(S)$. 

If the other endpoint of one of these three edges, say $S_h$, has no edge that goes higher then $h$ is good. (See left side of \cref{fig:top}.) To see this, observe that we can condition on $\delta (S_h)$ being 
even which by \cref{lem:cut_even} has probability at least 13/27. Given this event,
$|\{f,g,h\}\cap T|$ is either even or odd. In either case,
the event $e \in T$ is an independent event that occurs with probability 1/2. Therefore,
$$\P{ |\delta(S_h) \cap T|\text{ even}} \cdot\P {e\text{ makes }\delta(S) \text{ even} ~|~|\delta(S_h) \cap T |\text{ even}} \ge \frac{13}{27}\cdot\frac{1}{2}.$$

Therefore, at least two of $f$,$g$ and $h$ are good if  at least two of $S_f$, $S_g$ and $S_h$ do not have an edge that goes higher.

Consider next the case that, say, $S_f$ and $S_g$ both have an edge that goes higher, but $S_h$ doesn't. As before, $h$ is good. We claim that one of $f$ and $g$ is also good.
To see this, since $e\in T$ is independent of $f,g,h \in T$, we 
observe using \cref{lem:420} that $$\P{e \in T \text{ and } |\{f,g,h\}\cap T|
 = 1} \ge \frac{1}{2} \cdot \frac{1}{2}.$$
Next, apply \cref{lem:correlation} to $a$ and $b$ to conclude that for one of $a$ and $b$, say $a$, 
$$\P{a\in T | e \in T} \ge \frac{1}{4}.$$
Therefore, as before, regardless of the even/odd status of $X\cup f$ after conditioning $e$ in and $f,g,h$ to 1, the cut can be fixed by $a$, meaning we have $p \ge \frac{1}{16}$.

Finally, if all three of $S_f$, $S_g$ and $S_h$
have an edge that goes higher, then again, condition $e$ in and $\{f,g,h\}\cap T|
 = 1$. Then 
apply \cref{lem:correlation} to $a$, $b$ and $c$ to conclude that for two of them, say $a$ and $b$, their probability of being in $T$ given that $e \in T$ is at least 1/4. 
Therefore, each can fix their corresponding cut ($\delta(S_f)$ for $a$ and $\delta(S_g)$ for $b$) and both $f$ and $g$ are good.
\end{proof}

\begin{lemma}
For every critical cut $S$, there is an edge $e\in\delta(S)$ such that $\P{e\text{ even at last}} \geq 1/27$.
\end{lemma}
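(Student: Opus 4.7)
The plan is a case analysis on $\delta(S)$ based on whether it contains a bottom edge and how many of its edges go higher (at most two, by \cref{fact:2cutinter}). The easy cases are handled by the preceding two lemmas. If $\delta(S)$ contains a bottom edge, \cref{lem:bottom-highest} gives an edge that is even at last with probability at least $3/16 > 1/27$. If $\delta(S)$ has exactly one edge going higher, \cref{lem:top-good} produces two good edges each with probability at least $1/16 > 1/27$. If $\delta(S)$ has two edges going higher, then by \cref{fact:2cycle} the parent $S'$ of $S$ is a cycle cut, in which case the other two edges of $\delta(S)$ satisfy $S_e = S'$ and are therefore bottom edges, reducing to the first case.

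The only remaining configuration is that $\delta(S)$ has no bottom edges and no edges going higher. Here the parent $S'$ must be a degree cut, since for every top edge of $\delta(S)$ that does not go higher we have $S_e = S'$, which must be a degree cut by the definition of a top edge. I claim moreover that the four edges of $\delta(S)$ connect $S$ to four distinct siblings $C_1, C_2, C_3, C_4$ of $S$ among the children of $S'$: otherwise, if two edges of $\delta(S)$ ended at the same sibling $C$, then $\{S, C\}$ would be a proper min-cut inside the contracted degree cut $S'$, contradicting the definition of a degree cut. For each edge $e = (u, v) \in \delta(S)$ with $v \in C_i$, the edge $e$ has $S_{u,e} = S$ and $S_{v,e} = C_i$, so $e$ is even at last iff $\delta(S)$ and $\delta(C_i)$ are simultaneously even.

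To establish the $1/27$ bound in this last case, I would condition on the state of $e$ (each outcome occurring with probability $1/2$, with the conditional distribution remaining strongly Rayleigh). By \cref{fact:independencecritical}, $|\delta(C_i) \cap T|$ decomposes into the contribution from the tree on $S'$ (which also determines all of $|\delta(S) \cap T|$) plus an independent contribution from trees sampled in strictly higher iterations. I would then apply \cref{fact:SRsumindependentBernoulli} and \cref{thm:hoeffding} to the relevant rank sequences to lower bound the joint parity probability in the worst case. The bound $(1/3)^3 = 1/27$ should arise from a Hoeffding-minimizing configuration of the $(1, 1/3, 1/3, 1/3)$ form analogous to the one producing the $13/27$ bound in \cref{lem:cut_even}. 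I expect the main obstacle to be the joint parity analysis itself: although the edges $\delta(S) \smallsetminus \{e\}$ and $\delta(C_i) \smallsetminus \{e\}$ are disjoint, they are not independent under the SR distribution on $E(S')$ even after conditioning on $e$, so the argument requires carefully separating the contributions of the tree on $S'$ and of strictly higher trees and handling their joint behavior.
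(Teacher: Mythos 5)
The case analysis in the first three situations matches the paper and is correct. But the proposal does not actually prove the lemma: in the final case (no bottom edges, no edge of $\delta(S)$ going higher) you outline a strategy and then explicitly concede that you do not see how to carry out the joint-parity analysis. That is exactly the hard part, and it is not a routine application of \cref{fact:SRsumindependentBernoulli} and \cref{thm:hoeffding} to two overlapping cuts. The conditioning you propose --- on whether $e \in T$ --- does not decouple $\delta(S)\smallsetminus\{e\}$ from $\delta(C_i)\smallsetminus\{e\}$, and you correctly observe as much; so as written the argument stops short of the $1/27$ bound.

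The missing idea in the paper is a different conditioning: work in the measure $\mu'$ obtained by conditioning on $|\delta(S)\cap T^-|=1$. Because $T^-$ is a spanning tree and $S$ is a single (contracted) vertex at the time the tree on $S'$ is sampled, conditioning on $S$ having degree exactly $1$ makes the identity of the single edge of $\delta(S)$ in the tree \emph{independent} of the rest of $T^-$; one can then pick an edge $e$ with $1/4 \le \PP{\mu'}{e\in T}\le 1/2$, and the inclusion--exclusion
$\P{\delta(S),\delta(S')\text{ even}} \ge 13/27 - \P{\delta(S)\text{ odd}} + \P{|T\cap\delta(S)|=1 \wedge \delta(S')\text{ odd}}$
together with $\P{\delta(S')\text{ odd}\,|\,|T\cap\delta(S)|=1}\ge 1/4$ (from the independence of $e$ under $\mu'$) and a final Hoeffding minimization over the rank sequence of $|T\cap\delta(S)|$ yields the $1/27$. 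Without this conditioning trick, or a replacement for it, the last case --- which is the substantive content of the lemma --- remains open in your write-up.

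A smaller point: your reduction of the ``two edges go higher'' subcase is fine, but the paper folds it more directly into the first case via the condition $|\delta(S)\cap\delta(S')|=2$, which also captures the situation where two edges of $\delta(S)$ go to the same \emph{lower} critical set; it is worth noting that that situation is also swallowed by the bottom-edge case, which your phrasing (``contains a bottom edge'') does handle, so no gap there.
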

\begin{proof}
First, if there is a critical set $S'$ ($S'\neq S)$ such that $|\delta(S)\cap\delta(S')|=2$, then by \cref{fact:2gohigherbottom}, there exist at least two bottom edges in $\delta(S)$ and by \cref{lem:bottom-highest} they are good and we are done. Furthermore, if there is an edge in $\delta(S)$ that goes higher then we are done by \cref{lem:top-good}.

	Otherwise, assume $S$ has at most one edge to any other critical set and no edge goes higher. Recall that $T^-=T\setminus e^+$ is always a spanning tree.
	Let $\mu'$ be the conditional measure where $|\delta(S)\cap T^-|=1$. Observe that in $\mu'$, first we sample a tree\footnote{Note $T$ is not a spanning tree in $G\smallsetminus S$, that is why we need to look at $T^-$.} in $G\smallsetminus S$, and then we {\em independently} add an edge in $\delta(S)$. Therefore,  by \cref{fact:updown}, $\PP{\mu'}{e\in T}\leq 1/2$. 
	Thus, for at least one edge  $e\in \delta(S)$, we have $1/4\leq \PP{\mu'}{e\in T}\leq 1/2$.  In the special case that $\P{|\delta(S)\cap T|=1}=0$, we just let $e$ be an arbitrary edge in $\delta(S)$. Say, $\{e\}= \delta(S)\cap \delta(S')$.
		It remains to prove that $\PP{\mu}{\delta(S),\delta(S') \text{ even}}>1/27$ since $S,S'$ are the last cuts of $e$. 
	\begin{eqnarray*}
		\P{\delta(S),\delta(S') \text{ even}} &=& 1- \P{\text{$\delta(S)$ or $\delta(S')$ odd}}\\
		&=& 1- \P{\delta(S) \text{ odd}} - \P{\delta(S') \text{ odd}} + \P{\delta(S),\delta(S') \text{ odd}}\\
		&\geq & 13/27 - \P{\delta(S) \text{ odd}} + \P{|T\cap \delta(S)|=1 \wedge \delta(S')\text{ odd}},
	\end{eqnarray*}
	by \cref{lem:cut_even}. First, note if $\P{|T\cap \delta(S)|=1}=0$, then  we get that $\P{|T\cap \delta(S)|=2}=1$ (since $\P{|\delta(S)\cap T|\geq 1}=1$, and $\E{|\delta(S)\cap T|}=2$). So, the RHS is $13/27$ and we are done.
	
	Otherwise, 
	\begin{eqnarray*} \P{|T\cap \delta(S)|=1 \wedge \delta(S')\text{ odd}} &=& \P{\delta(S')\text{ odd} \Big| |T\cap\delta(S)|=1}\cdot \P{|T\cap \delta(S)|=1}\\
		&\geq&  \frac14\cdot\P{|T\cap \delta(S)|=1}.
	\end{eqnarray*}
	The inequality is because edge $e$ can make $\delta(S')$ odd by being in/out of the tree and that has probability at least $1/4$.
Therefore,
	\begin{eqnarray*}	
		\P{\delta(S),\delta(S') \text{ even}}&\geq & 13/27 - \P{\delta(S)\text{ odd}} + \frac14 \P{|T\cap \delta(S)|=1}\\
		&=& 13/27 - \frac34\P{|T\cap \delta(S)|=1} - \P{|T\cap \delta(S)|=3}.
	\end{eqnarray*}
	Finally, by \cref{fact:e1}, the RHS attains its minimum value when $|T\cap \delta(S)|$ is sum of 4 Bernoullis with success probabilities $1,1/3,1/3,1/3$.
\end{proof}

\section{Proof of Main Theorem}



Recall that every good edge is even on top with probability at least $p$. The following statement is the main technical result of this section.
\begin{lemma}\label{prop:goodojoin}
There is a (random) feasible $O$-join solution such that for every good edge $e$,
$$\E{y_e} \leq 1/4-p/240,$$	
and for every bad edge $y_e=1/4$ with probability 1.
\end{lemma}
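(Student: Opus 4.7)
The plan is to define $y$ in two stages: reduction and repair. Initialize $y_e = \tfrac{1}{4}$ for all edges. In the reduction stage, for each good edge $e$ that is even at last in the sampled 1-tree $T$, decrease $y_e$ by $\tfrac{1}{12}$ (so $y_e = \tfrac{1}{6}$); this alone yields expected savings of at least $p/12$ per good edge but may violate the $O$-join constraint on lower min-cuts that happen to be odd in $T$.

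In the repair stage, for each critical min-cut $C = \delta(S)$ that is odd in $T$ and has $y(C) < 1$, choose a non-empty set $F_C$ of good edges in $C$ that do not go higher (in the sense of \cref{def:gohigher}), and update $y_f \leftarrow y_f + (1 - y(C))/|F_C|$ for each $f \in F_C$. Existence of $F_C$ will follow from strengthened versions of \cref{thm:mainprobabilistic}. Since only good edges are ever reduced or charged, bad edges retain $y_e = \tfrac{1}{4}$ with probability $1$. Feasibility is then immediate: even min-cuts impose no $O$-join constraint; odd min-cuts are repaired explicitly to $y(C)\geq 1$; and non-min-cuts have $|\delta(S)| \geq 6$ by Eulerianity, so $y(\delta(S)) \geq 6 \cdot \tfrac{1}{6} = 1$ because $y_e \geq \tfrac{1}{6}$ always.

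The heart of the argument is the per-edge cost bound. For a good edge $e$,
\[
\E{y_e} \;=\; \tfrac{1}{4} \;-\; \tfrac{1}{12}\,\P{e\text{ even at last}} \;+\; \E{\mathrm{repair}(e)},
\]
and $\P{e\text{ even at last}} \geq p$, so it suffices to show $\E{\mathrm{repair}(e)} \leq \tfrac{p}{12} - \tfrac{p}{240} = \tfrac{19p}{240}$. This is the main obstacle. The plan is to (i) use the cactus hierarchy to enumerate the few lower critical cuts $C$ for which $e \in F_C$, exploiting that ``not going higher'' pins $e$ to a single parent cut in the hierarchy; (ii) apply the independence properties of \cref{fact:independencecritical,fact:efinout} to decouple the randomness of trees sampled in disjoint critical subgraphs; and (iii) invoke the strongly-Rayleigh parity bounds (\cref{lem:cut_even,lem:correlation,lem:420}) to upper bound, conditional on $e$ being reduced, both the probability that each relevant $C$ is odd and $e$'s expected share of the ensuing deficit $1-y(C)$. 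The constants $\tfrac{1}{12}$ and $\tfrac{1}{240}$ are tightly balanced, so this amortized charging must exploit global structural features of the cactus rather than purely local probabilistic arguments, and is likely where the bulk of the case analysis will live.
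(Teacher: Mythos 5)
Your high-level skeleton (initialize at $1/4$, reduce good edges that are even at last, repair odd cuts by charging good edges in them, and bound the per-edge expected repair) matches the paper's, but two devices the paper actually relies on are missing from your plan, and their absence is not a detail you can fix later --- it breaks the arithmetic.

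First, you reduce every good edge by the same amount $1/12$. The paper instead uses \emph{different} reduction amounts ($\beta$ for bottom edges, $\tau_2$ or $\tau_3$ for top edges depending on how many good edges the incident critical cuts contain), subject to constraints like $\beta\geq 5\tau_2/4$ and $3\tau_3\leq 2\tau_2$, and the final choice is $\beta=1/12$, $\tau_2=7/120$, $\tau_3=7/180$. With a uniform $1/12$ those constraints fail, and the corresponding case analysis gives a \emph{negative} net gain: in the analogue of \cref{lem:bottomedgeojoin} Case~(ii), a bottom edge $e$ is charged by two good top neighbors each of which is reduced by $1/12$ (rather than the smaller $\tau_2$), and the bound $3\beta/4-\tau_2$ becomes $3/48-4/48<0$. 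So a bottom edge ends up paying more in repair than it gains in reduction, and $\E{y_e}\leq 1/4-p/240$ is not achieved. The asymmetry $\beta>\tau_2>\tau_3$ is exactly what makes the amortization close, and it is not something you can derive post hoc by "exploiting global cactus structure"; it has to be designed into the $O$-join construction from the start.

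Second, you condition the reduction on $e$ being even at last but do not normalize the reduction probability. The paper introduces independent Bernoulli variables $B_e$ with success probability $p/p_e$ and only reduces when $e$ is even at last \emph{and} $B_e=1$, so every good edge is reduced with probability exactly $p$. Without this, an edge $f$ with $p_f\gg p$ charges its neighbors in proportion to $p_f$, but those neighbors' own gains are only proportional to their own $p_e$, and the claim $\E{\mathrm{repair}(e)}\leq 19p/240$ cannot be established uniformly. (The paper also pairs companion bottom edges via $B_f=B_e$, which is what makes the companion argument in \cref{lem:bottomedgeojoin} Case~(i) go through.) A third smaller discrepancy: the paper increases $y_e$ by the \emph{maximum} of the two shares $\Delta(C)/|G_C|$ and $\Delta(C')/|G_{C'}|$ over $e$'s two last cuts, not the sum over all odd cuts it belongs to; your description implicitly sums, which loses a further constant. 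None of these issues are about polish --- they are the load-bearing ideas of the paper's proof of this lemma.
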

Before, proving the above statement we use it to prove \cref{thm:main}.
\begin{proof}[Proof of \cref{thm:main}]
Consider the trivial $O$-join solution $y'$ where $y'_e=1/2$ if $e$ is good and $y'_e=1/6$ otherwise. Note that this is a valid $O$-join by \cref{thm:mainprobabilistic}.
Now, define $z=\alpha y + (1-\alpha)y'$ for some $\alpha$ that we choose later. It follows that for any good edge $e$,
$$\E{z_e} \le \alpha(\frac{1}{4} - \frac{p}{240}) + (1-\alpha)\frac{1}{2},$$
and for a bad edge $f$:
$$\E{z_f} = \alpha\frac{1}{4} + (1-\alpha)\frac{1}{6}$$
So, for $p=\frac{1}{27}$ and $\alpha = \frac{2160}{2161}$ we obtain $\E{z_e}\leq 0.249962$. 
Since any edge $e$ is chosen in $T$ with probability $1/2$ (up to a $2^{-n})$ error), we pay at most $1/2+\E{z_e}$ for any edge $e$ whereas $x$ pays $1/2$. Therefore, we get a $0.749962/0.5$ approximation algorithm.
\end{proof}
So, in the rest of this section we prove \cref{prop:goodojoin}.
\paragraph{O-join construction for good edges:}
For each good  edge $e$, define $B_e$ to be an independent Bernoulli random variable which is 1 with probability $p/p_e$, where $p$ is the {\em lower bound} on the probability that {\em any} good edge is even on top,  and $p_e$ is the {\em actual} probability that $e$ is even on top.
If $e,f$ are bottom edge companions, then we let $B_f=B_e$ (with probability 1). 
Note that this still makes selection of $e,f$ independent of $B_e$ and any other edge of the graph. 

We then construct an $O$-join solution for each 1-tree $T$ using the following three step process:
\begin{enumerate}
\item Initialize $y_e := 1/4$ for each edge $e \in E$.
\item Next, {if $e$ is even at last in $T$ and $B_e=1$}, reduce $y_e$ by $r_e(T)$ where:
 $$
r_e(T):= \begin{cases}
\beta &\text{ if $e$ is a bottom edge.}\\
\tau_2 & \text{if $e=\{u,v\}$ is a good  top edge and there are exactly 2 good top edges}\\ & \text{ in both $\delta(S_u)$ and $\delta(S_v)$ that do not go higher.}\\
\tau_3 & \text{if $e$ is a good top edge that does not meet the previous criteria.}
\end{cases}
$$
$\beta,\tau_2,\tau_3$ are parameters we will set later. For now, we just assume $\tau_3\leq \tau_2\leq \beta \leq 1/12$. When $r_e(T) > 0$, we say that $e$ is \textbf{reduced}.

\item On each cut $C$ that is odd, let $\Delta(C):= \sum_{e \in C} r_e (T)$ be the amount by which edges on that cut were reduced in step  2, and let $G_C$ be the set of good edges on $C$ such that $C$ is one of their last cuts.  
Now, for an edge $e\in G_C$, let $C'$ (and $C$) be the last cuts of $e$.
Then increase $y_e$  by $\max\left\{\frac{\Delta(C)}{|G_C|}, \frac{\Delta(C')}{|G_{C'}|}\right\}$. 
Notice that in this case, since $C$ is one of $e$'s top cuts, $e$ is not even on top in $T$ and therefore is not reduced in step 2. 
\end{enumerate}

 By construction, this is a valid $O$-join solution since on every min-cut we began with at least 4 edges crossing every cut with $y_e = 1/4$ and then guaranteed that every reduction on an odd cut in step 2 was compensated for by a matching increase on that cut in step 3. (The ultimate gain of course will come from the fact that many cuts will be even and hence, there will not need to be an increase.)  
    All non-min-cuts have at least 6 edges on them, each of value at least 1/6 (after reduction) and are therefore satisfied in \eqref{eq:tjoinlp}.

In the rest of the proof, it is enough to show that for any good edge $e$,
 $\E{y_e}\leq 1/4-p/240.$ 
We complete the proof using the following two lemmas.

\begin{lemma}\label{lem:topedgejoin}
If $\beta \geq 5\tau_2/4$, then for any good top edge  $e=\{u,v\}$, 
$$\E{y_e}\leq \frac14 - p\min\{\tau_2-\beta/2,\tau_3-5\beta/12\}.$$
\end{lemma}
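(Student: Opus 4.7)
The plan is to decompose $\E{y_e} = \tfrac{1}{4} - \E{\text{reduction}_e} + \E{\text{increase}_e}$ and bound each piece. For the reduction: $r_e \cdot \mathbf{1}_{\{e \text{ even at last}\}} \cdot B_e$. Since $B_e$ is an independent Bernoulli with $\P{B_e = 1} = p/p_e$ (where $p_e := \P{e \text{ even at last}}$), the expected reduction equals $r_e \cdot p_e \cdot (p/p_e) = p \cdot r_e$, i.e.\ $p\tau_2$ in Case A ($r_e = \tau_2$) and $p\tau_3$ in Case B ($r_e = \tau_3$).

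For the increase, the two last cuts of the top edge $e = \{u,v\}$ are the critical degree cuts $\delta(S_u), \delta(S_v)$. Using $\max \le \text{sum}$, I will bound the expected increase by $\sum_{C \in \{\delta(S_u), \delta(S_v)\}} \E{\mathbf{1}_{\{C \text{ odd}\}} \Delta(C)}/|G_C|$. The key observation is that an edge $f \in C$ contributes to $\mathbf{1}_{\{C \text{ odd}\}} \Delta(C)$ only if $C$ is not one of $f$'s last cuts, since otherwise $f$ being even at last would force $C$ to be even. I then argue that at most one such $f$ exists on $\delta(S_u)$: first, because $e$ is a top edge not going higher from $\delta(S_u)$, the parent of $S_u$ in the hierarchy must be a degree cut, for otherwise \cref{fact:min-cuts2} would force every non-going-higher edge on $\delta(S_u)$ to be a bottom companion, contradicting that $e$ is a top edge; second, \cref{fact:2cutinter,fact:2cycle} together imply at most one edge of $\delta(S_u)$ goes higher to the parent; third, any bottom edge on $\delta(S_u)$ whose cycle ancestor $S_b$ strictly contains the parent must also lie in the parent's cut (its far endpoint sits in a cycle node disjoint from the parent), hence coincides with this unique going-higher edge. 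Since that edge has reduction at most $\beta$, this yields $\E{\mathbf{1}_{\{C \text{ odd}\}} \Delta(C)} \le \beta p$ for each $C \in \{\delta(S_u), \delta(S_v)\}$.

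To finish, I need to lower bound $|G_C|$. By the definition of $r_e$, in Case A we have $|G_{\delta(S_u)}| = |G_{\delta(S_v)}| = 2$; in Case B, the Top Edge Lemma forces at least one of the two cuts to have $|G_C| \ge 3$. A careful accounting that uses the max (rather than the sum) across the two last cuts together with these cardinality bounds yields $\E{\text{increase}_e} \le p\beta/2$ in Case A and $\E{\text{increase}_e} \le 5p\beta/12$ in Case B; substituting back gives the two terms of the minimum. The hypothesis $\beta \ge 5\tau_2/4$ ensures $r_f \le \beta$ uniformly dominates the alternative $r_f \le \tau_2$ for top going-higher edges. The hard part will be squeezing out the precise constants $\beta/2$ and $5\beta/12$: a naive $\max \le \text{sum}$ with $|G_C| \ge 2$ only yields the weaker bound $p\beta$ in Case A, so I expect to need a tighter bound on the joint probability of $\delta(S_u)$ being odd and the going-higher edge being even at last, together with a sub-case analysis in Case B splitting on which side of $e$ has the larger $|G_C|$.
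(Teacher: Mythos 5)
Your high-level decomposition ($\E{y_e}=\tfrac14-\E{\text{reduction}}+\E{\text{increase}}$, bounding the increase by the sum over the two last cuts, and observing that at most one edge per last cut --- the unique edge going higher --- can contribute to $\Delta(C)$ while $C$ is odd) is exactly the paper's structure, and your argument that this edge is unique is correct. However, there are two concrete gaps that you yourself flag but do not close, and they are precisely where the lemma's content lives.

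First, you bound $\E{\mathbf{1}_{\{C\text{ odd}\}}\Delta(C)}\le \beta p$ by dropping the conditional probability $\P{C\text{ odd}\mid f\text{ reduced}}$ entirely. The paper's proof hinges on computing this: if the going-higher edge $f$ is a bottom edge, \cref{fact:efinout} gives exactly $\tfrac12$; if $f$ is a top edge (with $r_f=\tau_2$), \cref{lem:420} together with the independence of $f$ from the rest of $C$ gives $\le\tfrac58$, and the hypothesis $\beta\ge5\tau_2/4$ is used precisely to make $\tau_2\cdot\tfrac58\le\beta\cdot\tfrac12$, reducing both subcases to a per-cut contribution of $\beta p/2$, hence $\beta p/(2|G_C|)$. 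You note you ``expect to need a tighter bound on the joint probability,'' but that bound \emph{is} the proof; without it you only get $\tfrac14-p(\tau_2-\beta)$ in Case A, which is weaker than claimed. Second, your Case B analysis asserts that the Top Edge Lemma forces $|G_C|\ge3$ on at least one side, but this is false: the paper's case (iii) has $|H_{C'}|=0$, in which case $G_{C'}$ can be as small as $\{e\}$ and \cref{lem:top-good} (which requires a going-higher edge) gives nothing. The saving grace there is instead that $H_{C'}=\emptyset$ forces $\Delta(C')=0$ whenever $C'$ is odd, so that side contributes nothing to the increase --- a distinct argument your proposal does not make. In short, the skeleton is right but the two quantitative steps (the $1/2$ vs.\ $5/8$ conditional parity bound, and the $H_{C'}=\emptyset\Rightarrow\Delta(C')=0$ observation) are missing.
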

\begin{proof}
Let $C:= \delta (S_u)$ and $C':= \delta (S_v)$. Recall that, since $e$ is a top edge, $C$ and $C'$ are critical cuts. 
Let $H_C$ (resp. $H_{C'}$) be the good edges in $C$ (resp. $C'$) that go higher.  
Note that $|H_C|$  and $|H_{C'}|$ is either 0 or 1 by \cref{fact:2cycle}. We consider 3 cases:

\begin{description}
\item [Case (i): $|G_C| = |G_{C'}| = 2$.]
In the worst case, there is an edge, say $f\in H_C$ and an edge $g\in H_{C'}$.
If $f$ is a bottom edge, then
$$\E{r_f(T)} = \beta\cdot p\quad\text{ and }\quad
\P{C\text{ is odd} | f \text{ reduced}} = 1/2$$
since $$\P{\text{parity of }|f \cap T| = \text{ parity of }|(\delta(S_u) \smallsetminus f) \cap T|} = 1/2,$$
by \cref{fact:efinout}.
Therefore, the expected reduction in step 2 on $C$ is at most
$$ \E{r_f(T)}\cdot\frac{1}{|G_C|}\cdot \P{C\text{ is odd} | f \text{ reduced}} = 
\beta p\cdot \frac{1}{2}\cdot\frac{1}{2} = \frac{\beta\cdot p}{4}.$$ 
On the other hand, if $f$ is a top edge,
then the expected reduction in step 2 on $C$ is at most
$$ \E{r_f(T)}\cdot\frac{1}{|G_C|}\cdot \P{C\text{ is odd} | f \text{ reduced}} \le
\tau_2 p\cdot \frac{1}{2}\cdot\frac{5}{8}  \leq \frac{\beta p}{4},$$
where we use \cref{lem:420} and the fact that edge $f$ is independent of the rest of edges in $C$ to infer that , $\P{C\text{ is odd} | f \text{ reduced}}\leq 5/8$.
The same reasoning applies to $g$, so we get 

$$\E{y_e} \le  \frac{1}{4} - \tau_2p + 2\cdot \frac{\beta p}{4}.$$

\begin{figure}[!htbp]
\centering
\begin{tikzpicture}
\foreach \a/\b/\c in {-1/f/2,1/g/2}{
\draw[bleudefrance,fill=blizzardblue,thick] (\a,0) circle (12pt);
	\path[-] (\a,0) edge[right,midway,thick,purple,yshift=0.5em] node {$\b$} (\a,\c);
}
\path[-] (-1,0) edge[above,midway,thick,blue,xshift=-0.3em,yshift=0.1em] node {} (-2,0.25);
\path[-] (-1,0) edge[below,midway,thick,orange,xshift=-0.3em,yshift=-0.1em] node {} (-2,-0.25);
\path[-] (1,0) edge[above,midway,thick,blue,xshift=0.3em,yshift=0.1em] node {} (2,0.25);
\path[-] (1,0) edge[below,midway,thick,orange,xshift=0.3em,yshift=-0.1em] node {} (2,-0.25);
\path[-] (-1,0) edge[below,midway,thick,orange] node {$e$} (1,0);

\draw [black,line width=1.2pt, dashed] (0,-0.2) ellipse (2.5 and 1.1);
\node[black] at (-2.7,0.6) {$\delta(S_{e})$};

\end{tikzpicture}\qquad
\begin{tikzpicture}
\foreach \a/\b/\c in {-1/f/2,1/g/2}{
\draw[bleudefrance,fill=blizzardblue,thick] (\a,0) circle (12pt);
	\path[-] (\a,0) edge[right,midway,thick,purple,yshift=0.5em] node {$\b$} (\a,\c);
}
\path[-] (-1,0) edge[above,midway,thick,orange,xshift=-0.3em,yshift=0.1em] node {} (-2,0.25);
\path[-] (-1,0) edge[below,midway,thick,orange,xshift=-0.3em,yshift=-0.1em] node {} (-2,-0.25);
\path[-] (1,0) edge[above,midway,thick,blue,xshift=0.3em,yshift=0.1em] node {} (2,0.25);
\path[-] (1,0) edge[below,midway,thick,orange,xshift=0.3em,yshift=-0.1em] node {} (2,-0.25);
\path[-] (-1,0) edge[below,midway,thick,orange] node {$e$} (1,0);

\draw [black,line width=1.2pt, dashed] (0,-0.2) ellipse (2.5 and 1.1);
\node[black] at (-2.7,0.6) {$\delta(S_{e})$};

\end{tikzpicture}

\caption{Cases (i) is on the left and case (ii) is on the right. Orange edges are good. On the left, $y_e$ is reduced by $\tau_2$ with probability $p$ and pays half the burden on both sides; on the right $y_e$ is reduced by $\tau_3$ with probability $p$ and pays half the burden on one side and one third of the burden on the other.}
\label{fig:casebc}
\end{figure}


\item [Case (ii):  $|G_C|\geq 3$ or $|G_{C'}|\geq 3$ (or both).]
Again, in the worst case, there is an edge, say $f\in H_C$ and an edge $g\in H_{C'}$.
By the same reasoning as above, $\E{y_e}$ is largest if $f$ and $g$ are bottom
edges. In this case, the same calculation as above gives,
$$ \E{r_f(T)}\cdot\frac{1}{|G_C|}\cdot \P{C\text{ is odd} | f \text{ reduced}} \le  
\beta p\cdot \frac{1}{|G_C|}\cdot\frac{1}{2},$$ 
and similarly for $g$,
so
$$\E{y_e} \le  \frac{1}{4} - \tau_3 p + \frac{\beta p}{2}\left(\frac{1}{2} + \frac{1}{3}\right).$$
\item [Case (iii): $|H_C|+|H_{C'}|\leq 1$.]
In the worst case, $|H_C|=1$ and $|H_{C'}|=0$ and $f\in H_C$ is a bottom edge. 
Note that in this case we may have $G_{C'}=\{e\}$. But the advantage is that we never increase $y_e$ to fix $C'$ since $H_{C'}=\emptyset$.
Then,
$$\E{y_e} = \frac{1}{4} - \tau_3 p + \beta p\cdot \frac1{|G_C|}\cdot \frac12 \leq 1/4 -\tau_3 p +\frac{\beta p}{4}.$$
\end{description}
Note that if case (iii) does not happen, then by \cref{lem:top-good} we have $|G_C|,|G_{C'}|\geq 2$. So, either  (i) or (ii) will happen.
\end{proof}

\begin{lemma}\label{lem:bottomedgeojoin}
If $3\tau_3\leq 2\tau_2$, then for any (good) bottom edge $e$, 
$$\E{y_e}\leq 1/4 - p\min\{\beta/4, 3\beta/4-\tau_2, \beta-4\tau_2/3\}.$$
\end{lemma}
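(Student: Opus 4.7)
The plan is to mirror the argument of \cref{lem:topedgejoin}, with a more intricate case analysis reflecting the structure around a bottom edge. Fix a good bottom edge $e$ with companion $f$ in the cycle of $\delta(S_e)$, so that the two last cuts of $e$ are $C_1 = \{e, f, a, b\}$ and $C_2 = \{e, f, c, d\}$, where $\{a,b\}$ and $\{c,d\}$ are the two pairs of cycle partners on $\delta(S_e)$. By \cref{fact:criticalsat} together with the construction that ties $B_e = B_f$ for companions, $f$ is also good with the same reduction event as $e$; hence $|G_{C_1}| = |G_{C_2}| = 2$, and the expected reduction on $e$ in step 2 of the $O$-join construction equals $\beta p$.

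For the increase in step 3, I would use $\max(u,v)\le u+v$ for $u,v\ge 0$ to write
\[
\E{\text{increase on }y_e} \;\le\; \E{\tfrac{\Delta(C_1)}{2}\I{C_1\text{ odd}}} + \E{\tfrac{\Delta(C_2)}{2}\I{C_2\text{ odd}}}.
\]
The key observation is that whenever $e$ (equivalently $f$) is reduced, both $C_1$ and $C_2$ are even, since ``even at last'' for a bottom edge forces all cycle cuts through $e$ on the $S_e$-cycle to be even. Hence, conditional on $C_i$ odd, the only contributors to $\Delta(C_i)$ are the two cycle-partner edges $\{a,b\}$ (for $i=1$) or $\{c,d\}$ (for $i=2$); in particular $a,b,c,d$ lie above $S_e$, so $C_1$ is not one of their last cuts and they are not in $G_{C_1}$.

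Next I would case-split on the types of $a, b$ and of $c, d$. Each of these four edges is either a top edge (reduced by $\tau_2$ or $\tau_3$) or a bottom edge going higher (reduced by $\beta$). When $a,b$ are both bottom edges going higher, they appear as parallel edges once $S_e$ is contracted, so they are companions at a common higher cycle cut; this ties $B_a = B_b$ and makes their reductions occur jointly, contributing $2\beta$ to $\Delta(C_1)$ on that event. When $a,b$ are top edges they are reduced independently, each contributing at most $\tau_2$. In every sub-case, I would invoke independence of the auxiliary Bernoullis from the tree $T$, together with the correlation bounds from \cref{fact:updown} and \cref{lem:420}, to bound $\P{C_1\text{ odd}\mid a\text{ reduced}}$ by at most $5/8$ (and by $1/2$ whenever $a$ is a bottom edge reduced jointly with its companion), exactly as in the proof of \cref{lem:topedgejoin}. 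Summing the resulting bounds over $\{a,b\}$ and $\{c,d\}$ and subtracting from $\beta p$, the three structural cases produce the three terms in the minimum: $\beta/4$ in the ``all bottom companions'' regime (where the raw payment is largest but comes from pairwise joint reductions); $3\beta/4 - \tau_2$ when some cycle-partner pair consists of top edges each contributing $\tau_2$; and $\beta - 4\tau_2/3$ in a mixed sub-case where the hypothesis $3\tau_3 \le 2\tau_2$ is used to bound the $\tau_3$ contributions uniformly against $\tau_2$-scale terms.

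The main obstacle is carefully handling the correlations between the reductions on $a,b$ (and $c,d$) and the parity of $C_1$ (resp.\ $C_2$), particularly in the bottom-companion sub-case where $B_a = B_b$ and the two reductions occur together with probability exactly $p$ rather than $p^2$. Once this is handled via independence of $B_a$ from $T$ and the negative-association properties of the strongly Rayleigh distribution, the remaining work is routine algebra; taking the worst case over sub-cases yields the minimum of the three expressions as stated.
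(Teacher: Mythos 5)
Your setup of the two last cuts $C_1=\{e,f,a,b\}$, $C_2=\{e,f,c,d\}$ with $|G_{C_1}|=|G_{C_2}|=2$ matches the paper, and the observation that $e$ is not reduced when $C_i$ is odd is correct, but the case analysis and the bound on the step-3 increase both contain errors that would prevent the argument from closing.

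First, your classification of $a,b,c,d$ as ``either a top edge or a bottom edge going higher'' misses a third possibility, and the sub-case you add to handle joint reductions is structurally impossible. By \cref{fact:cyclepartners} (as the paper notes at the end of its proof), at most one of the cycle partners $a,b$ can go higher than $S_e$; if both did, they would be cycle partners on $\delta(S_e)$ and on $\delta(S')$ simultaneously. And the contraction you invoke goes the wrong way: $a,b$ are parallel in $G/(V\smallsetminus S_e)$ because they share an endpoint \emph{inside} $S_e$, not after $S_e$ itself is contracted, so they do not become companions at a higher cycle cut. The genuinely missing case is the one where the four partner edges are bottom edges that do \emph{not} go higher: there $a,g$ and $b,h$ (one from each pair) become companion pairs inside the parent cycle cut $S'$, which is the paper's Case (iii). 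The correlated, jointly-reduced pairs that matter are therefore $\{b,h\}$ and $\{a,g\}$ (one edge on each of $C$ and $C'$), not $\{a,b\}$ or $\{c,d\}$.

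Second, the relaxation $\max(u,v)\le u+v$ applied cut-by-cut is too lossy and would make the lemma fail. In the case where all of $a,b,g,h$ are bottom edges, each contributes an expected charge of $\beta p\cdot\frac{1}{|G_C|}\cdot\frac{1}{2}=\beta p/4$, and summing the two cuts' contributions gives an expected increase of $\beta p$, exactly cancelling the $\beta p$ expected reduction — no gain. The paper instead splits the max \emph{across the two cycle-partner pairs}, i.e.\ uses $\max\{r_a+r_b,\,r_g+r_h\}\le \max\{r_a,r_g\}+\max\{r_b,r_h\}$, and then observes that because $B_b=B_h$ and $b,h$ are reduced only when even at last (which forces $C$ and $C'$ to have the same parity), the term $\max\{r_b\I{C\text{ odd}},\,r_h\I{C'\text{ odd}}\}$ is paid \emph{once}, not twice. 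That is what drops the total to $3\beta p/4$ and yields the claimed $\beta/4$ margin; the same device is reused in Case (iii) on the companion pair $\{a,g\}$. Without this partner-pair split of the max, the three expressions in the statement's minimum do not come out.
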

\begin{proof}
Say $f$ is the companion of $e$.
Let $S=S_e$ and $S'$ be the parent of $S$ in the hierarchy of critical cuts. 
Say the last cuts of $e$ (and $f$) are $C=\{e,f,a,b\}$ and $C'=\{e,f,g,h\}$.
In other words $a,b$ are partners and $g,h$ are partners.
Note that $|G_C|=|G_{C'}|=2$ because all edges $\{a,b,g,h\}$ go to the higher ciritical cut $S_e$. 

\begin{description}
\item [{Case (i)}:   $a$ and $g$ go higher than $S$.]
We have $a,g\in \delta(S')$. So, by \cref{fact:2cycle},  $S'$ is also a cycle cut.
This means that $b$ and $h$ are companions and $a$ and $g$ are cycle partner pairs on $\delta(S')$. (See \cref{fig:casebc}).
Edge $e$ has to increase to fix the cuts $C,C'$ whenever $a,b,g$ or $h$ are decreased and the corresponding cut is odd.  The expected increase in $y_e$ due to reductions on $a,b,g,h$ divides into two types.

\begin{figure}[!htbp]
\centering
\begin{tikzpicture}
\foreach \a/\b in {-3/1,0/0,3/1}{\draw[bleudefrance,fill=blizzardblue,thick] (\a,\b) circle (12pt);}
\node at (0,-0.7) {$S_{u,e}$};
\node at (3.5,0.4) {$S_{v,e}$};
\node [draw=none] at (-2.5,-1) () {$S$};
\node [draw=none] at (-4.5,-1.6) (){$S'$};
\draw[gray,xshift=-0.05cm,yshift=-0.15cm] (-3,1) to (-0.15,0.05); 
\draw[gray,xshift=0.05cm,yshift=--0.15cm] (-3,1) to (-0.15,0.05);
\draw [dashed,red, line width=1.1pt] (0.5,-0.5) arc (-20:170:2.2);
\node [draw=none,red] at (-0.6,1.9) () {$C$};
\draw [dashed,red,line width=1.1pt] (3.7,1) arc (0:250:0.7);
\node [draw=none,red] at (3.9,1.2) () {$C'$};

\draw[gray,xshift=-0.05cm,yshift=0.15cm] (0.15,0.05) to node [above] {$f$} (3,1); 
\path[-] (0.15,0.05) edge[below,midway,thick,orange,xshift=0.05cm,yshift=-0.15cm] node {$e=(u,v)$} (3,1);

\path[-] (-3,1) edge[below,midway,thick,purple,xshift=-0.2cm,yshift=-0.05cm] node {$a$} (-5,4);
\path[-] (-3,1) edge[right,midway,thick,purple,xshift=-0.05cm,yshift=0.15cm] node {$b$} (-3,2.75);

\path[-] (3,1) edge[below,midway,thick,purple,xshift=0.2cm,yshift=0.05cm] node {$g$} (5,4);
\path[-] (3,1) edge[right,midway,thick,purple,xshift=0.05cm,yshift=0.15cm] node {$h$} (3,2.75);

\draw [black,line width=1.2pt, dashed] (0,1) ellipse (4.5 and 2);
\draw [black,line width=1.2pt, dashed] (0,1.2) ellipse (5.5 and 3.5);

\end{tikzpicture}
\caption{Case (i) of \cref{lem:bottomedgeojoin}.}
\label{fig:casebc}
\end{figure}

We start by $b,h$: By \cref{fact:efinout} and that $B_b=B_h$, we know that $b$ and $h$ are always reduced at the same time. Furthermore, conditioned on $b$ (and $h$) being reduced, we have
$$ \text{parity of } |T\cap C| = \text{parity of }|T\cap C'|. $$
This is simply because $b$ (and $h$) are reduced only when they are  even at last which implies $|T\cap \{a,g\}|=1$.
So, we can fix the reduction of $b,h$ simultaneously when we increase $e$ (or $f$). In other words, it is enough to only take into account the expected increase of $y_e$ due to $b$, i.e.,
$$ \E{r_b(T)}\cdot \frac1{|G_C|} \cdot \P{|C\cap T| \text{ is odd} | b \text{ reduced}} =  
\beta p\cdot \frac{1}{2}\cdot\frac{1}{2} = \frac{\beta p}{4}.$$ 

Now, we calculate the expected increase due to $a,g$: We compute the charge due to $a$ and the same will hold for $g$. Again, by \cref{fact:efinout}, $b$ and $h$ are chosen independently of $a,g$, i.e., $\P{|T\cap C|\text{ odd} | a \text{ reduced}} \leq 1/2$. Therefore, the expected increase due to $a$ is
$$ \E{r_a(T)} \cdot \frac1{|G_C|} \cdot\P{|C\cap T|\text{ odd} | a\text{ reduced}}\leq \beta p\cdot \frac12\cdot \frac12=\frac{\beta p}{4}$$
using $\tau_3,\tau_2\leq \beta$.
Therefore, altogether,
$$\E{y_e} \le  \frac{1}{4} - \beta p + 3\cdot \frac{\beta p}{4}.$$

\item [Case (ii): Only one edge, say $a$, goes higher than $S$.]

In this case, by the same reasoning as above, we have
$$ \E{r_a(T)}\cdot\frac{1}{|G_C|}\cdot \P{|C\cap T| \text{ is odd} | a \text{ reduced}} \le  
\beta p\cdot \frac{1}{2}\cdot\frac{1}{2},$$ 
since $b$ is independent of $a$ and it can be chosen to correct the parity.

For the remaining three edges $\{b,g,h\}$, by \cref{lem:top-good} either two or three of $b,g$ and $h$ are good.
If two are good, then each has an expected reduction of at most $\tau_2 p$ or
three are good and each has an expected reduction of at most $\tau_3 p$.

Therefore,
altogether,
$$\E{y_e} \le  \frac{1}{4} - \beta p +  \frac{\beta p}{4} + \frac{p}{2}\cdot\max\{2\tau_2, 3\tau_3\} \leq \frac14 -\frac{3\beta p}{4} + p\tau_2,$$
by the assumption of the lemma.

\item [Case (iii): $a,g$ are companions and $b,h$ are companions (on the next critical cut).]
This case follows the same analysis as case (i) but gains because in this case $a,g$ are also reduced simultaneously. 

\item [Case (iv): No edge goes higher than $S$ and all $\{a,b,g,h\}$ are top edges.]
 Some number of these edges are good; if more than two are good we pay $4\tau_3$ (at most) with probability $p$ and otherwise we pay $2\tau_2$ (at most) with probability $p$ . Then:
$$\E{y_e} \le \frac{1}{4} - \beta p + \frac{p}{|G_C|}\max\{2\tau_2,4\tau_3\} \leq \frac14 - \beta p + \frac{4\tau_2 p}{3}.$$
\end{description}
To finish the proof we just need to argue that we exhausted all cases. By \cref{fact:2cutinter}, among $\{a,b,g,h\}$ at most two go higher. By  \cref{fact:cyclepartners}, from each pair of cycle partners, i.e., $\{a,b\}$ or $\{g,h\}$, at most one goes higher. Therefore, if case (i) does not happen, we have at most one that goes  higher. If (i), (ii) do not happen, then no edge goes higher. So, by \cref{fact:2gohigherbottom} either all four edges in $\{a,b,g,h\}$ are bottom edges, i.e., case (iii), or none are bottom edges, i.e., case (iv).
\end{proof}
To finish the proof of \cref{prop:goodojoin}, let $\beta=1/12$, $\tau_2=7/120$ and $\tau_3=7/180$ chosen to satisfy $\tau_3\leq\tau_2\leq\beta$, $\beta\geq 5\tau_2/4$ and $3\tau_3\leq2\tau_2$. Plugging in these numbers into \cref{lem:topedgejoin} and \cref{lem:bottomedgeojoin} we obtain that $\E{y_e}\leq 1/4-p/240$ for any good edge $e$ as desired.








\bibliographystyle{alpha}
\bibliography{tsp}

\end{document}